\newtheorem{theorem}{Theorem}[section]
\newtheorem{proposition}{Proposition}[section]
\newtheorem{lemma}{Lemma}[section]
\newcommand{\beqa}{\begin{eqnarray}}
\newcommand{\eeqa}{\end{eqnarray}}
\newcommand{\rf}[1]{(\ref{#1})}
\numberwithin{equation}{section}
\begin{document}

\begin{flushright}
YITP-SB-13-22
\end{flushright}

\bigskip \vspace{15pt}

\begin{center}
\textbf{\Large SOV approach for integrable quantum models associated to
general representations on spin-1/2 chains of the 8-vertex reflection algebra} \vspace{45pt}
\vspace{50pt}

\begin{large}
{\bf S. Faldella}\footnote[1] {IMB, UMR 5584 du CNRS, Universit\'e
de Bourgogne, France, Simone.Faldella@u-bourgogne.fr},~~
{\bf G. Niccoli}\footnote[2]
{YITP, Stony Brook University, New York 11794-3840, USA,
niccoli@max2.physics.sunysb.edu}

\end{large}

\vspace{50pt}
\end{center}

\begin{itemize}
\item[] {\small \textbf{Abstract}\thinspace \thinspace \thinspace The analysis
of the transfer matrices associated to the most general representations of
the 8-vertex reflection algebra on spin-1/2 chains is here implemented by
introducing a quantum separation of variables (SOV) method which generalizes
to these integrable quantum models the method first introduced by Sklyanin. More in detail, for the representations reproducing in their
homogeneous limits the open XYZ spin-1/2 quantum chains with the most
general integrable boundary conditions, we explicitly construct
representations of the 8-vertex reflection algebras for which the transfer
matrix spectral problem is separated. Then, in these SOV representations we
get the complete characterization of the transfer matrix spectrum
(eigenvalues and eigenstates) and its non-degeneracy. Moreover, we present
the first fundamental step toward the characterization of the dynamics of
these models by deriving determinant formulae for the matrix elements of the
identity on separated states, which apply in particular to transfer matrix
eigenstates. The comparison of our analysis for the 8-vertex reflection
algebra with that of \cite{8vREPNic12b,8vREPFalKN13} for the 6-vertex
one leads to the interesting remark that a profound similarity in both the
characterization of the spectral problems and of the scalar products exists
for these two different realizations of the reflection algebra once they are
described by SOV method. As it will be shown in a future publication, this
remarkable similarity will be at the basis of the simultaneous
determination of form factors of local operators of integrable quantum
models associated to general reflection algebra representations of both
8-vertex and 6-vertex type.}

\parbox{12cm}{\small }\newpage

\tableofcontents
\newpage
\end{itemize}

\section{Introduction}

In the framework of the quantum inverse scattering method (QISM) \cite{8vREPSF78}-\cite{8vREPIK82}, we analyze the class of 1D lattice integrable
quantum models associated to monodromy matrices which are the most general
solutions of the reflection algebra \cite{8vREPCh84}-\cite{8vREPGhZ94}
w.r.t. the elliptic 8-vertex R-matrix. It is worth commenting that these
models have attracted a large interest which goes beyond the community of
quantum integrability. This is in particular true for representations
associated to non-diagonal integrable boundary matrices which have proven to be hard to describe by standard Bethe ansatz analysis \cite{8vREPBet31}-\cite{8vREPLM66} and which allow to
describe interesting out of equilibrium physical systems and for which
already in the 6-vertex case a very large literature has been developed to
address with different methods the associated transfer matrix spectral
problems\footnote{See the papers \cite{8vREPNic12b,8vREPFalKN13} for a discussion of this point and for more details on
the role of the cited references.} \cite{8vREPCao03}-\cite{8vREPFGSW11}. The homogeneous limit of the 8-vertex
reflection algebra representations that we analyze in this paper leads to
the description of open XYZ spin-1/2 quantum chains with the most general integrable
boundary conditions. For these integrable quantum models, we introduce a
quantum version of the separation of variables (SOV) in the spirit of the
works \cite{8vREPSk1} pioneered by Sklyanin. In our SOV
approach we both obtain the complete characterization of the transfer matrix
spectrum (eigenvalues and eigenstates) and we derive simple determinant
formulae for the scalar products of transfer matrix eigenstates. In
particular, starting from the original spin-1/2 representations of the
8-vertex reflection algebra we explicitly construct a new (SOV) basis of the
space of the representations for which the transfer matrix spectral problem
is separated and completely characterized in terms of the set of solutions
to a inhomogeneous system of $\mathsf{N}$ quadratic equations in $\mathsf{N}$
unknowns, where $\mathsf{N}$\ is the size of the chain. It is also worth
remarking that in our SOV approach, it is simple to prove the complete
integrability\footnote{%
This definition can be seen as the natural quantum analogous of the
classical Liouville complete integrability and it was shown in the SOV
framework for a series of other integrable quantum models \cite{8vREPNT-10}-\cite{8vREPN12-3}.} of the associated quantum models, i.e. the fact that the
transfer matrix forms a complete set of commuting conserved charges on the
space of the representation. One fundamental finding of the SOV analysis
here developed is that the pseudo-measure entering in the SOV spectral
decomposition of the identity is simply expressed as the inverse of
determinants of $\mathsf{N}\times \mathsf{N}$ Vandermonde's matrices. It is
then central to observe that all the SOV representations constructed so far
in \cite{8vREPGMN12-SG}-\cite{8vREPGMN12-T2} associated to 6-vertex representations of the Yang-Baxter and
reflection algebra share just the same structure of inverse of Vandermonde's
determinant for these pseudo-measures. This observation is even more
important once we point out that the SOV spectral decompositions of the
identity have a different structure of the pseudo-measures in the case of
the transfer matrices associated to both the 8-vertex representation of the
Yang-Baxter algebra \cite{8vREPBa72-1} and the elliptic representations of the
6-vertex dynamical Yang-Baxter algebra \cite{8vREPFelder94}. Indeed, in \cite{8vREPN12-3} and \cite{8vREP?NT12} a
different determinant form has been derived for these pseudo-measures. It is
then the combined use of the SOV method and of the reflection algebra which
allows an amazing simultaneous description of the spectral and dynamical
problems for the 8-vertex and 6-vertex transfer matrices which will be used
in future publications to solve simultaneously these dynamical problems.

Let us resume here some results and difficulties appearing in the
preexisting literature on the analysis of these 8-vertex integrable quantum
models; this also to clarify the reasons of interest and novelty in our SOV
analysis. In \cite{8vREPBa72-1} Baxter has
defined the intertwining vectors or gauge transformations, in order to be
able to use Bethe ansatz techniques to analyze the spectral problem\footnote{%
Under periodic boundary conditions the spectral problems of these transfer
matrices have been analyzed also by the Baxter's Q-operator techniques, see 
\cite{8vREPBa72-1,8vREPBaxBook} and also
the series of papers \cite{8vREPFMcCoy03}-\cite{8vREPFMcCoy08}.} of the transfer matrix associated to
8-vertex Yang-Baxter algebra representations. The use of gauge
transformations allows in particular to define pseudo-reference states
opening the possibility to analyze these 8-vertex spectral problems by using
the algebraic Bethe ansatz (ABA) \cite{8vREPSF78}-\cite{8vREPFST80} as
derived in \cite{8vREPFT79}. The Baxter's gauge transformations have been
used also in \cite{8vREPFHSY96} to analyze the spectral problem associated to the 8-vertex reflection algebra in the ABA framework, see also \cite{8vREPCao03} for the 6-vertex case.
Anyhow, it is important to remark that in the framework of Bethe ansatz
persists the general problem related to the proof of the completeness of the
spectrum description\footnote{The numerical analysis developed in \cite{8vREPBa02} provides some evidence
of the completeness of the spectrum description for the periodic 8-vertex
transfer matrix.} and constrains are required to implement the spectral
analysis of these models. In the case of 8-vertex transfer matrices
associated to periodic boundary conditions the following two constrains are
introduced: the number of sites of the quantum chains has to be even and the
values allowed of the coupling constant $\eta $ are restricted to the 
\textit{elliptic roots of unit}. These two constrains do not appear instead
in the description by ABA of the 8-vertex transfer matrices associated to
open boundary conditions, which already in this ABA framework reflects a
simplification occurring when we consider 8-vertex reflection algebras.
However, to make ABA working it is required to introduce constrains between
the boundary parameters, as done in \cite{8vREPFHSY96}, i.e. the 8-vertex transfer matrix
spectral problems associated to the most general representations of the
reflection algebra cannot be analyzed by using ABA. As mentioned above all
these problems are overcome in our SOV framework and it is possible to
describe the complete 8-vertex spectrum for all closed \cite{8vREPN12-3,8vREP?NT12} and open
integrable boundary conditions. A part the constrains for the spectral
analysis one central difficulty in the ABA framework is the solution of the
dynamical problem. Indeed, in this 8-vertex framework a scalar product analogue\footnote{When some special type of double constrains on the boundary parameters are satisfied some steps in this direction have been done for both 6-vertex and 8-vertex case in \cite{8vREPFK10,8vREPF11} and some related analysis appear also in \cite{8vREPYYCFHHSZ11,8vREPYFSSYZ11}.} to the 6-vertex Slavnov's formula \cite{8vREPSlav89} is missing for both closed and open
boundary conditions. In fact, this is the first
fundamental missing step toward the computation of correlation functions
according to the Lyon group method developed in \cite{8vREPKitMT99}-\cite{8vREPCM07} for the 6-vertex transfer
matrix associated to the Yang-Baxter algebra representations and generalized
to some classes of 6-vertex reflection algebra in \cite{8vREPKKMNST07}-\cite{8vREPKKMNST08}. It is then clear the need to overcome these problems in
order to compute matrix elements of local operators on 8-vertex transfer
matrix eigenstates and so the importance of the results derived in the SOV
approach both here for the reflection algebra case and in \cite{8vREPN12-3,8vREP?NT12} for the
Yang-Baxter algebra case.

\section{Reflection algebra}

In the framework of the quantum inverse scattering method, a class of
quantum integrable models characterized by monodromy matrices solutions of
the 8-vertex elliptic reflection equations is here introduced.

\subsection{Representations of 8-vertex reflection algebra on spin-1/2 chains%
}

Let us start introducing the following $2\times 2$ matrix \cite{8vREPInaK94}:
\begin{eqnarray}
&&K(\lambda ;\zeta ,\kappa ,\tau )\left. \equiv \right. \frac{h(\lambda
;\zeta )}{\text{sn}\tilde{\zeta}}\left( 
\begin{array}{cc}
\text{sn}(\tilde{\lambda}+\tilde{\zeta}) & \kappa e^{\tau }\text{sn}2\tilde{%
\lambda}\frac{1-ke^{-2\tau }\text{sn}^{2}\tilde{\lambda}}{1-k^{2}\text{sn}%
^{2}\tilde{\zeta}\text{sn}^{2}\tilde{\lambda}} \\ 
\kappa e^{-\tau }\text{sn}2\tilde{\lambda}\frac{1-ke^{2\tau }\text{sn}^{2}%
\tilde{\lambda}}{1-k^{2}\text{sn}^{2}\tilde{\lambda}\text{sn}^{2}\tilde{\zeta%
}} & \text{sn}(\tilde{\zeta}-\tilde{\lambda})%
\end{array}%
\right)  \label{8vREPK} \\
&&\text{ \ \ \ \ \ \ \ \ \ \ \ \ }\left. =\right. \left( 
\begin{array}{cc}
\frac{\theta _{4}(\zeta |2\omega )\theta _{4}(-\lambda +\zeta |2\omega
)\theta _{1}(\lambda +\zeta |2\omega )}{\theta _{1}(\zeta |2\omega )} & 
\frac{\kappa e^{\tau }\theta _{1}(2\lambda |2\omega )(\theta
_{4}^{2}(\lambda |2\omega )-e^{-2\tau }\theta _{1}^{2}(\lambda |2\omega ))}{%
\theta _{1}(\zeta |2\omega )\theta _{4}^{-3}(\zeta |2\omega )\theta
_{4}^{2}(0|2\omega )\theta _{4}(2\lambda |2\omega )} \\ 
\frac{\kappa e^{-\tau }\theta _{1}(2\lambda |2\omega )(\theta
_{4}^{2}(\lambda |2\omega )-e^{2\tau }\theta _{1}^{2}(\lambda |2\omega ))}{%
\theta _{1}(\zeta |2\omega )\theta _{4}^{-3}(\zeta |2\omega )\theta
_{4}^{2}(0|2\omega )\theta _{4}(2\lambda |2\omega )} & \frac{\theta
_{4}(\zeta |2\omega )\theta _{1}(-\lambda +\zeta |2\omega )\theta
_{4}(\lambda +\zeta |2\omega )}{\theta _{1}(\zeta |2\omega )}%
\end{array}%
\right) ,
\end{eqnarray}%
where\footnote{%
The theta functions here used are those defined in \cite{8vREPTables of integrals} with the following change of notation in their arguments $%
(\lambda |x)$ instead of $(u|\tau )$.}:%
\begin{equation}
h(\lambda ;\zeta )\equiv \theta _{4}(\lambda +\zeta |2\omega )\theta
_{4}(\lambda -\zeta |2\omega ),\text{ \ \ }\tilde{\lambda}\equiv 2\text{K}%
_{k}\lambda ,\text{ \ }\tilde{\eta}\equiv 2\text{K}_{k}\eta ,\text{ \ }%
\tilde{\zeta}\equiv 2\text{K}_{k}\zeta
\end{equation}%
and:%
\begin{eqnarray}
\text{sn}\tilde{\lambda} &\equiv &\frac{1}{\sqrt{k}}\frac{\theta
_{1}(\lambda |2\omega )}{\theta _{4}(\lambda |2\omega )},\text{ \ cn}\tilde{%
\lambda}\equiv \sqrt{\frac{k^{\prime }}{k}}\frac{\theta _{2}(\lambda
|2\omega )}{\theta _{4}(\lambda |2\omega )},\text{ \ dn}\tilde{\lambda}%
\equiv \sqrt{k^{\prime }}\frac{\theta _{3}(\lambda |2\omega )}{\theta
_{4}(\lambda |2\omega )}, \\
k &\equiv &\frac{\theta _{2}^{2}(0|2\omega )}{\theta _{3}^{2}(0|2\omega )},%
\text{ \ }k^{\prime }\equiv \frac{\theta _{4}^{2}(0|2\omega )}{\theta
_{3}^{2}(0|2\omega )},\text{\ \ }k^{2}+k^{\prime 2}=1,\text{\ \ K}_{k}\equiv 
\frac{\theta _{3}^{2}(0|2\omega )}{2}.\text{\ }
\end{eqnarray}%
Here $\zeta ,$ $\kappa $ and $\tau $ are arbitrary complex parameters and $%
K(\lambda ;\zeta ,\kappa ,\tau )$\ is the most general scalar solution\footnote{This analysis both in the 6-vertex and in the 8-vertex case has been first developed in \cite{8vREPdeVG93} where however only the most general solution for the 6-vertex case was found while the most general solution for the 8-vertex case was found in \cite{8vREPInaK94}.} of the following 8-vertex reflection equation:%
\begin{equation}
R_{12}^{\mathsf{(8V)}}(\lambda -\mu )K_{1}(\lambda )R_{21}^{\mathsf{(8V)}%
}(\lambda +\mu )K_{2}(\mu )=K_{2}(\mu )R_{12}^{\mathsf{(8V)}}(\lambda +\mu
)K_{1}(\lambda )R_{21}^{\mathsf{(8V)}}(\lambda -\mu ),  \label{8vREPbYB}
\end{equation}%
where: 
\begin{equation}
R_{0a}^{\mathsf{(8V)}}(\lambda )=\left( 
\begin{array}{cccc}
\text{a}(\lambda ) & 0 & 0 & \text{d}(\lambda ) \\ 
0 & \text{b}(\lambda ) & \text{c}(\lambda ) & 0 \\ 
0 & \text{c}(\lambda ) & \text{b}(\lambda ) & 0 \\ 
\text{d}(\lambda ) & 0 & 0 & \text{a}(\lambda )%
\end{array}%
\right) \in \text{End}(\text{R}_{1}\otimes \text{R}_{2}),
\end{equation}%
is the elliptic solution of the 8-vertex Yang-Baxter equation:%
\begin{equation}
R_{12}^{\mathsf{(8V)}}(\lambda _{12})R_{1a}^{\mathsf{(8V)}}(\lambda
_{1})R_{2a}^{\mathsf{(8V)}}(\lambda _{2})=R_{2a}^{\mathsf{(8V)}}(\lambda
_{2})R_{1a}^{\mathsf{(8V)}}(\lambda _{1})R_{12}^{\mathsf{(8V)}}(\lambda
_{12}),
\end{equation}%
R$_{x}\simeq \mathbb{C}^{2}$ is a 2-dimensional linear space and: 
\begin{align}
\text{a}(\lambda )& \equiv \frac{2\theta _{4}(\eta |2\omega )\theta
_{1}(\lambda +\eta |2\omega )\theta _{4}(\lambda |2\omega )}{\theta
_{2}(0|\omega )\theta _{4}(0|2\omega )},\quad \text{b}(\lambda )\equiv \frac{%
2\theta _{4}(\eta |2\omega )\theta _{1}(\lambda |2\omega )\theta
_{4}(\lambda +\eta |2\omega )}{\theta _{2}(0|\omega )\theta _{4}(0|2\omega )}%
, \\
\text{c}(\lambda )& \equiv \frac{2\theta _{1}(\eta |2\omega )\theta
_{4}(\lambda |2\omega )\theta _{4}(\lambda +\eta |2\omega )}{\theta
_{2}(0|\omega )\theta _{4}(0|2\omega )},\quad \text{d}(\lambda )\equiv \frac{%
2\theta _{1}(\eta |2\omega )\theta _{1}(\lambda +\eta |2\omega )\theta
_{1}(\lambda |2\omega )}{\theta _{2}(0|\omega )\theta _{4}(0|2\omega )}.
\end{align}%
Once we define:%
\begin{equation}
f(\lambda )\equiv \frac{2\sqrt{k}\theta _{4}(\eta |2\omega )\theta
_{4}(\lambda |2\omega )\theta _{4}(\lambda +\eta |2\omega )}{\theta
_{2}(0|\omega )\theta _{4}(0|2\omega )},
\end{equation}%
the coefficients of $R_{0a}^{\mathsf{(8V)}}(\lambda )$ also read:%
\begin{align}
\text{a}(\lambda )& =f(\lambda )\text{\={a}}(\tilde{\lambda}),\quad \text{b}%
(\lambda )=f(\lambda )\text{\={b}}(\tilde{\lambda}),\quad \text{c}(\lambda
)=f(\lambda )\text{\={c}}(\tilde{\lambda}),\quad \text{d}(\lambda
)=f(\lambda )\text{\={d}}(\tilde{\lambda}),\text{ } \\
\text{\={a}}(\mu )& \equiv \text{sn}(\mu +\tilde{\eta}),\quad \text{\={b}}%
(\mu )\equiv \text{sn}\mu ,\quad \text{\={c}}(\mu )\equiv \text{sn}\tilde{%
\eta},\quad \text{\={d}}(\mu )\equiv k\,\text{sn}(\mu +\tilde{\eta})\,\text{sn}%
\mu\, \text{sn}\tilde{\eta}.\quad
\end{align}%
Two classes of solutions to the reflection equation $\left( \ref{8vREPbYB}\right) 
$ are here constructed following \cite{8vREPSkly88} on the 2$^{\mathsf{N}}$%
-dimensional representation space $\mathcal{R}_{\mathsf{N}}\equiv \otimes
_{n=1}^{\mathsf{N}}$R$_{n}$ of the chain. Here R$_{n}$ is the 2-dimensional
local space associated to the site $n$ of the chain. Let us use introduce
the notations:%
\begin{equation}
K_{\pm }(\lambda )\equiv K(\lambda \pm \eta /2;\zeta _{\pm },\kappa _{\pm
},\tau _{\pm })=\left( 
\begin{array}{cc}
a_{\pm }(\lambda ) & b_{\pm }(\lambda ) \\ 
c_{\pm }(\lambda ) & d_{\pm }(\lambda )%
\end{array}%
\right) ,
\end{equation}%
where $\zeta _{\pm },\kappa _{\pm },\tau _{\pm }$ are arbitrary complex
parameters, the $a_{\pm }(\lambda ),$ $b_{\pm }(\lambda ),$ $c_{\pm
}(\lambda )$ and $d_{\pm }(\lambda )$ are defined by $\left( \ref{8vREPK}\right) $%
. The bulk monodromy matrix:%
\begin{eqnarray}
M_{0}(\lambda ) &=&\left( 
\begin{array}{cc}
A(\lambda ) & B(\lambda ) \\ 
C(\lambda ) & D(\lambda )%
\end{array}%
\right) \in \text{End}(\text{R}_{0}\otimes \mathcal{R}_{\mathsf{N}}),
\label{8vREPT} \\
M_{0}(\lambda ) &=&R_{0\mathsf{N}}^{\mathsf{(8V)}}(\lambda -\xi _{\mathsf{N}%
}-\eta /2)\ldots R_{02}^{\mathsf{(8V)}}(\lambda -\xi _{2}-\eta /2)\,R_{01}^{%
\mathsf{(8V)}}(\lambda -\xi _{1}-\eta /2),
\end{eqnarray}%
is solution of the 8-vertex Yang-Baxter equation:%
\begin{equation}
R_{12}^{\mathsf{(8V)}}(\lambda -\mu )M_{1}(\lambda )M_{2}(\mu )=M_{2}(\mu
)M_{1}(\lambda )R_{12}^{\mathsf{(8V)}}(\lambda -\mu ).  \label{8vREPYB}
\end{equation}%
Then we define the boundary monodromy matrices $\mathcal{U}_{\pm }(\lambda
)\in $ End$($R$_{0}\otimes \mathcal{R}_{\mathsf{N}})$ as it follows:%
\begin{eqnarray}
\mathcal{U}_{-}(\lambda ) &\equiv &\left( 
\begin{array}{cc}
\mathcal{A}_{-}(\lambda ) & \mathcal{B}_{-}(\lambda ) \\ 
\mathcal{C}_{-}(\lambda ) & \mathcal{D}_{-}(\lambda )%
\end{array}%
\right) \equiv M_{0}(\lambda )K_{-}(\lambda )\hat{M}_{0}(\lambda ), \\
\mathcal{U}_{+}^{t_{0}}(\lambda ) &\equiv &\left( 
\begin{array}{cc}
\mathcal{A}_{+}(\lambda ) & \mathcal{C}_{+}(\lambda ) \\ 
\mathcal{B}_{+}(\lambda ) & \mathcal{D}_{+}(\lambda )%
\end{array}%
\right) \equiv \left[ M_{0}(\lambda )\right] ^{t_{0}}\left[ K_{+}(\lambda )%
\right] ^{t_{0}}\left[ \hat{M}_{0}(\lambda )\right] ^{t_{0}},
\end{eqnarray}%
where:%
\begin{equation}
\hat{M}(\lambda )=(-1)^{\mathsf{N}}\,\sigma _{0}^{y}\,\left[ M(-\lambda )%
\right] ^{t_{0}}\,\sigma _{0}^{y}.
\end{equation}%
$\mathcal{U}_{-}(\lambda )$ and $\mathcal{V}_{+}(\lambda )\equiv \mathcal{U}%
_{+}^{t_{0}}(-\lambda )$ are the two solutions of the 8-vertex reflection
equation:%
\begin{equation}
R_{12}^{\mathsf{(8V)}}(\lambda -\mu )\mathcal{U}_{-}^{(1)}(\lambda )R_{21}^{%
\mathsf{(8V)}}(\lambda +\mu -\eta )\mathcal{U}_{-}^{(2)}(\mu )=\mathcal{U}%
_{-}^{(2)}(\mu )R_{12}^{\mathsf{(8V)}}(\lambda +\mu -\eta )\mathcal{U}%
_{-}^{(1)}(\lambda )R_{21}^{\mathsf{(8V)}}(\lambda -\mu ).
\label{8vREPReflection Equation}
\end{equation}%
As proven in \cite{8vREPSkly88}, from these monodromy matrices a commuting
family of transfer matrices $\mathcal{T}(\lambda )\in \,$End$(\mathcal{R}_{%
\mathsf{N}})$ is defined by:%
\begin{equation}
\mathcal{T}(\lambda )\equiv \text{tr}_{0}\{K_{+}(\lambda )\,M(\lambda
)\,K_{-}(\lambda )\hat{M}(\lambda )\}=\text{tr}_{0}\{K_{+}(\lambda )\mathcal{%
U}_{-}(\lambda )\}=\text{tr}_{0}\{K_{-}(\lambda )\mathcal{U}_{+}(\lambda )\}.
\label{8vREPtransfer}
\end{equation}%
We characterize here the eigenvalues and eigenstates of this transfer matrix
and the matrix elements of the identity in the transfer matrix eigenstates.
Note that after the homogeneous limit ($\xi _{n}\rightarrow 0$ for any $n\in
\{1,...,\mathsf{N}\}$) the analysis here develop applies to open spin-1/2
XYZ quantum chains under the most general non-diagonal integrable boundary
conditions: 
\begin{align}
H_{XYZ}& =\sum_{i=1}^{\mathsf{N}-1}((1+k\,\text{sn}^{2}\tilde{\eta})\sigma
_{i}^{x}\sigma _{i+1}^{x}+(1-k\,\text{sn}^{2}\tilde{\eta})\sigma
_{i}^{y}\sigma _{i+1}^{y}+\,\text{cn}\tilde{\eta}\,\text{dn}\tilde{\eta}%
\sigma _{i}^{z}\sigma _{i+1}^{z})  \notag \\
& +\frac{\,\text{sn}\tilde{\eta}}{\,\text{sn}\tilde{\zeta}_{-}}\left[ \sigma
_{1}^{z}\,\text{cn}\tilde{\zeta}_{-}\,\text{dn}\tilde{\zeta}_{-}+2\kappa
_{-}(\sigma _{1}^{x}\cosh \tau _{-}+i\sigma _{1}^{y}\sinh \tau _{-})\right] 
\notag \\
& +\frac{\,\text{sn}\tilde{\eta}}{\,\text{sn}\tilde{\zeta}_{+}}[\sigma _{%
\mathsf{N}}^{z}\,\text{cn}\tilde{\zeta}_{+}\,\text{dn}\tilde{\zeta}%
_{+}+2\kappa _{+}(\sigma _{\mathsf{N}}^{x}\cosh \tau _{+}+i\sigma _{\mathsf{N%
}}^{y}\sinh \tau _{+})].  \label{8vREPH-XXZ-Non-D}
\end{align}%
Indeed, this Hamiltonian was reproduced in \cite{8vREPInaK94} in this homogeneous limit by the
derivative of the transfer matrix $\left( \ref{8vREPtransfer}\right) $.

\subsection{Properties of reflection algebra generators}

The generators of the reflection algebra $\mathcal{A}_{-}(\lambda ),$ $%
\mathcal{B}_{-}(\lambda ),$ $\mathcal{C}_{-}(\lambda )$ and $\mathcal{D}%
_{-}(\lambda )$ satisfy some important properties that we prove here. We
define first the following functions:%
\begin{equation}
p(\lambda )\equiv \frac{2\theta _{4}(2\lambda +\eta |2\omega )\theta
_{1}(2\lambda -\eta |2\omega )}{\theta _{2}(0|\omega )}=\theta (2\lambda
-\eta )\frac{\theta _{4}(2\lambda +\eta |2\omega )}{\theta _{4}(2\lambda
-\eta |2\omega )},  \label{8vREPdef-p-3}
\end{equation}%
and%
\begin{equation}
\widehat{\mathsf{A}}_{-}(\lambda )\equiv g_{-}(\lambda )a(\lambda
)d(-\lambda ),\text{ \ }d(\lambda )\equiv a(\lambda -\eta ),\text{ \ \ }%
a(\lambda )\equiv \prod_{n=1}^{\mathsf{N}}\theta (\lambda -\xi _{n}+\eta /2),
\end{equation}%
where:%
\begin{eqnarray}
g_{\pm }(\lambda ) &\equiv &h(\lambda ;\zeta _{\pm })(\sqrt{\text{sn}(\tilde{%
\lambda}+\tilde{\zeta}_{\pm }-\tilde{\eta}/2)\text{sn}(-\tilde{\lambda}+%
\tilde{\zeta}_{\pm }-\tilde{\eta}/2)}  \notag \\
&&+\kappa _{\pm }\text{sn}(2\tilde{\lambda}-\tilde{\eta})\frac{\sqrt{\left(
1-ke^{2\tau _{\pm }}\text{sn}^{2}(\tilde{\lambda}-\tilde{\eta}/2)\right)
\left( 1-ke^{-2\tau _{\pm }}\text{sn}^{2}(\tilde{\lambda}-\tilde{\eta}%
/2)\right) }}{1-k^{2}\text{sn}^{2}\tilde{\zeta}_{\pm }\text{sn}^{2}(\tilde{%
\lambda}-\tilde{\eta}/2)}),  \label{8vREPg_PM}
\end{eqnarray}%
then the following proposition holds:

\begin{proposition}
The reflection algebra generators are related by the following parity
relation:%
\begin{eqnarray}
\mathcal{A}_{-}(\lambda ) &=&\frac{\text{c}(2\lambda )\mathcal{D}%
_{-}(\lambda )+p(\lambda )\mathcal{D}_{-}(-\lambda )}{\text{b}(2\lambda )},%
\text{ \ \ }\mathcal{D}_{-}(\lambda )=\frac{\text{c}(2\lambda )\mathcal{A}%
_{-}(\lambda )+p(\lambda )\mathcal{A}_{-}(-\lambda )}{\text{b}(2\lambda )},
\label{8vREPSym-A-D-} \\
\mathcal{B}_{-}(\lambda ) &=&\frac{\text{a}(2\lambda )\mathcal{C}%
_{-}(\lambda )+p(\lambda )\mathcal{C}_{-}(-\lambda )}{\text{d}(2\lambda )},%
\text{ \ \ }\mathcal{C}_{-}(\lambda )=\frac{\text{a}(2\lambda )\mathcal{B}%
_{-}(\lambda )+p(\lambda )\mathcal{B}_{-}(-\lambda )}{\text{d}(2\lambda )},
\label{8vREPSym-B-C-}
\end{eqnarray}%
moreover the following identities hold:%
\begin{align}
p(\lambda )& =\frac{-\text{c}(2\lambda )a_{-}(\lambda )+\text{b}(2\lambda
)d_{-}(\lambda )}{a_{-}(-\lambda )}=\frac{-\text{c}(2\lambda )d_{-}(\lambda
)+\text{b}(2\lambda )a_{-}(\lambda )}{d_{-}(-\lambda )}  \label{8vREPdef-p-1} \\
& =\frac{-\text{a}(2\lambda )b_{-}(\lambda )+\text{d}(2\lambda
)c_{-}(\lambda )}{b_{-}(-\lambda )}=\frac{-\text{a}(2\lambda )c_{-}(\lambda
)+\text{d}(2\lambda )b_{-}(\lambda )}{c_{-}(-\lambda )}.  \label{8vREPdef-p-2}
\end{align}%
Moreover, it holds:%
\begin{equation}
\mathcal{U}_{-}^{-1}(\lambda +\eta /2)=\frac{p(\lambda -\eta /2)}{\det_{q}%
\mathcal{U}_{-}(\lambda )}\mathcal{U}_{-}(\eta /2-\lambda ),  \label{8vREPInverse}
\end{equation}%
where in the reflection algebra generated by the elements of $\,\mathcal{U}%
_{-}(\lambda )$ the quantum determinant:%
\begin{align}
\frac{\det_{q}\mathcal{U}_{-}(\lambda )}{p(\lambda -\eta /2)}& \equiv 
\mathcal{A}_{-}(\epsilon \lambda +\eta /2)\mathcal{A}_{-}(\eta /2-\epsilon
\lambda )+\mathcal{B}_{-}(\epsilon \lambda +\eta /2)\mathcal{C}_{-}(\eta
/2-\epsilon \lambda )  \label{8vREPq-detU_1} \\
& =\mathcal{D}_{-}(\epsilon \lambda +\eta /2)\mathcal{D}_{-}(\eta
/2-\epsilon \lambda )+\mathcal{C}_{-}(\epsilon \lambda +\eta /2)\mathcal{B}%
_{-}(\eta /2-\epsilon \lambda ),  \label{8vREPq-detU_2}
\end{align}%
where $\epsilon =\pm 1$, is central:%
\begin{equation}
\lbrack \det_{q}\mathcal{U}_{-}(\lambda ),\mathcal{U}_{-}(\mu )]=0.
\end{equation}%
Moreover, it admits the following explicit expression:%
\begin{equation}
\det_{q}\mathcal{U}_{-}(\lambda )=p(\lambda -\eta /2)\widehat{\mathsf{A}}%
_{-}(\lambda +\eta /2)\widehat{\mathsf{A}}_{-}(-\lambda +\eta /2).
\label{8vREPq-detU_-exp}
\end{equation}
\end{proposition}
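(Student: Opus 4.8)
The proposition collects four facts: the scalar identities \eqref{8vREPdef-p-1}--\eqref{8vREPdef-p-2}, the parity relations \eqref{8vREPSym-A-D-}--\eqref{8vREPSym-B-C-}, the scalar/central/$\epsilon$-independent nature of $\det_{q}\mathcal{U}_{-}$ together with the inverse formula \eqref{8vREPInverse}, and the explicit expression \eqref{8vREPq-detU_-exp}. The guiding idea is that the first and the last reduce to elliptic-function identities while the middle two are the standard reflection-algebra manipulations. For the scalar identities: they relate the weights $\text{a},\text{b},\text{c},\text{d}$ at argument $2\lambda$ to the entries $a_{-},b_{-},c_{-},d_{-}$ of $K_{-}$ at $\pm\lambda$, and are nothing but the crossing--unitarity relations of the scalar boundary matrix. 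I would verify them with the theta-function addition theorems: after clearing the common theta-denominators each side is an entire function of $\lambda$ with identical quasi-periodicity multipliers and identical zeros, so equality follows once it is checked at one point. It is worth recording that the four identities together say $p(\lambda)\,K_{-}(-\lambda)=\Phi_{2\lambda}\!\big(\operatorname{adj}K_{-}(\lambda)\big)$, where $\operatorname{adj}K:=\sigma^{y}K^{t}\sigma^{y}$ and $\Phi_{\mu}$ is the linear map on $2\times2$ matrices acting by $[\Phi_{\mu}(W)]_{ii}=\text{b}(\mu)W_{ii}-\text{c}(\mu)W_{jj}$ and $[\Phi_{\mu}(W)]_{ij}=\text{a}(\mu)W_{ij}-\text{d}(\mu)W_{ji}$ for $i\neq j$ --- i.e., up to a $\mathbb{Z}_{2}\times\mathbb{Z}_{2}$ twist and a half-period shift of the argument, the partial trace of $R^{\mathsf{(8V)}}$ against $W$.

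For the parity relations I would first note the \emph{kinematical} identity $\operatorname{adj}_{0}\mathcal{U}_{-}(\lambda):=\sigma_{0}^{y}[\mathcal{U}_{-}(\lambda)]^{t_{0}}\sigma_{0}^{y}=M_{0}(-\lambda)\big(\operatorname{adj}K_{-}(\lambda)\big)\hat{M}_{0}(-\lambda)$, immediate from $\mathcal{U}_{-}(\lambda)=M_{0}(\lambda)K_{-}(\lambda)\hat{M}_{0}(\lambda)$, from $\hat{M}(\lambda)=(-1)^{\mathsf{N}}\sigma^{y}[M(-\lambda)]^{t}\sigma^{y}$ and from $(\sigma^{y})^{t}=-\sigma^{y}$. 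The four relations \eqref{8vREPSym-A-D-}--\eqref{8vREPSym-B-C-} are then precisely the four matrix entries of the operator identity $p(\lambda)\,\mathcal{U}_{-}(-\lambda)=\Phi_{2\lambda}\!\big(\operatorname{adj}_{0}\mathcal{U}_{-}(\lambda)\big)$, whose scalar shadow is the previous step. One cannot promote the scalar identity to the operator level merely by dressing it with $M_{0}(-\lambda)(\,\cdot\,)\hat{M}_{0}(-\lambda)$, since $\Phi_{2\lambda}$ mixes the two auxiliary-space entries and does not commute with that dressing; here the \emph{dynamical} input of the reflection equation \eqref{8vREPReflection Equation} is needed. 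Concretely one evaluates \eqref{8vREPReflection Equation} at spectral parameters where an inner braiding $R^{\mathsf{(8V)}}$ degenerates --- at argument $0$ it is the permutation, at argument $-\eta$ it is the rank-one antisymmetriser $|s\rangle\langle s|$ with $|s\rangle=|01\rangle-|10\rangle$, exactly as for the six-vertex $R$-matrix --- and reads off the matrix elements in which the weights at $2\lambda$ and the function $p(\lambda)$ recombine into the displayed identity. I expect this component bookkeeping to be the main technical point of this step.

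For $\det_{q}\mathcal{U}_{-}$: using the parity relations together with \eqref{8vREPReflection Equation} at $\lambda'=\lambda+\eta/2$, $\mu'=\eta/2-\lambda$ (so that the inner $R^{\mathsf{(8V)}}$ is again the permutation), one shows that $\mathcal{U}_{-}(\epsilon\lambda+\eta/2)\,\mathcal{U}_{-}(\eta/2-\epsilon\lambda)$ is a scalar multiple of the identity in the auxiliary space: its off-diagonal entries vanish and its $(1,1)$ and $(2,2)$ entries coincide. Denoting that scalar operator $\det_{q}\mathcal{U}_{-}(\lambda)/p(\lambda-\eta/2)$, formula \eqref{8vREPInverse} is immediate, \eqref{8vREPq-detU_1}--\eqref{8vREPq-detU_2} are its two diagonal entries, and its independence of $\epsilon$ follows because that scalar is manifestly even in $\lambda$ and a two-sided inverse in a finite-dimensional space is unique. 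Centrality $[\det_{q}\mathcal{U}_{-}(\lambda),\mathcal{U}_{-}(\mu)]=0$ is then the classical reflection-algebra argument (push $\mathcal{U}_{-}(\mu)$ through the defining product using \eqref{8vREPReflection Equation} and the Yang--Baxter equation, the residual $R$-matrices collapsing because $|s\rangle$ is an eigenvector of $R^{\mathsf{(8V)}}$); alternatively it is inherited from the centrality of the bulk quantum determinant $\det_{q}M$.

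Finally, for the explicit formula: evaluate the scalar $\det_{q}\mathcal{U}_{-}(\lambda)$ on $\mathcal{U}_{-}(\lambda)=M(\lambda)K_{-}(\lambda)\hat{M}(\lambda)$ using the multiplicativity of the boundary quantum determinant under this product (Sklyanin's factorisation): the bulk pieces collapse, through the unitarity relations asserting that $\hat{M}(\mu)M(\eta-\mu)$ and $M(\mu)\hat{M}(\eta-\mu)$ are values of $\det_{q}M$ times the identity, to bulk quantum determinants of $M$ at the two reflected points --- a product of shifted factors $a(\,\cdot\,)$ and $d(\,\cdot\,)$ --- while the scalar piece is the $2\times2$ determinant of $K_{-}$ computed from \eqref{8vREPK}. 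One then recognises that product of $a$'s and $d$'s as the bulk part of $\widehat{\mathsf{A}}_{-}(\lambda+\eta/2)\widehat{\mathsf{A}}_{-}(-\lambda+\eta/2)$, and matching reduces to checking that $g_{-}(\lambda+\eta/2)g_{-}(-\lambda+\eta/2)$ reproduces $\det K_{-}(\lambda+\eta/2)$ up to theta prefactors; this is the one point requiring care, and it works precisely because in this reflected combination the odd factor $\text{sn}\,2\tilde{\lambda}$ flips the sign of $\kappa_{-}$, so the square roots in the definition \eqref{8vREPg_PM} of $g_{-}$ combine as $(\sqrt{X}+\kappa_{-}Y\sqrt{Z})(\sqrt{X}-\kappa_{-}Y\sqrt{Z})=X-\kappa_{-}^{2}Y^{2}Z$, which is single-valued. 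Collecting the theta prefactors and the factor $p(\lambda-\eta/2)$ gives \eqref{8vREPq-detU_-exp}, and, with the previous step, also \eqref{8vREPInverse}. The genuinely delicate parts of the proof are therefore the elliptic-function computations --- the crossing identities and this square-root reconciliation, carried out while keeping track of the various theta conventions and of the half-period shifts hidden in $\tilde{\lambda},\tilde{\eta},\tilde{\zeta}$ --- together with the component bookkeeping of the parity step; everything else is the standard reflection-algebra toolbox, used as in the six-vertex case.
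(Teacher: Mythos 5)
Your identification of the target operator identity is correct: your map $\Phi_{2\lambda}$ applied to $\operatorname{adj}_{0}\mathcal{U}_{-}(\lambda)$ is exactly Sklyanin's algebraic adjoint $\widetilde{\mathcal{U}}_{-}(\lambda)$ of \eqref{8vREPUtilde-explicit}, and the four relations \eqref{8vREPSym-A-D-}--\eqref{8vREPSym-B-C-} are indeed the entries of $\widetilde{\mathcal{U}}_{-}(\lambda)=p(\lambda)\mathcal{U}_{-}(-\lambda)$; your final step (collapse of the double-row product via \eqref{8vREPM-inverse} and \eqref{8vREPK-inverse}, plus the square-root recombination in $g_{-}(\lambda+\eta/2)g_{-}(-\lambda+\eta/2)$) is essentially the paper's computation of \eqref{8vREPq-detU_-exp}. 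The genuine gap is in the step you yourself flag as the "main technical point": you propose to obtain the \emph{linear} parity relations by evaluating the reflection equation at the degenerate points $R(0)\propto P$, $R(-\eta)\propto|s\rangle\langle s|$ and doing component bookkeeping. This cannot work as stated. The reflection equation only ever produces relations \emph{quadratic} in the generators, and, more decisively, it is invariant under the replacement $\mathcal{U}_{-}(\lambda)\to f(\lambda)\,\mathcal{U}_{-}(\lambda)$ for an arbitrary scalar function $f$ (both sides pick up $f(\lambda)f(\mu)$), whereas the parity relations are destroyed by any non-even $f$. Hence no amount of specialization of the reflection equation, even combined with your "kinematical" identity stated abstractly, can yield \eqref{8vREPSym-A-D-}--\eqref{8vREPSym-B-C-}; representation-specific input beyond the exchange algebra is indispensable. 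Your logical ordering also becomes circular: you use the (unproved) parity relations to show that $\mathcal{U}_{-}(\lambda+\eta/2)\mathcal{U}_{-}(\eta/2-\lambda)$ is scalar and to get \eqref{8vREPInverse}, while the parity relations themselves need exactly this inversion property. (Incidentally, the permutation point $R(0)$ alone does not make that product scalar; the scalar nature comes from Sklyanin's adjoint identity at the antisymmetrizer point combined with the inversion formula.)

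The repair is to invert your order, which is precisely the paper's route, and you already hold all the ingredients in your last paragraph: first prove \eqref{8vREPInverse} directly from the factorized form $\mathcal{U}_{-}=M\,K_{-}\hat{M}$, using the bulk crossing--unitarity \eqref{8vREPM-inverse} (so $\hat{M}(\lambda+\eta/2)M(\eta/2-\lambda)$ and $M(\lambda+\eta/2)\hat{M}(\eta/2-\lambda)$ collapse to $\det_{q}M$ times the identity) and the scalar inversion \eqref{8vREPK-inverse} of $K_{-}$, which simultaneously gives $\det_{q}\mathcal{U}_{-}=\det_{q}K_{-}\det_{q}M(\lambda)\det_{q}M(-\lambda)$, its centrality, the $\epsilon$-independent representations \eqref{8vREPq-detU_1}--\eqref{8vREPq-detU_2}, and the explicit formula \eqref{8vREPq-detU_-exp}. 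Then invoke the genuinely algebraic consequence of the reflection equation, Sklyanin's identity $\widetilde{\mathcal{U}}_{-}(\lambda-\eta/2)\,\mathcal{U}_{-}(\lambda+\eta/2)=\det_{q}\mathcal{U}_{-}(\lambda)$, and compare with \eqref{8vREPInverse} to conclude $\widetilde{\mathcal{U}}_{-}(\lambda)=p(\lambda)\mathcal{U}_{-}(-\lambda)$; reading off the entries via \eqref{8vREPUtilde-explicit} gives the parity relations, with \eqref{8vREPdef-p-1}--\eqref{8vREPdef-p-2} as their scalar ($\mathsf{N}=0$) shadow, verified by direct elliptic computation as you describe.
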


\begin{proof}
This proposition is the 8-vertex analog of Proposition 2.1 of \cite{8vREPNic12b}; as in this last proposition we can derive also this
8-vertex case following Sklyanin's article \cite{8vREPSkly88}. The following
identity holds:%
\begin{equation}
K_{-}^{-1}(\lambda +\eta /2)=\frac{p(\lambda -\eta /2)}{\det_{q}K_{-}(%
\lambda )}K_{-}(\eta /2-\lambda ),  \label{8vREPK-inverse}
\end{equation}%
being:%
\begin{equation}
K_{-}(\eta /2-\lambda )\equiv \left( 
\begin{array}{cc}
a_{-}(\eta /2-\lambda ) & b_{-}(\eta /2-\lambda ) \\ 
c_{-}(\eta /2-\lambda ) & d_{-}(\eta /2-\lambda )%
\end{array}%
\right) =\left( 
\begin{array}{cc}
d_{-}(\eta /2+\lambda ) & -b_{-}(\eta /2+\lambda ) \\ 
-c_{-}(\eta /2+\lambda ) & a_{-}(\eta /2+\lambda )%
\end{array}%
\right) ,
\end{equation}%
where we have defined:%
\begin{equation}
\det_{q}K_{-}(\lambda )=p(\lambda -\eta /2)(a_{-}(\lambda +\eta
/2)a_{-}(\eta /2-\lambda )+b_{-}(\lambda +\eta /2)c_{-}(\eta /2-\lambda )).
\end{equation}%
Then the identity (\ref{8vREPInverse}) is obtained by the following chain of
identities:%
\begin{align}
& \mathcal{U}_{-}(\eta /2+\lambda )\mathcal{U}_{-}(\eta /2-\lambda )\underset%
{(\ref{8vREPM-inverse})}{=}\det_{q}M_{0}(-\lambda )M_{0}(\lambda +\eta
/2)K_{-}(\lambda +\eta /2)K_{-}(\eta /2-\lambda )\hat{M}_{0}(\eta /2-\lambda
)  \notag \\
& \text{\ \ \ \ \ \ \ \ }\underset{(\ref{8vREPK-inverse})}{=}\det_{q}M_{0}(-%
\lambda )\frac{\det_{q}K_{-}(\lambda )}{p(\lambda -\eta /2)}M_{0}(\lambda
+\eta /2)\hat{M}_{0}(\eta /2-\lambda )\underset{(\ref{8vREPM-inverse})}{=}\frac{%
\det_{q}\mathcal{U}_{-}(\lambda )}{p(\lambda -\eta /2)},
\end{align}%
where:%
\begin{equation}
\det_{q}\mathcal{U}_{-}(\lambda )\equiv \det_{q}K_{-}(\lambda
)\det_{q}M_{0}(\lambda )\det_{q}M_{0}(-\lambda ),
\end{equation}%
and we have used that:%
\begin{align}
\hat{M}(\pm \lambda +\eta /2)& =(-1)^{\mathsf{N}}\left( 
\begin{array}{cc}
D(-\eta /2\mp \lambda ) & -B(-\eta /2\mp \lambda ) \\ 
-C(-\eta /2\mp \lambda ) & A(-\eta /2\mp \lambda )%
\end{array}%
\right) \\
& =(-1)^{\mathsf{N}}\det_{q}M_{0}(\mp \lambda )M^{-1}(\mp \lambda +\eta /2),
\label{8vREPM-inverse}
\end{align}%
where%
\begin{eqnarray}
\det_{q}M_{0}(\lambda ) &=&A(\lambda +\eta /2)D(\lambda -\eta /2)-B(\lambda
+\eta /2)C(\lambda -\eta /2)  \notag \\
&=&a(\lambda +\eta /2)d(\lambda -\eta /2),
\end{eqnarray}%
is the bulk quantum determinant, first proven to be central for the 6-vertex
case in \cite{8vREPIK81}. Then $\det_{q}\mathcal{U}_{-}(\lambda )$ is central its
explicit expression (\ref{8vREPq-detU_-exp})\ follows observing that it holds:%
\begin{equation}
\det_{q}K_{-}(\lambda )=p(\lambda -\eta /2)g_{-}(\lambda +\eta
/2)g_{-}(-\lambda +\eta /2).
\end{equation}%
Sklyanin's representation (38)$_{{\small {\cite{8vREPSkly88}}}}$ for the
quantum determinant works clearly also for the 8-vertex case and so, defined
by%
\begin{equation*}
\widetilde{\mathcal{U}}_{-}(\lambda )\equiv -\frac{tr_{2}R_{12}(-\eta
)\left( \mathcal{U}_{-}\right) _{2}(\lambda )R_{21}(2\lambda )}{\theta
_{1}(\eta |\omega )}=\left( 
\begin{array}{cc}
\widetilde{\mathcal{D}}_{-}(\lambda ) & -\widetilde{\mathcal{B}}_{-}(\lambda
) \\ 
-\widetilde{\mathcal{C}}_{-}(\lambda ) & \widetilde{\mathcal{A}}_{-}(\lambda
)%
\end{array}%
\right)
\end{equation*}%
the "algebraic adjoint" of the boundary monodromy matrix $\mathcal{U}%
_{-}(\lambda )$, $\widetilde{\mathcal{U}}_{-}(\lambda )$ admits the
following explict form in the 8-vertex case:%
\begin{equation}
\widetilde{\mathcal{U}}_{-}(\lambda )=\left( 
\begin{array}{cc}
\mathcal{D}_{-}(\lambda )\,\text{b}(2\lambda )-\mathcal{A}_{-}(\lambda )\,%
\text{c}(2\lambda ) & \mathcal{C}_{-}(\lambda )\,\text{d}(2\lambda )-%
\mathcal{B}_{-}(\lambda )\,\text{a}(2\lambda ) \\ 
\mathcal{B}_{-}(\lambda )\,\text{d}(2\lambda )-\mathcal{C}_{-}(\lambda )\,%
\text{a}(2\lambda ) & \mathcal{A}_{-}(\lambda )\,\text{b}(2\lambda )-%
\mathcal{D}_{-}(\lambda )\,\text{c}(2\lambda )%
\end{array}%
\right) ,  \label{8vREPUtilde-explicit}
\end{equation}%
and it satisfies the identity (41)$_{{\small {\cite{8vREPSkly88}}}}$:%
\begin{equation}
\widetilde{\mathcal{U}}_{-}(\lambda -\eta /2)\mathcal{U}_{-}(\lambda +\eta
/2)=\det_{q}\mathcal{U}_{-}(\lambda ),
\end{equation}%
and so from the identity (\ref{8vREPInverse}) it follows:%
\begin{equation}
\widetilde{\mathcal{U}}_{-}(\lambda )=p(\lambda )\mathcal{U}_{-}(-\lambda ),
\end{equation}%
which by using $(\ref{8vREPUtilde-explicit})$ implies the symmetry properties $(\ref{8vREPSym-A-D-})$ and $(\ref{8vREPSym-B-C-})$. Finally, let us remark that the
identities in $(\ref{8vREPdef-p-1})$ and $(\ref{8vREPdef-p-2})$ can be proven by direct
computations and in fact they just coincides with $(\ref{8vREPSym-A-D-})$ and $(\ref{8vREPSym-B-C-})$ for the scalar case $\mathsf{N}$=0.
\end{proof}

Similar statements hold for the reflection algebra generated by $\mathcal{U}%
_{+}(\lambda )$, as they are simply consequences of the previous proposition
being $\mathcal{U}_{+}^{t_{0}}(-\lambda )$ solution of the same reflection
equation of $\mathcal{U}_{-}(\lambda )$.

\begin{lemma}
The most general boundary transfer matrix $\mathcal{T}(\lambda )$ is even in the
spectral parameter $\lambda $:%
\begin{equation}
\mathcal{T}(-\lambda )=\mathcal{T}(\lambda ).  \label{8vREPeven-transfer}
\end{equation}
\end{lemma}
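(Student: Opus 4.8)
The plan is to verify \rf{8vREPeven-transfer} by a direct component computation, playing off the parity relations of the Proposition for the $\mathcal{U}_-$-generators against their $K_+$-counterparts. Writing the transfer matrix \rf{8vREPtransfer} as $\mathcal{T}(\lambda)=\text{tr}_0\{K_+(\lambda)\,\mathcal{U}_-(\lambda)\}$ and carrying out the auxiliary-space trace gives
\[
\mathcal{T}(\lambda)=a_{+}(\lambda)\mathcal{A}_{-}(\lambda)+d_{+}(\lambda)\mathcal{D}_{-}(\lambda)+b_{+}(\lambda)\mathcal{C}_{-}(\lambda)+c_{+}(\lambda)\mathcal{B}_{-}(\lambda),
\]
where $a_{+},b_{+},c_{+},d_{+}$ are the entries of $K_{+}(\lambda)$. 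The task is then to show that $\mathcal{T}(-\lambda)$, obtained by replacing $\lambda\to-\lambda$ above, collapses back to this expression.

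On the $\mathcal{U}_-$-side the essential input is that \rf{8vREPSym-A-D-}--\rf{8vREPSym-B-C-}, solved for the reflected generators (equivalently, the explicit form \rf{8vREPUtilde-explicit} of $\widetilde{\mathcal{U}}_-$ together with $\widetilde{\mathcal{U}}_-(\lambda)=p(\lambda)\mathcal{U}_-(-\lambda)$), express $\mathcal{A}_-(-\lambda),\mathcal{D}_-(-\lambda)$ as $\big(\text{b}(2\lambda)\,\mathcal{D}_-(\lambda)-\text{c}(2\lambda)\,\mathcal{A}_-(\lambda)\big)/p(\lambda)$ and $\big(\text{b}(2\lambda)\,\mathcal{A}_-(\lambda)-\text{c}(2\lambda)\,\mathcal{D}_-(\lambda)\big)/p(\lambda)$, and likewise $\mathcal{B}_-(-\lambda),\mathcal{C}_-(-\lambda)$ in terms of $\mathcal{B}_-(\lambda),\mathcal{C}_-(\lambda)$ with the weights $\text{a}(2\lambda),\text{d}(2\lambda)$. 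Substituting these into $\mathcal{T}(-\lambda)$ and collecting the coefficient of each of $\mathcal{A}_-(\lambda),\mathcal{D}_-(\lambda),\mathcal{B}_-(\lambda),\mathcal{C}_-(\lambda)$ produces, up to the common factor $1/p(\lambda)$, four bilinear combinations of $\{a_+,b_+,c_+,d_+\}$ evaluated at $-\lambda$ and of $\{\text{a},\text{b},\text{c},\text{d}\}$ evaluated at $2\lambda$.

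The last step is to recognise each such combination as $p(\lambda)$ times the corresponding entry of $K_+(\lambda)$, i.e. as the $K_+$-analog of the scalar identities \rf{8vREPdef-p-1}--\rf{8vREPdef-p-2}. The cleanest way to secure these analogs without recomputing with theta functions is to invoke the remark following the Proposition: $\mathcal{V}_+(\lambda)\equiv\mathcal{U}_+^{t_0}(-\lambda)$ obeys the same reflection equation \rf{8vREPReflection Equation} as $\mathcal{U}_-(\lambda)$, hence satisfies the identically-structured parity relations with the same function $p$ of \rf{8vREPdef-p-3}; specialising them to $\mathsf{N}=0$, where $\mathcal{V}_+(\lambda)$ reduces to $[K_+(-\lambda)]^{t_0}$, yields precisely the four relations $p(\lambda)a_+(\lambda)=\text{b}(2\lambda)d_+(-\lambda)-\text{c}(2\lambda)a_+(-\lambda)$ and its $d_+,b_+,c_+$ siblings. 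With these in hand, the coefficient of $\mathcal{A}_-(\lambda)$ in $\mathcal{T}(-\lambda)$ equals $a_+(\lambda)$, that of $\mathcal{D}_-(\lambda)$ equals $d_+(\lambda)$, that of $\mathcal{C}_-(\lambda)$ equals $b_+(\lambda)$ and that of $\mathcal{B}_-(\lambda)$ equals $c_+(\lambda)$, so $\mathcal{T}(-\lambda)=\mathcal{T}(\lambda)$ for generic $\lambda$ and then everywhere by analyticity, $p$ having only isolated zeros. I expect the only real hazard to be bookkeeping — keeping the $K_+$ versus $K_-$ labels and the $(\lambda,2\lambda,-\lambda)$ arguments straight, and checking that the $p$ entering on the $\mathcal{U}_-$ side and in the $K_+$-identities is literally the same function, which it is because $p$ depends only on the $R$-matrix and not on the boundary parameters $\zeta_\pm,\kappa_\pm,\tau_\pm$.
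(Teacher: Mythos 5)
Your argument is correct and is essentially the proof given in the paper: expand $\mathcal{T}(-\lambda)=\mathrm{tr}_0\{K_+(-\lambda)\,\mathcal{U}_-(-\lambda)\}$ using $\widetilde{\mathcal{U}}_-(\lambda)=p(\lambda)\mathcal{U}_-(-\lambda)$ (i.e.\ the parity relations \rf{8vREPSym-A-D-}--\rf{8vREPSym-B-C-}), collect the coefficients of $\mathcal{A}_-,\mathcal{B}_-,\mathcal{C}_-,\mathcal{D}_-$, and identify them with $a_+,c_+,b_+,d_+$ through the four scalar $K_+$-identities for $p(\lambda)$. The only cosmetic difference is how those $K_+$-identities are justified: you obtain them by specializing the Proposition to $\mathsf{N}=0$ for $\mathcal{V}_+(\lambda)=\mathcal{U}_+^{t_0}(-\lambda)=[K_+(-\lambda)]^{t_0}$, whereas the paper reads them off from \rf{8vREPdef-p-1}--\rf{8vREPdef-p-2} via the explicit parameter identification $a_+(-\lambda|\zeta_+)=d_-(\lambda|\zeta_-')$, etc.; both rest on the same remark that $\mathcal{V}_+$ solves the same reflection equation.
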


\begin{proof}
The identity $(\ref{8vREPeven-transfer})$ can be proven by using the following
list of the identities:%
\begin{eqnarray*}
&&\mathcal{T}(-\lambda )\left. =\right. \text{tr}_{0}\{K_{+}(-\lambda )%
\mathcal{U}_{-}(-\lambda )\}\left. =\right. \frac{\text{tr}%
_{0}\{K_{+}(-\lambda )\widetilde{\mathcal{U}}_{-}(\lambda )\}}{p(\lambda )}
\\
&&\left. =\right. p^{-1}(\lambda )\left( \mathcal{A}_{-}(\lambda
)a_{+}(\lambda )\frac{d_{+}(-\lambda )\text{b}(2\lambda )-a_{+}(-\lambda )\,%
\text{c}(2\lambda )}{a_{+}(\lambda )}+\mathcal{D}_{-}(\lambda )d_{+}(\lambda
)\frac{a_{+}(-\lambda )\text{b}(2\lambda )-d_{+}(-\lambda )\,\text{c}%
(2\lambda )}{d_{+}(\lambda )}\right. \\
&&+\left. \mathcal{B}_{-}(\lambda )c_{+}(\lambda )\frac{b_{+}(-\lambda )%
\text{d}(2\lambda )-c_{+}(-\lambda )\,\text{d}(2\lambda )}{c_{+}(\lambda )}+%
\mathcal{C}_{-}(\lambda )b_{+}(\lambda )\frac{c_{+}(-\lambda )\text{d}%
(2\lambda )-b_{+}(-\lambda )\,\text{d}(2\lambda )}{b_{+}(\lambda )}\right) \\
&&\left. =\right. \mathcal{A}_{-}(\lambda )a_{+}(\lambda )+\mathcal{D}%
_{-}(\lambda )d_{+}(\lambda )+\mathcal{B}_{-}(\lambda )c_{+}(\lambda )+%
\mathcal{C}_{-}(\lambda )b_{+}(\lambda )\left. =\right. \mathcal{T}(\lambda )
\end{eqnarray*}%
once we observe that the following identities holds:%
\begin{align}
p(\lambda )& =\frac{-\text{c}(2\lambda )a_{+}(-\lambda )+\text{b}(2\lambda
)d_{+}(-\lambda )}{a_{+}(\lambda )}=\frac{-\text{c}(2\lambda )d_{+}(-\lambda
)+\text{b}(2\lambda )a_{+}(-\lambda )}{d_{+}(\lambda )} \\
& =\frac{-\text{a}(2\lambda )b_{+}(-\lambda )+\text{d}(2\lambda
)c_{+}(-\lambda )}{b_{+}(\lambda )}=\frac{-\text{a}(2\lambda )c_{+}(-\lambda
)+\text{d}(2\lambda )b_{+}(-\lambda )}{c_{+}(\lambda )},
\end{align}%
as a direct consequence of the identities $(\ref{8vREPdef-p-1})$-$(\ref{8vREPdef-p-2})$
being:%
\begin{equation}
\left. 
\begin{array}{l}
a_{+}(-\lambda |\zeta _{+})=d_{-}(\lambda |\zeta _{-}^{\prime }),\text{ \ }%
c_{+}(-\lambda |\zeta _{+},\kappa _{+},\tau _{+})=-c_{-}(\lambda |\zeta
_{-}^{\prime },\kappa _{-}^{\prime },\tau _{-}^{\prime }) \\ 
d_{+}(-\lambda |\zeta _{+})=a_{-}(\lambda |\zeta _{-}^{\prime }),\text{ \ }%
b_{+}(-\lambda |\zeta _{+},\kappa _{\pm },\tau _{\pm })=-b_{-}(\lambda
|\zeta _{-}^{\prime },\kappa _{-}^{\prime },\tau _{-}^{\prime })%
\end{array}%
\right.
\end{equation}%
once we identify $\zeta _{-}^{\prime }\equiv \zeta _{+},$ $\kappa
_{-}^{\prime }\equiv \kappa _{+},$ $\tau _{-}^{\prime }\equiv \tau _{+}$.
\end{proof}

\section{Baxter's gauge transformations and central properties}

\subsection{Notations}

Let us introduce the following $2\times 2$ matrices:%
\begin{align}
\bar{G}(\lambda |\beta )& \equiv (X_{\beta }(\lambda ),Y_{\beta }(\lambda )),%
\text{ \ \ }\tilde{G}(\lambda |\beta )\equiv (X_{\beta +1}(\lambda
),Y_{\beta -1}(\lambda )) \\
\bar{G}^{-1}(\lambda |\beta )& \equiv \left( 
\begin{array}{c}
\bar{Y}_{\beta }(\lambda ) \\ 
\bar{X}_{\beta }(\lambda )%
\end{array}%
\right) ,\text{ \ \ \ \ \ \ \ \ \ \ \ }\tilde{G}^{-1}(\lambda |\beta )\equiv
\left( 
\begin{array}{c}
\tilde{Y}_{\beta -1}(\lambda ) \\ 
\tilde{X}_{\beta +1}(\lambda )%
\end{array}%
\right)
\end{align}%
where:%
\begin{equation}
X_{\beta }(\lambda )\equiv \left( 
\begin{array}{c}
\theta _{2}(\lambda +(\alpha +\beta )\eta |2\omega ) \\ 
\theta _{3}(\lambda +(\alpha +\beta )\eta |2\omega )%
\end{array}%
\right) ,\text{ \ \ \ \ \ \ \ }Y_{\beta }(\lambda )\equiv \left( 
\begin{array}{c}
\theta _{2}(\lambda +(\alpha -\beta )\eta |2\omega ) \\ 
\theta _{3}(\lambda +(\alpha -\beta )\eta |2\omega )%
\end{array}%
\right) ,
\end{equation}%
and%
\begin{eqnarray}
\bar{X}_{\beta }(\lambda ) &\equiv &\frac{\left( 
\begin{array}{cc}
\theta _{3}(\lambda +(\alpha +\beta )\eta |2\omega ) & -\theta _{2}(\lambda
+(\alpha +\beta )\eta |2\omega )%
\end{array}%
\right) }{\theta (\lambda +\alpha \eta )\theta (\beta \eta )}, \\
\tilde{X}_{\beta }(\lambda ) &=&\frac{\theta (\lambda +\alpha \eta )\theta
(\beta \eta )}{\theta (\lambda +(\alpha +1)\eta )\theta ((\beta -1)\eta )}%
\bar{X}_{\beta }(\lambda ), \\
\bar{Y}_{\beta }(\lambda ) &\equiv &\frac{\left( 
\begin{array}{cc}
-\theta _{3}(\lambda +(\alpha -\beta )\eta |2\omega ) & \theta _{2}(\lambda
+(\alpha -\beta )\eta |2\omega )%
\end{array}%
\right) }{\theta (\lambda +\alpha \eta )\theta (\beta \eta )}, \\
\tilde{Y}_{\beta }(\lambda ) &=&\frac{\theta (\lambda +\alpha \eta )\theta
(\beta \eta )}{\theta (\lambda +(\alpha +1)\eta )\theta ((1+\beta )\eta )}%
\bar{Y}_{\beta }(\lambda ).
\end{eqnarray}%
Here, $\alpha $ and $\beta $ are arbitrary complex number and for simplicity
we have introduced the notation $\theta (\lambda )\equiv \theta _{1}(\lambda
|\omega )$ and we omit the index $\alpha $ as it does not play an explicit
role in the following. These covectors/vectors satisfy the following
relations:%
\begin{align}
& 
\begin{array}{cc}
\bar{Y}_{\beta }(\lambda )X_{\beta }(\lambda )=1, & \bar{Y}_{\beta }(\lambda
)Y_{\beta }(\lambda )=0, \\ 
\bar{X}_{\beta }(\lambda )X_{\beta }(\lambda )=0, & \bar{X}_{\beta }(\lambda
)Y_{\beta }(\lambda )=1,%
\end{array}%
\text{ and }X_{\beta }(\lambda )\bar{Y}_{\beta }(\lambda )+Y_{\beta
}(\lambda )\bar{X}_{\beta }(\lambda )\left. =\right. I\left. \equiv \right.
\left( 
\begin{array}{cc}
1 & 0 \\ 
0 & 1%
\end{array}%
\right) ,  \label{8vREPId-decomp-bar} \\
& 
\begin{array}{cc}
\tilde{Y}_{\beta -1}(\lambda )X_{\beta +1}(\lambda )=1, & \tilde{Y}_{\beta
-1}(\lambda )Y_{\beta -1}(\lambda )=0, \\ 
\tilde{X}_{\beta +1}(\lambda )X_{\beta +1}(\lambda )=0, & \tilde{X}_{\beta
+1}(\lambda )Y_{\beta -1}(\lambda )=1,%
\end{array}%
\text{ and\ }X_{\beta +1}(\lambda )\tilde{Y}_{\beta -1}(\lambda )+Y_{\beta
-1}(\lambda )\tilde{X}_{\beta +1}(\lambda )\left. =\right. I.
\label{8vREPId-decomp-tilde}
\end{align}

\subsection{Baxter's gauge transformation}

The Baxter's gauge transformations, first introduce in \cite{8vREPBa72-1}, have the following matrix form: 
\begin{equation}
R_{0a}^{\mathsf{(8V)}}(\lambda _{12})S_{0}(\lambda _{1}|\alpha ,\beta
)S_{a}(\lambda _{2}|\alpha ,\beta +\sigma _{0}^{z})=S_{a}(\lambda
_{2}|\alpha ,\beta )S_{0}(\lambda _{1}|\alpha ,\beta +\sigma
_{a}^{z})R_{0a}^{\mathsf{(6VD)}}(\lambda _{12}|\beta ),  \label{8vREP8V-6VD-GT0}
\end{equation}%
where:%
\begin{equation}
S_{0}(\lambda |\alpha ,\beta )\equiv \left( 
\begin{array}{cc}
Y_{\beta }(\lambda ) & X_{\beta }(\lambda )%
\end{array}%
\right) \text{.}
\end{equation}%
In $\left( \ref{8vREP8V-6VD-GT0}\right) $ $R_{12}^{\mathsf{(6VD)}}(\lambda
_{12}|\beta )$ is the elliptic solution of the following dynamical 6-vertex
Yang-Baxter equation \cite{8vREPFelder94}:%
\begin{equation}
R_{12}^{\mathsf{(6VD)}}(\lambda _{12}|\beta +\sigma _{a}^{z})R_{1a}^{\mathsf{%
(6VD)}}(\lambda _{1}|\beta )R_{2a}^{\mathsf{(6VD)}}(\lambda _{2}|\beta
+\sigma _{1}^{z})=R_{2a}^{\mathsf{(6VD)}}(\lambda _{2}|\beta )R_{1a}^{%
\mathsf{(6VD)}}(\lambda _{1}|\beta +\sigma _{2}^{z})R_{12}^{\mathsf{(6VD)}%
}(\lambda _{12}|\beta ),  \label{8vREP6VD-YBeq0}
\end{equation}%
and it has the form:$\allowbreak $%
\begin{equation}
R_{12}^{\mathsf{(6VD)}}(\lambda |\beta )=\left( 
\begin{array}{cccc}
\mathsf{a}(\lambda ) & 0 & 0 & 0 \\ 
0 & \mathsf{b}(\lambda |\beta ) & \mathsf{c}(\lambda |\beta ) & 0 \\ 
0 & \mathsf{c}(\lambda |-\beta ) & \mathsf{b}(\lambda |-\beta ) & 0 \\ 
0 & 0 & 0 & \mathsf{a}(\lambda )%
\end{array}%
\right) ,  \label{8vREPop-L}
\end{equation}%
where $\mathsf{a}(\lambda )$, $\mathsf{b}(\lambda |\beta )$ and $\mathsf{c}%
(\lambda |\beta )$ are defined by:%
\begin{equation}
\mathsf{a}(\lambda )=\theta (\lambda +\eta ),\quad \mathsf{b}(\lambda |\beta
)=\frac{\theta (\lambda )\theta ((\beta +1)\eta )}{\theta (\beta \eta )}%
,\quad \mathsf{c}(\lambda |\beta )=\frac{\theta (\eta )\theta (\beta \eta
+\lambda )}{\theta (\beta \eta )}.
\end{equation}%
Historically, Baxter has used first a vectorial representation for these
transformations, which explicitly reads: 
\begin{align}
R_{12}(\lambda _{12})X_{1,\beta }(\lambda _{1})X_{2,\beta -1}(\lambda _{2})&
=\mathsf{a}(\lambda _{12})X_{2,\beta }(\lambda _{2})X_{1,\beta -1}(\lambda
_{1}),  \label{8vREPG-Bax-1} \\
R_{12}(\lambda _{12})X_{1,\beta }(\lambda _{1})Y_{2,\beta -1}(\lambda _{2})&
=\mathsf{b}(\lambda _{12}|-\beta )Y_{2,\beta }(\lambda _{2})X_{1,\beta
+1}(\lambda _{1})  \notag \\
& +\mathsf{c}(\lambda _{12}|)X_{2,\beta }(\lambda _{2})Y_{1,\beta
-1}(\lambda _{1}),  \label{8vREPG-Bax-2} \\
R_{12}(\lambda _{12})Y_{1,\beta }(\lambda _{1})X_{2,\beta +1}(\lambda _{2})&
=\mathsf{b}(\lambda _{12}|\beta )X_{2,\beta }(\lambda _{2})Y_{1,\beta
-1}(\lambda _{1})  \notag \\
& +\mathsf{c}(\lambda _{12}|-\beta )Y_{2,\beta }(\lambda _{2})X_{1,\beta
+1}(\lambda _{1}),  \label{8vREPG-Bax-3} \\
R_{12}(\lambda _{12})Y_{1,\beta }(\lambda _{1})Y_{2,\beta +1}(\lambda _{2})&
=\mathsf{a}(\lambda _{12})Y_{2,\beta }(\lambda _{2})Y_{1,\beta +1}(\lambda
_{1}),  \label{8vREPG-Bax-4}
\end{align}%
this clarifies the original use of the terminology intertwining vectors for
these gauge transformations.

\subsection{Gauge transformed boundary operators and their properties}

\subsubsection{Definitions}

Let us define the following bulk gauge transformed monodromy matrices:%
\begin{align}
M(\lambda |\beta )& \equiv \tilde{G}_{\beta }^{-1}(\lambda -\eta
/2)\,M(\lambda )\tilde{G}_{\beta +\mathsf{N}}(\lambda -\eta /2)\equiv \left( 
\begin{array}{ll}
A(\lambda |\beta ) & B(\lambda |\beta ) \\ 
C(\lambda |\beta ) & D(\lambda |\beta )%
\end{array}%
\right) , \\
&  \notag \\
\hat{M}(\lambda |\beta )& \equiv \bar{G}_{\beta +\mathsf{N}}^{-1}(\eta
/2-\lambda )\,\hat{M}(\lambda )\bar{G}_{\beta }(\eta /2-\lambda )\equiv
\left( 
\begin{array}{ll}
\bar{A}(\lambda |\beta ) & \bar{B}(\lambda |\beta ) \\ 
\bar{C}(\lambda |\beta ) & \bar{D}(\lambda |\beta )%
\end{array}%
\right) ,
\end{align}%
\newline
and the following boundary one:%
\begin{equation}
\mathsf{U}_{-}(\lambda |\beta )\equiv \left( 
\begin{array}{ll}
\widehat{\mathcal{A}}_{-}(\lambda |\beta +2) & \widehat{\mathcal{B}}%
_{-}(\lambda |\beta ) \\ 
\widehat{\mathcal{C}}_{-}(\lambda |\beta +2) & \widehat{\mathcal{D}}%
_{-}(\lambda |\beta )%
\end{array}%
\right) \equiv \tilde{G}^{-1}(\lambda -\eta /2|\beta )\mathcal{U}%
_{-}(\lambda )\tilde{G}(\eta /2-\lambda |\beta ).
\end{equation}

\subsubsection{Main symmetries}

The following rescaled gauge transformed boundary operators:%
\begin{eqnarray}
\mathcal{A}_{-}(\lambda |\beta ) &\equiv &r(\lambda )\widehat{\mathcal{A}}%
_{-}(\lambda |\beta ),\text{ \ }\mathcal{B}_{-}(\lambda |\beta )\equiv
r(\lambda )\widehat{\mathcal{B}}_{-}(\lambda |\beta ), \\
\mathcal{C}_{-}(\lambda |\beta ) &\equiv &r(\lambda )\widehat{\mathcal{C}}%
_{-}(\lambda |\beta ),\text{ \ }\mathcal{D}_{-}(\lambda |\beta )\equiv
r(\lambda )\widehat{\mathcal{D}}_{-}(\lambda |\beta ), \\
r(\lambda ) &\equiv &\theta _{4}(2\lambda -\eta |2\omega )\theta \left(
\lambda +(\alpha +1/2)\eta \right) ,
\end{eqnarray}%
satisfy the following central properties:

\begin{proposition}
$\mathcal{A}_{-}(\lambda |\beta )$ and $\mathcal{D}_{-}(\lambda |\beta )$
satisfies the following interrelated parity relations:%
\begin{align}
\mathcal{A}_{-}(\lambda |\beta )& =-\frac{\theta (\eta )\theta \left(
2\lambda -(\beta -1)\eta \right) }{\theta \left( 2\lambda \right) \theta
\left( (\beta -2)\eta \right) }\mathcal{D}_{-}(\lambda |\beta )+\frac{\theta
\left( 2\lambda -\eta \right) \theta \left( (\beta -1)\eta \right) }{\theta
\left( 2\lambda \right) \theta \left( (\beta -2)\eta \right) }\mathcal{D}%
_{-}(-\lambda |\beta ),  \label{8vREPparity-m-1} \\
\mathcal{D}_{-}(\lambda |\beta )& =\frac{\theta (\eta )\theta \left(
2\lambda +(\beta -1)\eta \right) }{\theta \left( 2\lambda \right) \theta
\left( \beta \eta \right) }\mathcal{A}_{-}(\lambda |\beta )+\frac{\theta
\left( 2\lambda -\eta \right) \theta \left( (\beta -1)\eta \right) }{\theta
\left( 2\lambda \right) \theta \left( \beta \eta \right) }\mathcal{A}%
_{-}(-\lambda |\beta ),  \label{8vREPparity-m-3}
\end{align}%
while $\mathcal{B}_{-}(\lambda |\beta )$ and $\mathcal{C}_{-}(\lambda |\beta
)$ satisfy the following independent parity relations:%
\begin{equation}
\mathcal{B}_{-}(-\lambda |\beta )=-\frac{\theta (2\lambda +\eta )}{\theta
\left( 2\lambda -\eta \right) }\mathcal{B}_{-}(\lambda |\beta )\text{ },%
\text{ \ }\mathcal{C}_{-}(-\lambda |\beta )=-\frac{\theta (2\lambda +\eta )}{%
\theta \left( 2\lambda -\eta \right) }\mathcal{C}_{-}(\lambda |\beta ).
\label{8vREPparity-m-2}
\end{equation}%
Moreover, it holds:%
\begin{equation}
\mathsf{U}_{-}^{-1}(\lambda +\eta /2|\beta )=\frac{\mathsf{\tilde{U}}%
_{-}(\lambda -\eta /2|\beta )}{\det_{q}\mathcal{U}_{-}(\lambda )}=\frac{%
p(\lambda -\eta /2)}{\det_{q}\mathcal{U}_{-}(\lambda )}\mathsf{U}_{-}(\eta
/2-\lambda |\beta ),  \label{8vREPInversion-formula}
\end{equation}%
where:%
\begin{equation}
\mathsf{\tilde{U}}_{-}(\lambda |\beta )\equiv \tilde{G}^{-1}(-\lambda -\eta
/2|\beta )\widetilde{\mathcal{U}}_{-}(\lambda )\tilde{G}(\eta /2+\lambda
|\beta )
\end{equation}%
and the quantum determinant admits the representation, for both $\epsilon
=\pm 1$:%
\begin{align}
& \frac{\det_{q}\mathcal{U}_{-}(\lambda )r(\lambda +\eta /2)r(-\lambda +\eta
/2)}{p\left( \lambda -\eta /2\right) }\left. =\right.  \notag \\
& \text{ \ \ \ \ \ \ \ \ \ \ \ \ \ \ \ \ \ \ \ \ \ \ \ \ \ }\left. =\right. 
\mathcal{A}_{-}(\epsilon \lambda +\eta /2|\beta +2)\mathcal{A}_{-}(\eta
/2-\epsilon \lambda |\beta +2)+\mathcal{B}_{-}(\epsilon \lambda +\eta
/2|\beta )\mathcal{C}_{-}(\eta /2-\epsilon \lambda |\beta +2)
\label{8vREPgauge-q-det-A} \\
& \text{ \ \ \ \ \ \ \ \ \ \ \ \ \ \ \ \ \ \ \ \ \ \ \ \ \ }\left. =\right. 
\mathcal{D}_{-}(\epsilon \lambda +\eta /2|\beta )\mathcal{D}_{-}(\eta
/2-\epsilon \lambda |\beta )+\mathcal{C}_{-}(\epsilon \lambda +\eta /2|\beta
+2)\mathcal{B}_{-}(\eta /2-\epsilon \lambda |\beta ).  \label{8vREPgauge-q-det-D}
\end{align}
\end{proposition}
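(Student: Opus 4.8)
This proposition is the Baxter gauge-transformed (equivalently, dynamical $6$-vertex) counterpart of the preceding Proposition, and the plan is to derive it by conjugating the identities already established there with the gauge transformations $\tilde{G}(\lambda|\beta)$, using the intertwining relations \rf{8vREPG-Bax-1}--\rf{8vREPG-Bax-4}. The key auxiliary object is the gauge transform of Sklyanin's algebraic adjoint, $\mathsf{\tilde{U}}_-(\lambda|\beta)\equiv\tilde{G}^{-1}(-\lambda-\eta/2|\beta)\,\widetilde{\mathcal{U}}_-(\lambda)\,\tilde{G}(\eta/2+\lambda|\beta)$. Since $\widetilde{\mathcal{U}}_-(\lambda)=p(\lambda)\mathcal{U}_-(-\lambda)$ and the two gauge factors appearing here are exactly those defining $\mathsf{U}_-(-\lambda|\beta)$, one gets at once $\mathsf{\tilde{U}}_-(\lambda|\beta)=p(\lambda)\,\mathsf{U}_-(-\lambda|\beta)$.

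The parity relations \rf{8vREPparity-m-1}--\rf{8vREPparity-m-2} will follow once $\mathsf{\tilde{U}}_-(\lambda|\beta)$ is written explicitly as a $2\times2$ matrix whose entries are $\theta$-function combinations of the rescaled gauge generators $\mathcal{A}_-(\lambda|\cdot),\mathcal{B}_-(\lambda|\cdot),\mathcal{C}_-(\lambda|\cdot),\mathcal{D}_-(\lambda|\cdot)$ — the dynamical analogue of the explicit formula \rf{8vREPUtilde-explicit}. I would obtain this by gauge-transforming Sklyanin's defining expression $\widetilde{\mathcal{U}}_-(\lambda)=-\theta_1(\eta|\omega)^{-1}\,\text{tr}_2\{R_{12}^{\mathsf{(8V)}}(-\eta)\,(\mathcal{U}_-)_2(\lambda)\,R_{21}^{\mathsf{(8V)}}(2\lambda)\}$ of \cite{8vREPSkly88} directly: inserting the intertwining relations turns the two $8$-vertex $R$-matrices, up to scalars and $\beta$-shifts, into the dynamical $6$-vertex ones $R^{\mathsf{(6VD)}}$ of \rf{8vREPop-L}, and the residual $2\times2$ trace is evaluated with the entries $\mathsf{a},\mathsf{b},\mathsf{c}$. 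Because $R^{\mathsf{(6VD)}}$ has vanishing corner entries, the off-diagonal entries of $\mathsf{\tilde{U}}_-(\lambda|\beta)$ come out proportional to $\widehat{\mathcal{B}}_-$ alone and to $\widehat{\mathcal{C}}_-$ alone — in contrast to the $8$-vertex $\widetilde{\mathcal{U}}_-(\lambda)$, whose off-diagonal entries mix $\mathcal{B}_-$ and $\mathcal{C}_-$ — which is precisely why \rf{8vREPparity-m-2} decouples. Comparing the four entries of $\mathsf{\tilde{U}}_-(\lambda|\beta)=p(\lambda)\mathsf{U}_-(-\lambda|\beta)$ then yields the interrelated relations between $\mathcal{A}_-(\pm\lambda|\beta)$ and $\mathcal{D}_-(\pm\lambda|\beta)$ and the independent ones for $\mathcal{B}_-,\mathcal{C}_-$; the emerging coefficients — products and ratios of $\mathsf{a},\mathsf{b},\mathsf{c}$ at argument $2\lambda$, of the $\beta\mapsto\beta\pm1$ shifts carried by the two columns of $\tilde{G}$, and of the rescaling factor $r(\pm\lambda)$ — must then be collapsed, via the addition and quasi-periodicity formulas for $\theta(\lambda)=\theta_1(\lambda|\omega)$, to the exact forms displayed.

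The inversion formula \rf{8vREPInversion-formula} is then purely formal. Conjugating \rf{8vREPInverse} by $\tilde{G}$ and matching arguments, $\mathsf{U}_-^{-1}(\lambda+\eta/2|\beta)=\tilde{G}^{-1}(-\lambda|\beta)\,\mathcal{U}_-^{-1}(\lambda+\eta/2)\,\tilde{G}(\lambda|\beta)$; inserting \rf{8vREPInverse} and recognising $\tilde{G}^{-1}(-\lambda|\beta)\mathcal{U}_-(\eta/2-\lambda)\tilde{G}(\lambda|\beta)=\mathsf{U}_-(\eta/2-\lambda|\beta)$ gives the second equality (the scalar $\det_q\mathcal{U}_-(\lambda)$ being central, it is untouched by conjugation), while using instead Sklyanin's $\widetilde{\mathcal{U}}_-(\lambda-\eta/2)\mathcal{U}_-(\lambda+\eta/2)=\det_q\mathcal{U}_-(\lambda)$ and the definition of $\mathsf{\tilde{U}}_-$ gives the first. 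Finally, the gauge-transformed quantum-determinant representations \rf{8vREPgauge-q-det-A}--\rf{8vREPgauge-q-det-D} are read off from $p(\lambda-\eta/2)\,\mathsf{U}_-(\epsilon\lambda+\eta/2|\beta)\,\mathsf{U}_-(\eta/2-\epsilon\lambda|\beta)=\det_q\mathcal{U}_-(\lambda)\,I$, which holds for both $\epsilon=\pm1$ as a consequence of \rf{8vREPInversion-formula} together with $\lambda\mapsto-\lambda$, by taking its $(1,1)$ and $(2,2)$ entries — keeping track of the built-in $\beta\mapsto\beta+2$ shift on the $\mathcal{A}_-$ and $\mathcal{C}_-$ slots of $\mathsf{U}_-(\cdot|\beta)$ — and multiplying through by $r(\epsilon\lambda+\eta/2)r(\eta/2-\epsilon\lambda)$; the latter product is symmetric under $\lambda\mapsto-\lambda$, hence $\epsilon$-independent and equal to $r(\lambda+\eta/2)r(-\lambda+\eta/2)$, which fixes the normalization.

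The main obstacle is the second step: producing the closed $2\times2$ form of $\mathsf{\tilde{U}}_-(\lambda|\beta)$, and above all verifying that the $\theta$-function coefficients generated by the intertwining relations, by the $\beta$-shifts of $\tilde{G}$, and by the factors $r(\pm\lambda)$ simplify to exactly the coefficients in \rf{8vREPparity-m-1}--\rf{8vREPparity-m-3}. This is a bookkeeping computation in elliptic theta identities, with the $\beta\pm1$ shifts of $\tilde{G}$ being the source of the $\beta$-dependence of the final coefficients; everything else is formal conjugation and reading off matrix entries.
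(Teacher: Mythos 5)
Your proposal follows essentially the same route as the paper: you prove the inversion formula and the quantum-determinant representations by gauge-conjugating $\mathcal{U}_-\widetilde{\mathcal{U}}_-=\det_q\mathcal{U}_-$, and you obtain the parity relations by comparing the entries of $\mathsf{\tilde{U}}_-(\lambda|\beta)=p(\lambda)\,\mathsf{U}_-(-\lambda|\beta)$ with an explicit expression of $\mathsf{\tilde{U}}_-(\lambda|\beta)$ in terms of the gauged generators, exactly as the paper does. The only (minor) difference is how you would compute those explicit entries — via the vertex-IRF trace formula with the intertwining relations rather than the paper's direct expansion in the ungauged generators — a step the paper itself relegates to "direct computation", so the argument is correct and structurally identical.
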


\begin{proof}
Let us first prove the equation $\left( \ref{8vREPInversion-formula}\right) $, by
definition it holds:%
\begin{equation}
\mathsf{\tilde{U}}_{-}(\lambda -\eta /2|\beta )\equiv \tilde{G}_{\beta
}^{-1}(-\lambda )\widetilde{\mathcal{U}}_{-}(\lambda -\eta /2)\tilde{G}%
_{\beta }(\lambda ),\text{ }\mathsf{U}_{-}(\lambda +\eta /2|\beta )\equiv 
\tilde{G}_{\beta }^{-1}(\lambda )\mathcal{U}_{-}(\lambda +\eta /2)\tilde{G}%
_{\beta }(-\lambda ),
\end{equation}%
and then:%
\begin{eqnarray}
\mathsf{U}_{-}(\lambda +\eta /2|\beta )\mathsf{\tilde{U}}_{-}(\lambda -\eta
/2|\beta ) &=&\tilde{G}_{\beta }^{-1}(\lambda )\mathcal{U}_{-}(\lambda +\eta
/2)\widetilde{\mathcal{U}}_{-}(\lambda -\eta /2)\tilde{G}_{\beta }(\lambda )
\notag \\
&=&\tilde{G}_{\beta }^{-1}(\lambda )\det_{q}\mathcal{U}_{-}(\lambda )\tilde{G%
}_{\beta }(\lambda )  \notag \\
&=&\det_{q}\mathcal{U}_{-}(\lambda ),
\end{eqnarray}%
and similarly:%
\begin{eqnarray}
\mathsf{\tilde{U}}_{-}(\lambda -\eta /2|\beta )\mathsf{U}_{-}(\lambda +\eta
/2|\beta ) &=&\tilde{G}_{\beta }^{-1}(-\lambda )\widetilde{\mathcal{U}}%
_{-}(\lambda -\eta /2)\mathcal{U}_{-}(\lambda +\eta /2)\tilde{G}_{\beta
}(-\lambda )  \notag \\
&=&\tilde{G}_{\beta }^{-1}(-\lambda )\det_{q}\mathcal{U}_{-}(\lambda )\tilde{%
G}_{\beta }(-\lambda )  \notag \\
&=&\det_{q}\mathcal{U}_{-}(\lambda ).
\end{eqnarray}%
From these identities the expressions for the quantum determinant in terms
of gauge transformed operators directly follow. Moreover, defined:%
\begin{equation}
f_{\alpha }(\lambda )\equiv \frac{\theta \left( (\alpha +1/2)\eta +\lambda
\right) }{\theta \left( (\alpha +1/2)\eta -\lambda \right) },
\end{equation}%
the identities:%
\begin{align}
\left( \mathsf{\tilde{U}}_{-}(\lambda |\beta )\right) _{12}& =-f_{\alpha
}(\lambda )\theta \left( 2\lambda +\eta \right) \widehat{\mathcal{B}}%
_{-}(\lambda |\beta ),\text{ \ \ }\left( \mathsf{\tilde{U}}_{-}(\lambda
|\beta )\right) _{21}=-f_{\alpha }(\lambda )\theta \left( 2\lambda +\eta
\right) \widehat{\mathcal{C}}_{-}(\lambda |\beta ), \\
\left( \mathsf{\tilde{U}}_{-}(\lambda |\beta )\right) _{22}& =f_{\alpha
}(\lambda )\left( \frac{\theta \left( 2\lambda \right) \theta \left( (\beta
-2)\eta \right) }{\theta \left( (\beta -1)\eta \right) }\widehat{\mathcal{A}}%
_{-}(\lambda |\beta )+\frac{\theta \left( \eta \right) \theta \left(
2\lambda -(\beta -1)\eta \right) }{\theta \left( (\beta -1)\eta \right) }%
\widehat{\mathcal{D}}_{-}(\lambda |\beta )\right) ,
\end{align}%
can be shown by direct computation expanding both the elements of $\mathsf{%
\tilde{U}}_{-}(\lambda |\beta )$ and $\mathcal{U}_{-}(\lambda |\beta )$ in
terms of the ungauged elements of $\mathcal{U}_{-}(\lambda )$. Then the
formulae $\left( \ref{8vREPparity-m-1}\right) $ and $\left( \ref{8vREPparity-m-2}%
\right) $\ are simply derived by using the above identities and the identity:%
\begin{align}
\mathsf{\tilde{U}}_{-}(\lambda |\beta )& =p(\lambda )\left( 
\begin{array}{l}
\tilde{Y}_{\beta -1}(-\lambda -\eta /2) \\ 
\tilde{X}_{\beta +1}(-\lambda -\eta /2)%
\end{array}%
\right) \mathcal{U}_{-}(-\lambda )\left( 
\begin{array}{ll}
X_{\beta +1}(\eta /2+\lambda ) & Y_{\beta -1}(\eta /2+\lambda )%
\end{array}%
\right) \\
& =p(\lambda )\mathsf{U}_{-}(-\lambda |\beta ).
\end{align}
\end{proof}

\subsubsection{Commutations relations}

All the commutation relations that we need to define the left and right SOV
representations of the gauge transformed generators of the reflection
algebra are contained in the following lemma.

\begin{lemma}
The following commutation relations are satisfied:%
\begin{equation}
\mathcal{B}_{-}(\lambda _{2}|\beta )\mathcal{B}_{-}(\lambda _{1}|\beta -2)=%
\mathcal{B}_{-}(\lambda _{1}|\beta )\mathcal{B}_{-}(\lambda _{2}|\beta -2),
\label{8vREPCRM-BB}
\end{equation}%
and%
\begin{eqnarray}
\mathcal{A}_{-}(\lambda _{2}|\beta +2)\mathcal{B}_{-}(\lambda _{1}|\beta )
&=&\frac{\theta (\lambda _{1}-\lambda _{2}+\eta )\theta (\lambda
_{2}+\lambda _{1}-\eta )}{\theta (\lambda _{1}-\lambda _{2})\theta (\lambda
_{1}+\lambda _{2})}\mathcal{B}_{-}(\lambda _{1}|\beta )\mathcal{A}%
_{-}(\lambda _{2}|\beta )  \notag \\
&&+\frac{\theta (\lambda _{1}+\lambda _{2}-\eta )\theta (\lambda
_{1}-\lambda _{2}+(\beta -1)\eta )\theta (\eta )}{\theta (\lambda
_{2}-\lambda _{1})\theta (\lambda _{1}+\lambda _{2})\theta ((\beta -1)\eta )}%
\mathcal{B}_{-}(\lambda _{2}|\beta )\mathcal{A}_{-}(\lambda _{1}|\beta ) 
\notag \\
&&+\frac{\theta (\eta )\theta (\lambda _{1}+\lambda _{2}-\beta \eta )}{%
\theta (\lambda _{1}+\lambda _{2})\theta ((\beta -1)\eta )}\mathcal{B}%
_{-}(\lambda _{2}|\beta )\mathcal{D}_{-}(\lambda _{1}|\beta ),
\label{8vREPCMR-AB-Left}
\end{eqnarray}%
and%
\begin{align}
\mathcal{B}_{-}(\lambda _{1}|\beta )\mathcal{D}_{-}(\lambda _{2}|\beta )& =%
\frac{\theta (\lambda _{1}-\lambda _{2}+\eta )\theta (\lambda _{2}+\lambda
_{1}-\eta )}{\theta (\lambda _{1}-\lambda _{2})\theta (\lambda _{1}+\lambda
_{2})}\mathcal{D}_{-}(\lambda _{2}|\beta +2)\mathcal{B}_{-}(\lambda
_{1}|\beta )  \notag \\
& -\frac{\theta (\lambda _{2}-\lambda _{1}+(1+\beta )\eta )\theta (\lambda
_{2}+\lambda _{1}-\eta )}{\theta (\lambda _{1}-\lambda _{2})\theta (\lambda
_{2}+\lambda _{1})\theta ((1+\beta )\eta )}\mathcal{D}_{-}(\lambda
_{1}|\beta +2)\mathcal{B}_{-}(\lambda _{2}|\beta )  \notag \\
& -\frac{\theta (\eta )\theta (\lambda _{2}+\lambda _{1}+\beta \eta )}{%
\theta (\lambda _{2}+\lambda _{1})\theta ((1+\beta )\eta )}\mathcal{A}%
_{-}(\lambda _{1}|\beta +2)\mathcal{B}_{-}(\lambda _{2}|\beta ),
\label{8vREPBD-DB-CMR}
\end{align}%
and%
\begin{align}
& \mathcal{A}_{-}(\lambda _{1}|\beta +2)\mathcal{A}_{-}(\lambda _{2}|\beta
+2)-\frac{\theta (\eta )\theta (\lambda _{1}+\lambda _{2}-\beta \eta )}{%
\theta (\lambda _{1}+\lambda _{2})\theta ((\beta -1)\eta )}\mathcal{B}%
_{-}(\lambda _{1}|\beta )\mathcal{C}_{-}(\lambda _{2}|\beta +2)\left.
=\right.  \notag \\
& \text{ \ \ \ \ \ \ \ \ \ \ \ \ \ \ \ \ \ \ \ \ \ \ \ \ \ \ \ \ }\mathcal{A}%
_{-}(\lambda _{2}|\beta +2)\mathcal{A}_{-}(\lambda _{1}|\beta +2)-\frac{%
\theta (\eta )\theta (\lambda _{1}+\lambda _{2}-\beta \eta )}{\theta
(\lambda _{1}+\lambda _{2})\theta ((\beta -1)\eta )}\mathcal{B}_{-}(\lambda
_{2}|\beta )\mathcal{C}_{-}(\lambda _{1}|\beta +2).  \label{8vREPCMR-AA-BC}
\end{align}
\end{lemma}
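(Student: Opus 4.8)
The plan is to derive all four commutation relations by pulling back the known relations for the ungauged reflection algebra generators $\mathcal{A}_-,\mathcal{B}_-,\mathcal{C}_-,\mathcal{D}_-$ through the Baxter gauge transformation, or—more efficiently—by working directly from the gauged reflection equation. First I would note that the matrix $\mathsf{U}_-(\lambda|\beta)\equiv\tilde G^{-1}(\lambda-\eta/2|\beta)\,\mathcal{U}_-(\lambda)\,\tilde G(\eta/2-\lambda|\beta)$ satisfies a dynamical reflection equation obtained from $(\ref{8vREPReflection Equation})$ by conjugating with $\tilde G$ on both sides and using the intertwining property $(\ref{8vREP8V-6VD-GT0})$ (together with its companion for $R_{21}$) to replace the two 8-vertex $R$-matrices $R^{\mathsf{(8V)}}_{12}(\lambda_1-\lambda_2)$ and $R^{\mathsf{(8V)}}_{21}(\lambda_1+\lambda_2-\eta)$ by the dynamical 6-vertex $R$-matrices $R^{\mathsf{(6VD)}}_{12}(\lambda_1-\lambda_2|\beta)$ and $R^{\mathsf{(6VD)}}_{21}(\lambda_1+\lambda_2-\eta|\beta)$, which is exactly what produces the $\beta$-shifts appearing on the left-hand sides of $(\ref{8vREPCRM-BB})$–$(\ref{8vREPCMR-AA-BC})$. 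Because $R^{\mathsf{(6VD)}}$ has the simple six-vertex-like shape $(\ref{8vREPop-L})$, the resulting quadratic relations among the entries of $\mathsf{U}_-(\lambda|\beta)$ are of the same structural type as those of an ordinary (dynamical) six-vertex reflection algebra, and $(\ref{8vREPCRM-BB})$–$(\ref{8vREPCMR-AA-BC})$ are precisely the matrix entries $(21)\otimes(21)$, $(11)\otimes(21)$, $(21)\otimes(22)$, and the antisymmetrized $(11)\otimes(11)$ components of that equation.

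Concretely, the key steps, in order, are: (i) write down the gauged reflection equation for $\mathsf{U}_-(\lambda|\beta)$, being careful about which dynamical parameter ($\beta$, $\beta\pm\sigma^z$, $\beta\pm2$) appears in each factor; (ii) introduce the rescaled operators $\mathcal{A}_-(\lambda|\beta)=r(\lambda)\widehat{\mathcal{A}}_-(\lambda|\beta)$ etc., so that the $r(\lambda)r(\mu)$ prefactors from the product are absorbed and the coefficient functions collapse to the clean $\theta$-ratios displayed in $(\ref{8vREPCMR-AB-Left})$, $(\ref{8vREPBD-DB-CMR})$, $(\ref{8vREPCMR-AA-BC})$; (iii) read off the individual operator identities by extracting fixed pairs of matrix indices on both sides, using the explicit $\mathsf{a},\mathsf{b},\mathsf{c}$ of $R^{\mathsf{(6VD)}}$; (iv) simplify the elliptic coefficient functions using standard three-term (Riemann) theta identities to match the stated forms. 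For $(\ref{8vREPCMR-AA-BC})$ one additionally uses the quantum-determinant relations $(\ref{8vREPgauge-q-det-A})$–$(\ref{8vREPgauge-q-det-D})$ from the preceding proposition to rewrite certain $\mathcal{D}_-\mathcal{D}_-$ or $\mathcal{A}_-\mathcal{A}_-$ combinations, which is what makes the $\mathcal{A}_-\mathcal{A}_-$ relation close into the symmetric form shown (the $\mathcal{B}_-\mathcal{C}_-$ correction term is exactly the piece that restores symmetry under $\lambda_1\leftrightarrow\lambda_2$).

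Alternatively, and perhaps more transparently, each relation can be obtained by substituting the definition $\widehat{\mathcal{O}}_-(\lambda|\beta)=(\tilde G^{-1}\mathcal{U}_-\tilde G)_{ij}$ directly into the left-hand side, expanding everything in the ungauged $\mathcal{A}_-,\mathcal{B}_-,\mathcal{C}_-,\mathcal{D}_-$, applying the ungauged 8-vertex reflection-algebra commutation relations (which follow from $(\ref{8vREPReflection Equation})$), and then recollecting into gauged operators using the covector/vector orthogonality and completeness relations $(\ref{8vREPId-decomp-bar})$–$(\ref{8vREPId-decomp-tilde})$; the parity relations $(\ref{8vREPSym-A-D-})$–$(\ref{8vREPSym-B-C-})$ and $(\ref{8vREPparity-m-1})$–$(\ref{8vREPparity-m-2})$ are invoked wherever a $\mathcal{U}_-(-\lambda)$ or $\mathcal{D}_-(-\lambda|\beta)$ term needs to be re-expressed. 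I expect the main obstacle to be purely computational bookkeeping: correctly tracking the dynamical $\beta$-shifts through each intertwining move (an off-by-one in $\beta$ is easy to make and fatal), and then reducing the raw elliptic coefficients—which come out as sums of products of six theta functions—to the compact three-term quotients in the statement via the appropriate addition formulas. No conceptually new input beyond the gauge transformation $(\ref{8vREP8V-6VD-GT0})$, the reflection equation $(\ref{8vREPReflection Equation})$, the quantum determinant of the previous proposition, and classical theta-function identities is needed.
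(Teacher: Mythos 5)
Your proposal follows essentially the same route as the paper, whose proof consists of citing \cite{8vREPFHSY96} for the first two relations and stating that the remaining ones ``can be derived similarly by using the Baxter's gauge transformation properties and the reflection equation''---precisely the strategy (gauged reflection equation via the vertex--face intertwining relations, extraction of matrix entries, bookkeeping of the dynamical $\beta$-shifts) that you spell out in detail. Your sketch is a correct and faithful expansion of the argument the paper leaves implicit.
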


\begin{proof}
The first two commutation relations were first presented in the paper \cite{8vREPFHSY96} and the others can be derived similarly by using the
Baxter's gauge transformation properties and the reflection equation.
\end{proof}

Note that these commutation relations for the gauge transformed generators of
the 8-vertex reflection algebra exactly coincides with those of the gauge
transformed 6-vertex ones once we transform the function $\theta ()$ in
sinh(). This observation and the remark that the first coefficients both in $%
\left( \ref{8vREPCMR-AB-Left}\right) $ and in $\left( \ref{8vREPBD-DB-CMR}\right) $ do
not depend from the gauge parameters and coincide (under the same elliptic to
trigonometric transformation) with those appearing in commutation relations
of the original 6-vertex reflection algebra are at the basis of the strong
similarity in all the SOV representation of reflection algebra generators.
This will appear clearly comparing the SOV representation of the gauge
transformed generators in the 8-vertex reflection algebra here derived with
those of the 6-vertex reflection algebra in the gauged \cite{8vREPFalKN13} and ungauged \cite{8vREPNic12b}
cases.

\subsubsection{$\protect\beta $-parity relations}

\begin{lemma}
\label{8vREPb-symmetry}The gauge transformed generators satisfy the following
symmetry:%
\begin{equation}
\mathcal{U}_{-}(\lambda |-\beta +2)=\sigma ^{x}\mathcal{U}_{-}(\lambda
|\beta )\sigma ^{x}  \label{8vREPU-gauge-symm}
\end{equation}%
which in terms of matrix elements reads:%
\begin{equation}
\mathcal{B}_{-}(\lambda |\beta )=\mathcal{C}_{-}(\lambda |-\beta +2),\text{
\ \ }\mathcal{A}_{-}(\lambda |\beta )=\mathcal{D}_{-}(\lambda |-\beta +2).
\label{8vREPB-to-C-identity}
\end{equation}
\end{lemma}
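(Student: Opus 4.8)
The plan is to derive the symmetry $\left(\ref{8vREPU-gauge-symm}\right)$ directly from the definition of $\mathsf{U}_{-}(\lambda|\beta)$ in terms of the intertwining vectors, using the elementary observation that the $\beta\mapsto-\beta$ reflection interchanges the roles of the $X$ and $Y$ covectors/vectors. Recall $\mathsf{U}_{-}(\lambda|\beta)=\tilde{G}^{-1}(\lambda-\eta/2|\beta)\,\mathcal{U}_{-}(\lambda)\,\tilde{G}(\eta/2-\lambda|\beta)$, where $\tilde{G}(\lambda|\beta)=(X_{\beta+1}(\lambda),Y_{\beta-1}(\lambda))$ and $\tilde{G}^{-1}(\lambda|\beta)$ has rows $\tilde{Y}_{\beta-1}(\lambda)$ and $\tilde{X}_{\beta+1}(\lambda)$. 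First I would compute the gauge matrices at $-\beta+2$: from the definitions, $\tilde{G}(\lambda|-\beta+2)=(X_{-\beta+3}(\lambda),Y_{-\beta+1}(\lambda))$ and, crucially, from $X_{\gamma}(\lambda)=\bigl(\begin{smallmatrix}\theta_2(\lambda+(\alpha+\gamma)\eta|2\omega)\\ \theta_3(\lambda+(\alpha+\gamma)\eta|2\omega)\end{smallmatrix}\bigr)$ and the analogous formula for $Y_\gamma$, one reads off $Y_{\beta-1}(\lambda)=X_{-(\beta-1)}(\lambda)$ and $X_{\beta+1}(\lambda)=Y_{-(\beta+1)}(\lambda)$ — i.e. $X$ and $Y$ are exchanged under $\beta\to-\beta$ together with the appropriate index shift.

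Assembling this, I would show $\tilde{G}(\lambda|-\beta+2)=(X_{-\beta+3}(\lambda),Y_{-\beta+1}(\lambda))=(Y_{\beta-1}(\lambda),X_{\beta+1}(\lambda))=\tilde{G}(\lambda|\beta)\,\sigma^{x}$, since right multiplication by $\sigma^{x}$ swaps the two columns. Taking inverses gives $\tilde{G}^{-1}(\lambda|-\beta+2)=\sigma^{x}\,\tilde{G}^{-1}(\lambda|\beta)$. Then
\begin{align*}
\mathsf{U}_{-}(\lambda|-\beta+2)
&=\tilde{G}^{-1}(\lambda-\eta/2|-\beta+2)\,\mathcal{U}_{-}(\lambda)\,\tilde{G}(\eta/2-\lambda|-\beta+2)\\
&=\sigma^{x}\,\tilde{G}^{-1}(\lambda-\eta/2|\beta)\,\mathcal{U}_{-}(\lambda)\,\tilde{G}(\eta/2-\lambda|\beta)\,\sigma^{x}
=\sigma^{x}\,\mathsf{U}_{-}(\lambda|\beta)\,\sigma^{x},
\end{align*}
which is $\left(\ref{8vREPU-gauge-symm}\right)$. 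Finally, reading off matrix elements: conjugation by $\sigma^{x}$ exchanges the $(1,1)$ and $(2,2)$ entries and exchanges the off-diagonal entries, so, keeping track of the index conventions $\mathsf{U}_{-}(\lambda|\beta)_{11}=\widehat{\mathcal{A}}_{-}(\lambda|\beta+2)$, $\mathsf{U}_{-}(\lambda|\beta)_{22}=\widehat{\mathcal{D}}_{-}(\lambda|\beta)$, etc., one obtains $\widehat{\mathcal{A}}_{-}(\lambda|\beta)=\widehat{\mathcal{D}}_{-}(\lambda|-\beta+2)$ and $\widehat{\mathcal{B}}_{-}(\lambda|\beta)=\widehat{\mathcal{C}}_{-}(\lambda|-\beta+2)$; multiplying through by the common scalar $r(\lambda)$ (which is $\beta$-independent) promotes these to the rescaled generators and yields $\left(\ref{8vREPB-to-C-identity}\right)$.

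The only subtle point — and the step I would be most careful about — is bookkeeping the index shifts: one must check that the "$+2$" built into $\mathsf{U}_{-}$'s definition (the $\widehat{\mathcal{A}}_{-}(\lambda|\beta+2)$ vs $\widehat{\mathcal{B}}_{-}(\lambda|\beta)$ asymmetry) is exactly compensated by the shift $\beta\to-\beta+2$ rather than the naive $\beta\to-\beta$, which is why the statement reads $-\beta+2$ and not $-\beta$. Concretely, $\tilde{G}(\lambda|\beta)$ involves indices $\beta+1$ and $\beta-1$; under $\beta\mapsto-\beta+2$ these become $-\beta+3$ and $-\beta+1$, i.e. $-(\beta-1)$ and $-(\beta+1)$ respectively, which is precisely the pairing needed for the $X\leftrightarrow Y$ swap to close up. No genuine computation with the reflection equation or theta-function identities is required; the result is a pure consequence of the structure of Baxter's intertwining vectors and the conjugation $\sigma^{x}(\cdot)\sigma^{x}$.
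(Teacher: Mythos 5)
Your overall strategy is the same one the paper uses (exploit $Y_{\gamma}(\lambda)=X_{-\gamma}(\lambda)$ and its tilded analogue, i.e.\ the $X\leftrightarrow Y$ swap under sign reversal of the index), but the matrix-level bookkeeping you single out as the delicate step is done incorrectly, and as written the argument proves an identity that is off by a shift of $2$ from the lemma. Concretely, your claim $\tilde{G}(\lambda|-\beta+2)=\tilde{G}(\lambda|\beta)\,\sigma^{x}$ is false: since $\tilde{G}(\lambda|\beta)=(X_{\beta+1}(\lambda),Y_{\beta-1}(\lambda))$, one has $\tilde{G}(\lambda|-\beta+2)=(X_{-\beta+3}(\lambda),Y_{-\beta+1}(\lambda))=(Y_{\beta-3}(\lambda),X_{\beta-1}(\lambda))$, which is \emph{not} $(Y_{\beta-1}(\lambda),X_{\beta+1}(\lambda))$; under $\beta\mapsto-\beta+2$ the indices $\beta+1,\beta-1$ go to $-(\beta-3),-(\beta-1)$, not to $-(\beta-1),-(\beta+1)$ as you assert in your last paragraph. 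The correct column-swap identity is obtained with the \emph{naive} reflection, $\tilde{G}(\lambda|-\beta)=(X_{-\beta+1}(\lambda),Y_{-\beta-1}(\lambda))=(Y_{\beta-1}(\lambda),X_{\beta+1}(\lambda))=\tilde{G}(\lambda|\beta)\,\sigma^{x}$, whence $\tilde{G}^{-1}(\lambda|-\beta)=\sigma^{x}\tilde{G}^{-1}(\lambda|\beta)$ and
\begin{equation*}
\mathsf{U}_{-}(\lambda|-\beta)=\sigma^{x}\,\mathsf{U}_{-}(\lambda|\beta)\,\sigma^{x}.
\end{equation*}
The extra ``$+2$'' in \eqref{8vREPB-to-C-identity} does \emph{not} come from shifting the gauge parameter in $\tilde{G}$; it comes solely from the asymmetric labelling of the entries of $\mathsf{U}_{-}$ (first column at $\beta+2$, second at $\beta$). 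Reading off the entries of the corrected identity gives $\widehat{\mathcal{B}}_{-}(\lambda|-\beta)=\widehat{\mathcal{C}}_{-}(\lambda|\beta+2)$ and $\widehat{\mathcal{A}}_{-}(\lambda|-\beta+2)=\widehat{\mathcal{D}}_{-}(\lambda|\beta)$, i.e.\ exactly \eqref{8vREPB-to-C-identity} after renaming. Had you instead used your stated identity $\mathsf{U}_{-}(\lambda|-\beta+2)=\sigma^{x}\mathsf{U}_{-}(\lambda|\beta)\sigma^{x}$ literally, its $(1,2)$ entry would read $\widehat{\mathcal{B}}_{-}(\lambda|\beta)=\widehat{\mathcal{C}}_{-}(\lambda|-\beta+4)$, which is not the lemma; so the final formulas you quote do not actually follow from your intermediate step.

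The repair is one line (replace $-\beta+2$ by $-\beta$ at the level of $\tilde{G}$ and $\mathsf{U}_{-}$, then let the entry conventions produce the shift), and with that correction your matrix-conjugation packaging is a perfectly good, and slightly more structural, version of the paper's argument. The paper itself avoids the matrix repackaging altogether: it substitutes $Y_{\beta-1}=X_{(-\beta+2)-1}$ and $\tilde{Y}_{\beta-1}=\tilde{X}_{(-\beta+2)-1}$ directly into the defining sandwiches $\widehat{\mathcal{B}}_{-}(\lambda|\beta)=\tilde{Y}_{\beta-1}(\lambda-\eta/2)\mathcal{U}_{-}(\lambda)Y_{\beta-1}(\eta/2-\lambda)$, etc., which sidesteps precisely the index-shift pitfall you fell into. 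Note also that the tilded identity $\tilde{Y}_{\gamma}=\tilde{X}_{-\gamma}$ uses the oddness of $\theta$ to absorb a sign from $\theta((\gamma\pm1)\eta)$ in the normalizations; in your route this is taken care of automatically by inverting the $\tilde{G}$ relation, which is fine, but it is worth being aware that it is not purely a relabelling of theta arguments.
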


\begin{proof}
The proof is a trivial consequence of the following simple identities:%
\begin{equation}
\tilde{Y}_{\beta }(\lambda )=\tilde{X}_{-\beta }(\lambda ),\text{ \ \ }%
Y_{\beta }(\lambda )=X_{-\beta }(\lambda );
\end{equation}%
e.g. we have that:%
\begin{eqnarray}
\widehat{\mathcal{B}}_{-}(\lambda |\beta ) &=&\tilde{Y}_{\beta -1}(\lambda
-\eta /2)\mathcal{U}_{-}(\lambda )Y_{\beta -1}(\eta /2-\lambda )  \notag \\
&=&\tilde{X}_{(-\beta +2)-1}(\lambda -\eta /2)\mathcal{U}_{-}(\lambda
)X_{(-\beta +2)-1}(\eta /2-\lambda )  \notag \\
&=&\widehat{\mathcal{C}}_{-}(\lambda |-\beta +2).
\end{eqnarray}
\end{proof}

\subsection{Transfer matrix representations in terms of gauge transformed boundary
operators}

Let us introduce the vectors:%
\begin{eqnarray}
\hat{Y}_{\beta -1}(\lambda ) &=&\frac{\theta ((2+\beta )\eta )Y_{\beta
-1}(\lambda )}{\theta ((1+\beta )\eta )\theta (\lambda +(\alpha +2)\eta
)\theta _{4}(2\lambda |2\omega )},\text{ \ \ }\underline{Y}_{\beta }(\lambda
)=\frac{\bar{Y}_{\beta }(\lambda )}{\theta _{4}(2\lambda |2\omega )\theta
\left( -\lambda +(\alpha +1)\eta \right) },\text{ } \\
\hat{X}_{\beta +3}(\lambda ) &=&\frac{\theta (\beta \eta )X_{\beta
+3}(\lambda )}{\theta ((1+\beta )\eta )\theta (\lambda +(\alpha +2)\eta
)\theta _{4}(2\lambda |2\omega )},\text{ \ \ }\underline{X}_{\beta }(\lambda
)=\frac{\bar{X}_{\beta }(\lambda )}{\theta _{4}(2\lambda |2\omega )\theta
\left( -\lambda +(\alpha +1)\eta \right) },
\end{eqnarray}%
and the following two gauge transformations on the boundary matrix $K_{+}$:%
\begin{equation}
\begin{array}{ll}
K_{+}^{(L)}(\lambda |\beta )_{11}\equiv \tilde{Y}_{\beta -1}(\eta /2-\lambda
)K_{+}(\lambda )\hat{X}_{\beta +3}(\lambda -\eta /2), & K_{+}^{(L)}(\lambda
|\beta )_{12}\equiv \tilde{Y}_{\beta +1}(\eta /2-\lambda )K_{+}(\lambda )%
\hat{Y}_{\beta -1}(\lambda -\eta /2), \\ 
K_{+}^{(L)}(\lambda |\beta )_{21}\equiv \tilde{X}_{\beta +1}(\eta /2-\lambda
)K_{+}(\lambda )\hat{X}_{\beta +3}(\lambda -\eta /2), & K_{+}^{(L)}(\lambda
|\beta )_{22}\equiv \tilde{X}_{\beta +3}(\eta /2-\lambda )K_{+}(\lambda )%
\hat{Y}_{\beta -1}(\lambda -\eta /2),%
\end{array}%
\end{equation}%
and%
\begin{equation}
\begin{array}{ll}
K_{+}^{(R)}(\lambda |\beta )_{11}\equiv \underline{Y}_{\beta +1}(\eta
/2-\lambda )K_{+}(\lambda )X_{\beta +1}(\lambda -\eta /2), & 
K_{+}^{(R)}(\lambda |\beta )_{12}\equiv \underline{Y}_{\beta +1}(\eta
/2-\lambda )K_{+}(\lambda )Y_{\beta -1}(\lambda -\eta /2), \\ 
K_{+}^{(R)}(\lambda |\beta )_{21}\equiv \underline{X}_{\beta +1}(\eta
/2-\lambda )K_{+}(\lambda )X_{\beta +3}(\lambda -\eta /2), & 
K_{+}^{(R)}(\lambda |\beta )_{22}\equiv \underline{X}_{\beta +1}(\eta
/2-\lambda )K_{+}(\lambda )Y_{\beta +1}(\lambda -\eta /2),%
\end{array}%
\end{equation}%
then the following proposition holds:

\begin{proposition}
In terms of the gauge transformed reflection algebra generators the boundary
transfer matrix $\mathcal{T}(\lambda )$ admit the decompositions:%
\begin{align}
\mathcal{T}(\lambda )& =K_{+}^{(L)}(\lambda |\beta )_{11}\mathcal{A}%
_{-}(\lambda |\beta +2)+K_{+}^{(L)}(\lambda |\beta )_{21}\mathcal{B}%
_{-}(\lambda |\beta )  \notag \\
& +K_{+}^{(L)}(\lambda |\beta )_{12}\mathcal{C}_{-}(\lambda |\beta
+4)+K_{+}^{(L)}(\lambda |\beta )_{22}\mathcal{D}_{-}(\lambda |\beta +2),
\label{8vREPT-decomp-L}
\end{align}%
and%
\begin{align}
\mathcal{T}(\lambda )& =K_{+}^{(R)}(\lambda |\beta )_{11}\mathcal{A}%
_{-}(\lambda |\beta +2)+K_{+}^{(R)}(\lambda |\beta )_{21}\mathcal{B}%
_{-}(\lambda |\beta +2)  \notag \\
& +K_{+}^{(R)}(\lambda |\beta )_{12}\mathcal{C}_{-}(\lambda |\beta
+2)+K_{+}^{(R)}(\lambda |\beta )_{22}\mathcal{D}_{-}(\lambda |\beta +2).
\label{8vREPT-decomp-R}
\end{align}
\end{proposition}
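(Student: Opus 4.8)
The statement is the 8-vertex boundary analogue of the transfer-matrix decompositions used in the 6-vertex SOV analyses of \cite{8vREPNic12b,8vREPFalKN13}, and the plan is to derive it along the same lines: start from the trace representation $\mathcal{T}(\lambda)=\text{tr}_{0}\{K_{+}(\lambda)\,\mathcal{U}_{-}(\lambda)\}$ of \rf{8vREPtransfer} and expand the auxiliary trace after inserting resolutions of the identity in $\mathrm{R}_{0}$ built out of Baxter's intertwining (co)vectors. For the left decomposition \rf{8vREPT-decomp-L} I would insert a completeness relation of type \rf{8vREPId-decomp-tilde} between $K_{+}(\lambda)$ and $\mathcal{U}_{-}(\lambda)$ and a second one between $\mathcal{U}_{-}(\lambda)$ and $K_{+}(\lambda)$ (closing the loop by cyclicity of the trace), choosing the gauge parameters so that the four resulting contractions $w\,\mathcal{U}_{-}(\lambda)\,v$ of $\mathcal{U}_{-}(\lambda)$ are exactly the gauge-transformed generators $\widehat{\mathcal{A}}_{-},\widehat{\mathcal{B}}_{-},\widehat{\mathcal{C}}_{-},\widehat{\mathcal{D}}_{-}$ at the $\beta$-values prescribed by the definition of $\mathsf{U}_{-}(\lambda|\beta)$; this already fixes the shifts $\beta+2$ on $\mathcal{A}_{-},\mathcal{D}_{-}$, $\beta$ on $\mathcal{B}_{-}$, and $\beta+4$ on $\mathcal{C}_{-}$, the last because that contraction is forced to sit between the $\tilde{X}$-covector and the $X$-vector evaluated at the shifted gauge parameter.

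The four scalar factors left over are contractions of $K_{+}(\lambda)$ with the complementary intertwining (co)vectors, and the remaining task is to identify them with the entries $K_{+}^{(L)}(\lambda|\beta)_{ij}$. This is where the elliptic bookkeeping lives. The rescaling $\mathcal{A}_{-}(\lambda|\beta)=r(\lambda)\widehat{\mathcal{A}}_{-}(\lambda|\beta)$, with $r(\lambda)=\theta_{4}(2\lambda-\eta|2\omega)\theta(\lambda+(\alpha+1/2)\eta)$, absorbs precisely the $\theta_{4}(2\lambda-\eta|2\omega)$ and $\theta$ factors that appear in the normalisations of $\hat{X}_{\beta+3},\hat{Y}_{\beta-1}$ relative to the plain $X,Y$; one then checks, using the quasi-periodicity of the $\theta$-functions under $\lambda\to\lambda+\eta$ and $\lambda\to\lambda+2\omega$ and the explicit form of the elliptic boundary matrix $K_{+}(\lambda)=K(\lambda+\eta/2;\zeta_{+},\kappa_{+},\tau_{+})$, that the surviving $\theta((1+\beta)\eta),\theta(\beta\eta),\theta((2+\beta)\eta)$ and $\theta(\lambda+(\alpha+2)\eta)$ factors combine with the denominators $\theta(\lambda+\alpha\eta)\theta(\beta\eta)$ of $\bar{X}_{\beta},\bar{Y}_{\beta}$ to yield exactly $K_{+}^{(L)}(\lambda|\beta)_{11},\dots,K_{+}^{(L)}(\lambda|\beta)_{22}$. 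The right decomposition \rf{8vREPT-decomp-R} follows identically, using the $\bar{G}$-type completeness relation \rf{8vREPId-decomp-bar} on the right of $\mathcal{U}_{-}(\lambda)$ and the normalisations $\underline{X}_{\beta},\underline{Y}_{\beta}$ in place of $\hat{X}_{\beta+3},\hat{Y}_{\beta-1}$; the staggered gauge parameters then collapse to the uniform shift $\beta+2$ on all four generators.

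The main obstacle I anticipate is exactly the matching of the four scalar coefficients: the operator part of the identity is no more than the trace expansion -- no commutation relation from the preceding lemma is needed -- but the bookkeeping is delicate, since one must keep track simultaneously of the $\beta$-shifts hard-wired into $\mathsf{U}_{-}(\lambda|\beta)$, the two gauge parameters $\beta$ and $\beta+2$ forced on the left by the appearance of $\tilde{Y}_{\beta-1},\tilde{X}_{\beta+1}$ versus $\tilde{Y}_{\beta+1},\tilde{X}_{\beta+3}$, and all the $\theta$-prefactors in $r(\lambda)$, in $\hat{X}_{\beta+3},\hat{Y}_{\beta-1},\underline{X}_{\beta},\underline{Y}_{\beta}$, and in $\bar{X}_{\beta},\bar{Y}_{\beta}$, and then show by $\theta$-quasi-periodicity that everything cancels. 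A subsidiary check is that the combinations of completeness relations used on the two sides of $\mathcal{U}_{-}(\lambda)$ genuinely reconstruct the identity on $\mathrm{R}_{0}$, which follows from \rf{8vREPId-decomp-bar}--\rf{8vREPId-decomp-tilde} together with the shift identities $X_{\beta+k}(\lambda)=X_{\beta}(\lambda+k\eta)$ and $Y_{\beta-k}(\lambda)=Y_{\beta}(\lambda+k\eta)$.
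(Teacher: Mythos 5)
Your proposal is correct and is essentially the paper's own argument run in the opposite direction: the paper starts from the four products, uses the decompositions of the identity $\rf{8vREPId-decomp-tilde}$ (at gauge parameters $\beta$ and $\beta+2$) together with the rescaled relation $\rf{8vREPId-decomp-tilde-hat}$ (respectively $\rf{8vREPId-decomp-bar}$ for the right case) and cyclicity of the auxiliary trace, with $r(\lambda)$ compensating the scalar normalization, to collapse everything to $\text{tr}_{0}\{\mathcal{U}_{-}(\lambda)K_{+}(\lambda)\}=\mathcal{T}(\lambda)$. The only superfluous element in your plan is the anticipated quasi-periodicity matching of the scalar coefficients: the entries $K_{+}^{(L/R)}(\lambda|\beta)_{ij}$ are \emph{defined} as the very contractions of $K_{+}(\lambda)$ with the gauge (co)vectors produced by the insertions, so no further elliptic identification is required beyond verifying $\rf{8vREPId-decomp-tilde-hat}$.
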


\begin{proof}[Proof]
To prove the two decompositions of the transfer matrix we first remark that
the following identities hold:%
\begin{equation}
\left( 
\begin{array}{c}
\tilde{Y}_{\beta -1}(\lambda ) \\ 
\tilde{X}_{\beta +3}(\lambda )%
\end{array}%
\right) \left( 
\begin{array}{cc}
\hat{X}_{\beta +3}(\lambda ) & \hat{Y}_{\beta -1}(\lambda )%
\end{array}%
\right) \left. =\right. \frac{\left( 
\begin{array}{cc}
1 & 0 \\ 
0 & 1%
\end{array}%
\right) }{\theta \left( \lambda +(\alpha +1)\eta \right) \theta
_{4}(2\lambda |2\omega )}\text{,}  \label{8vREPId-decomp-tilde-hat}
\end{equation}%
and%
\begin{equation}
\hat{X}_{\beta +3}(\lambda )\bar{Y}_{\beta -1}(\lambda )+\hat{Y}_{\beta
-1}(\lambda )\bar{X}_{\beta +3}(\lambda )\left. =\right. \frac{\left( 
\begin{array}{cc}
1 & 0 \\ 
0 & 1%
\end{array}%
\right) }{\theta \left( \lambda +(\alpha +1)\eta \right) \theta
_{4}(2\lambda |2\omega )}.
\end{equation}%
The formulae $\left( \ref{8vREPId-decomp-tilde}\right) $ and $\left( \ref{8vREPId-decomp-tilde-hat}\right) $ imply the following chain of identities:%
\begin{align}
& \mathcal{A}_{-}(\lambda |\beta +2)K_{+}^{(L)}(\lambda |\beta )_{11}+%
\mathcal{B}_{-}(\lambda |\beta )K_{+}^{(L)}(\lambda |\beta )_{21}+\mathcal{D}%
_{-}(\lambda |\beta +2)K_{+}^{(L)}(\lambda |\beta )_{22}+\mathcal{C}%
_{-}(\lambda |\beta +4)K_{+}^{(L)}(\lambda |\beta )_{12}  \notag \\
& \left. =\right. \frac{\tilde{Y}_{\beta -1}(\lambda -\eta /2)\mathcal{U}%
_{-}(\lambda )K_{+}(\lambda )\hat{X}_{\beta +3}(\lambda -\eta /2)+\tilde{X}%
_{\beta +3}(\lambda -\eta /2)\mathcal{U}_{-}(\lambda )K_{+}(\lambda )\hat{Y}%
_{\beta -1}(\lambda -\eta /2)}{\left( \theta \left( \lambda +(\alpha
+1/2)\eta \right) \theta _{4}(2\lambda -\eta |2\omega )\right) ^{-1}}  \notag
\\
& \left. =\right. \frac{\text{tr}_{0}\{\left( 
\begin{array}{c}
\tilde{Y}_{\beta -1}(\lambda -\eta /2) \\ 
\tilde{X}_{\beta +3}(\lambda -\eta /2)%
\end{array}%
\right) \mathcal{U}_{-}(\lambda )K_{+}(\lambda )\left( 
\begin{array}{cc}
\hat{X}_{\beta +3}(\lambda -\eta /2) & \hat{Y}_{\beta -1}(\lambda -\eta /2)%
\end{array}%
\right) \}}{\left( \theta \left( \lambda +(\alpha +1/2)\eta \right) \theta
_{4}(2\lambda -\eta |2\omega )\right) ^{-1}}  \notag \\
& \left. =\right. \frac{\text{tr}_{0}\{\left( 
\begin{array}{cc}
\hat{X}_{\beta +3}(\lambda -\eta /2) & \hat{Y}_{\beta -1}(\lambda -\eta /2)%
\end{array}%
\right) \left( 
\begin{array}{c}
\tilde{Y}_{\beta -1}(\lambda -\eta /2) \\ 
\tilde{X}_{\beta +3}(\lambda -\eta /2)%
\end{array}%
\right) \mathcal{U}_{-}(\lambda )K_{+}(\lambda )\}}{\left( \theta \left(
\lambda +(\alpha +1/2)\eta \right) \theta _{4}(2\lambda -\eta |2\omega
)\right) ^{-1}}  \notag \\
& \left. =\right. \text{tr}_{0}\{\mathcal{U}_{-}(\lambda )K_{+}(\lambda
)\}\left. =\right. \mathcal{T}(\lambda ).
\end{align}%
Similarly, the formulae $\left( \ref{8vREPId-decomp-tilde}\right) $ and $\left( %
\ref{8vREPId-decomp-bar}\right) $ imply the following chain of identities:%
\begin{align}
& K_{+}^{(R)}(\lambda |\beta )_{11}\mathcal{A}_{-}(\lambda |\beta
+2)+K_{+}^{(R)}(\lambda |\beta )_{12}\mathcal{C}_{-}(\lambda |\beta
+2)+K_{+}^{(R)}(\lambda |\beta )_{22}\mathcal{D}_{-}(\lambda |\beta
+2)+K_{+}^{(R)}(\lambda |\beta )_{21}\mathcal{B}_{-}(\lambda |\beta +2) 
\notag \\
& \left. =\right. \bar{Y}_{\beta +1}(\eta /2-\lambda )K_{+}(\lambda )%
\mathcal{U}_{-}(\lambda )X_{\beta +1}(\eta /2-\lambda )+\bar{X}_{\beta
+1}(\eta /2-\lambda )\mathcal{U}_{-}(\lambda )K_{+}(\lambda )Y_{\beta
+1}(\eta /2-\lambda )  \notag \\
& \left. =\right. \text{tr}_{0}\{\left( 
\begin{array}{c}
\bar{Y}_{\beta +1}(\eta /2-\lambda ) \\ 
\bar{X}_{\beta +1}(\eta /2-\lambda )%
\end{array}%
\right) K_{+}(\lambda )\mathcal{U}_{-}(\lambda )\left( 
\begin{array}{cc}
X_{\beta +1}(\eta /2-\lambda ) & Y_{\beta +1}(\eta /2-\lambda )%
\end{array}%
\right) \}  \notag \\
& \left. =\right. \text{tr}_{0}\{\left( 
\begin{array}{cc}
X_{\beta +1}(\eta /2-\lambda ) & Y_{\beta +1}(\eta /2-\lambda )%
\end{array}%
\right) \left( 
\begin{array}{c}
\bar{Y}_{\beta +1}(\eta /2-\lambda ) \\ 
\bar{X}_{\beta +1}(\eta /2-\lambda )%
\end{array}%
\right) K_{+}(\lambda )\mathcal{U}_{-}(\lambda )\}  \notag \\
& \left. =\right. \text{tr}_{0}\{K_{+}(\lambda )\mathcal{U}_{-}(\lambda
)\}\left. =\right. \mathcal{T}(\lambda ).
\end{align}
\end{proof}

\begin{proposition}
The following two explicitly even in $\lambda $ representations of the
transfer matrix hold:%
\begin{align}
\mathcal{T}(\lambda )& =\mathsf{a}_{+}(\lambda )\mathcal{A}_{-}(\lambda
|\beta +2)+\mathsf{a}_{+}(-\lambda )\mathcal{A}_{-}(-\lambda |\beta
+2)+K_{+}^{(L)}(\lambda |\beta )_{12}\mathcal{C}_{-}(\lambda |\beta
+4)+K_{+}^{(L)}(\lambda |\beta )_{21}\mathcal{B}_{-}(\lambda |\beta ), \\
\mathcal{T}(\lambda )& =\mathsf{d}_{+}(\lambda )\mathcal{D}_{-}(\lambda
|\beta +2)+\mathsf{d}_{+}(-\lambda )\mathcal{D}_{-}(-\lambda |\beta
+2)+K_{+}^{(R)}(\lambda |\beta )_{12}\mathcal{C}_{-}(\lambda |\beta
+2)+K_{+}^{(R)}(\lambda |\beta )_{21}\mathcal{B}_{-}(\lambda |\beta +2),
\end{align}%
where we have defined:%
\begin{equation}
\mathsf{a}_{+}(\lambda )=\frac{\theta \left( 2\lambda +\eta \right) \theta
\left( (\beta +1)\eta \right) }{\theta \left( 2\lambda \right) \theta \left(
(\beta +2)\eta \right) }K_{+}^{(L)}(-\lambda |\beta )_{22},\text{ }\mathsf{d}%
_{+}(\lambda )=\frac{\theta \left( 2\lambda +\eta \right) \theta \left(
(\beta +1)\eta \right) }{\theta \left( 2\lambda \right) \theta \left( \beta
\eta \right) }K_{+}^{(R)}(-\lambda |\beta )_{11}.
\end{equation}
\end{proposition}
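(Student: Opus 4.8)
The plan is to promote the two decompositions $(\ref{8vREPT-decomp-L})$ and $(\ref{8vREPT-decomp-R})$ of the previous proposition into the manifestly even form by trading $\mathcal{D}_{-}$ against $\mathcal{A}_{-}$ (for the first representation) and $\mathcal{A}_{-}$ against $\mathcal{D}_{-}$ (for the second), using the interrelated parity relations $(\ref{8vREPparity-m-1})$--$(\ref{8vREPparity-m-3})$. For the first identity I would start from $(\ref{8vREPT-decomp-L})$, whose $\mathcal{B}_{-}$ and $\mathcal{C}_{-}$ terms already coincide with those in the target, and substitute $(\ref{8vREPparity-m-3})$ with $\beta\to\beta+2$, i.e.
\[
\mathcal{D}_{-}(\lambda|\beta+2)=\frac{\theta(\eta)\theta(2\lambda+(\beta+1)\eta)}{\theta(2\lambda)\theta((\beta+2)\eta)}\mathcal{A}_{-}(\lambda|\beta+2)+\frac{\theta(2\lambda-\eta)\theta((\beta+1)\eta)}{\theta(2\lambda)\theta((\beta+2)\eta)}\mathcal{A}_{-}(-\lambda|\beta+2),
\]
into the last term of $(\ref{8vREPT-decomp-L})$. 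Collecting, the coefficient of $\mathcal{A}_{-}(-\lambda|\beta+2)$ becomes $K_{+}^{(L)}(\lambda|\beta)_{22}\,\theta(2\lambda-\eta)\theta((\beta+1)\eta)/(\theta(2\lambda)\theta((\beta+2)\eta))$, which by the oddness of $\theta\equiv\theta_{1}(\cdot|\omega)$ is precisely $\mathsf{a}_{+}(-\lambda)$ as defined in the statement.

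It then remains only to check that the coefficient of $\mathcal{A}_{-}(\lambda|\beta+2)$, namely $K_{+}^{(L)}(\lambda|\beta)_{11}+K_{+}^{(L)}(\lambda|\beta)_{22}\,\theta(\eta)\theta(2\lambda+(\beta+1)\eta)/(\theta(2\lambda)\theta((\beta+2)\eta))$, equals $\mathsf{a}_{+}(\lambda)$, i.e. the purely scalar identity
\[
\theta(2\lambda)\theta((\beta+2)\eta)\,K_{+}^{(L)}(\lambda|\beta)_{11}+\theta(\eta)\theta(2\lambda+(\beta+1)\eta)\,K_{+}^{(L)}(\lambda|\beta)_{22}=\theta(2\lambda+\eta)\theta((\beta+1)\eta)\,K_{+}^{(L)}(-\lambda|\beta)_{22}.
\]
This is the $\mathsf{N}=0$ analogue of $(\ref{8vREPparity-m-3})$: the Baxter-dressed matrix $K_{+}^{(L)}(\lambda|\beta)$ is a gauge transform of the scalar reflection-equation solution $K_{+}$, so it satisfies the same parity relations as $\mathcal{U}_{-}(\lambda|\beta)$ in the $\mathsf{N}=0$ case. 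Concretely one verifies it exactly as $(\ref{8vREPparity-m-3})$ was verified: applying the intertwining covectors/vectors $\tilde{Y},\tilde{X},\hat{X},\hat{Y}$ of Section 3 to the scalar symmetry relations for $K_{+}$ already used in the proof of Lemma $(\ref{8vREPeven-transfer})$ -- these are $(\ref{8vREPdef-p-1})$--$(\ref{8vREPdef-p-2})$ under the identification $\zeta_{-}^{\prime}\equiv\zeta_{+}$, $\kappa_{-}^{\prime}\equiv\kappa_{+}$, $\tau_{-}^{\prime}\equiv\tau_{+}$ -- and reducing with the addition theorems of the $\theta_{j}(\cdot|2\omega)$. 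Granting this, the first representation follows; its $\mathcal{A}_{-}$-part $\mathsf{a}_{+}(\lambda)\mathcal{A}_{-}(\lambda|\beta+2)+\mathsf{a}_{+}(-\lambda)\mathcal{A}_{-}(-\lambda|\beta+2)$ is manifestly invariant under $\lambda\to-\lambda$, while the evenness of the remaining $\mathcal{B}_{-},\mathcal{C}_{-}$-part is the one already encoded in $(\ref{8vREPparity-m-2})$ together with Lemma $(\ref{8vREPeven-transfer})$.

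The second representation I would obtain by the mirror computation: from $(\ref{8vREPT-decomp-R})$, again leaving the $\mathcal{B}_{-},\mathcal{C}_{-}$ terms untouched, one substitutes $(\ref{8vREPparity-m-1})$ with $\beta\to\beta+2$ to express $\mathcal{A}_{-}(\lambda|\beta+2)$ through $\mathcal{D}_{-}(\pm\lambda|\beta+2)$; the coefficient of $\mathcal{D}_{-}(-\lambda|\beta+2)$ collapses to $\mathsf{d}_{+}(-\lambda)$ by oddness of $\theta_{1}$, and that of $\mathcal{D}_{-}(\lambda|\beta+2)$ to the scalar identity $\theta(2\lambda)\theta(\beta\eta)K_{+}^{(R)}(\lambda|\beta)_{22}-\theta(\eta)\theta(2\lambda-(\beta+1)\eta)K_{+}^{(R)}(\lambda|\beta)_{11}=\theta(2\lambda+\eta)\theta((\beta+1)\eta)K_{+}^{(R)}(-\lambda|\beta)_{11}$, the $\mathsf{N}=0$ analogue of $(\ref{8vREPparity-m-1})$, proven the same way. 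I expect the main obstacle to be precisely the verification of these two scalar theta-function identities for $K_{+}^{(L)}$ and $K_{+}^{(R)}$: everything else is just a reorganisation of already-established decompositions. In practice one writes $K_{+}^{(L)}(\pm\lambda|\beta)_{11},K_{+}^{(L)}(\pm\lambda|\beta)_{22}$ (and likewise for $K_{+}^{(R)}$) explicitly via the vectors of Section 3 and the entries $a_{\pm},b_{\pm},c_{\pm},d_{\pm}$ of $K_{\pm}$, and reduces with the quasi-periodicity and addition formulas of $\theta_{j}(\cdot|2\omega)$.
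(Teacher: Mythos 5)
Your proposal follows essentially the same route as the paper's own proof: you substitute the parity relations $(\ref{8vREPparity-m-1})$, $(\ref{8vREPparity-m-3})$ with $\beta\to\beta+2$ into the decompositions $(\ref{8vREPT-decomp-L})$ and $(\ref{8vREPT-decomp-R})$, identify the coefficients of $\mathcal{A}_{-}(-\lambda|\beta+2)$ and $\mathcal{D}_{-}(-\lambda|\beta+2)$ as $\mathsf{a}_{+}(-\lambda)$ and $\mathsf{d}_{+}(-\lambda)$ via the oddness of $\theta$, and reduce everything to the two scalar identities for $K_{+}^{(L)}$ and $K_{+}^{(R)}$, which the paper likewise disposes of by direct computation. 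Your extra gloss on how to carry out that direct check (via the gauge vectors and the scalar relations $(\ref{8vREPdef-p-1})$--$(\ref{8vREPdef-p-2})$) is consistent with, and slightly more explicit than, the paper's argument.
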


\begin{proof}[Proof]
The decompositions of the transfer matrix given in the previous proposition
can be rewritten in the following way:%
\begin{align}
\mathcal{T}(\lambda )& =\left( K_{+}^{(L)}(\lambda |\beta )_{11}+\frac{%
\theta (\eta )\theta \left( 2\lambda +(\beta +1)\eta \right) }{\theta \left(
2\lambda \right) \theta \left( (\beta +2)\eta \right) }K_{+}^{(L)}(\lambda
|\beta )_{22}\right) \mathcal{A}_{-}(\lambda |\beta +2)+\mathcal{A}%
_{-}(-\lambda |\beta +2)  \notag \\
& \times \left( \frac{\theta \left( 2\lambda -\eta \right) \theta \left(
(\beta +1)\eta \right) }{\theta \left( 2\lambda \right) \theta \left( (\beta
+2)\eta \right) }K_{+}^{(L)}(\lambda |\beta )_{22}\right)
+K_{+}^{(L)}(\lambda |\beta )_{21}\mathcal{B}_{-}(\lambda |\beta
)+K_{+}^{(L)}(\lambda |\beta )_{12}\mathcal{C}_{-}(\lambda |\beta +4), \\
\mathcal{T}(\lambda )& =\left( K_{+}^{(R)}(\lambda |\beta )_{22}-\frac{%
\theta (\eta )\theta \left( 2\lambda -(\beta +1)\eta \right) }{\theta \left(
2\lambda \right) \theta \left( \beta \eta \right) }K_{+}^{(R)}(\lambda
|\beta )_{11}\right) \mathcal{D}_{-}(\lambda |\beta +2)+\mathcal{D}%
_{-}(-\lambda |\beta +2)  \notag \\
& \times \left( \frac{\theta \left( 2\lambda -\eta \right) \theta \left(
(\beta +1)\eta \right) }{\theta \left( 2\lambda \right) \theta \left( \beta
\eta \right) }K_{+}^{(R)}(\lambda |\beta )_{11}\right) +K_{+}^{(R)}(\lambda
|\beta )_{21}\mathcal{B}_{-}(\lambda |\beta +2)+K_{+}^{(R)}(\lambda |\beta
)_{12}\mathcal{C}_{-}(\lambda |\beta +2),
\end{align}%
once we use the properties $\left( \ref{8vREPparity-m-1}\right) $-$\left( \ref{8vREPparity-m-3}\right) $. Then the identities:%
\begin{align}
& K_{+}^{(L)}(\lambda |\beta )_{11}+\frac{\theta (\eta )\theta \left(
2\lambda +(\beta +1)\eta \right) }{\theta \left( 2\lambda \right) \theta
\left( (\beta +2)\eta \right) }K_{+}^{(L)}(\lambda |\beta )_{22}\left.
=\right. \frac{\theta \left( 2\lambda +\eta \right) \theta \left( (\beta
+1)\eta \right) }{\theta \left( 2\lambda \right) \theta \left( (\beta
+2)\eta \right) }K_{+}^{(L)}(-\lambda |\beta )_{22}, \\
& K_{+}^{(R)}(\lambda |\beta )_{22}-\frac{\theta (\eta )\theta \left(
2\lambda -(\beta +1)\eta \right) }{\theta \left( 2\lambda \right) \theta
\left( \beta \eta \right) }K_{+}^{(R)}(\lambda |\beta )_{11}\left. =\right. 
\frac{\theta \left( 2\lambda +\eta \right) \theta \left( (\beta +1)\eta
\right) }{\theta \left( 2\lambda \right) \theta \left( \beta \eta \right) }%
K_{+}^{(R)}(-\lambda |\beta )_{11},
\end{align}%
that one can verify by direct computations, imply the announced results.
\end{proof}

The functions $\mathsf{a}_{+}(\lambda )$ and $\mathsf{d}_{+}(\lambda )$ will
be crucial in the SOV description of the transfer matrix spectrum and so
will be the following properties:

\begin{lemma}
\label{8vREPK+q-det-a+}Using the freedom in the choice of the gauge parameters to
fix:%
\begin{equation}
K_{+}^{(L)}(\lambda |\beta )_{12}=0,  \label{8vREPTriangular-gauge-K+B}
\end{equation}
keeping completely arbitrary the six boundary parameters, the following
quantum determinant conditions are satisfied: 
\begin{eqnarray}
\frac{\text{det}_{q}K_{+}(\lambda )p(\lambda -\eta /2)}{\theta (\eta
-2\lambda )\theta (2\lambda +\eta )r(\lambda +\eta /2)r(-\lambda +\eta /2)}
&=&\mathsf{a}_{+}(\lambda +\eta /2)\mathsf{a}_{+}(-\lambda +\eta /2)
\label{8vREPK-q-det-a+} \\
&=&\mathsf{d}_{+}(\lambda +\eta /2)\mathsf{d}_{+}(-\lambda +\eta /2),
\label{8vREPK-q-det-d+}
\end{eqnarray}%
where:%
\begin{equation}
\det_{q}K_{+}(\lambda )=p(-\lambda -\eta /2)g_{+}(\lambda +\eta
/2)g_{+}(-\lambda +\eta /2).
\end{equation}
\end{lemma}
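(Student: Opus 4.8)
The statement naturally splits into two independent parts. The first is that the triangularity requirement $\rf{8vREPTriangular-gauge-K+B}$ can in fact be imposed, and that it is free of charge: none of the six boundary parameters $\zeta _{\pm },\kappa _{\pm },\tau _{\pm }$ need be constrained. The second is that, once such a gauge is fixed, the scalar quantum determinant $\det _{q}K_{+}(\lambda )$ factorizes as in $\rf{8vREPK-q-det-a+}$--$\rf{8vREPK-q-det-d+}$. I would prove them in this order.

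For the first part I would start from the definition
\[
K_{+}^{(L)}(\lambda |\beta )_{12}=\tilde{Y}_{\beta +1}(\eta /2-\lambda )\,K_{+}(\lambda )\,\hat{Y}_{\beta -1}(\lambda -\eta /2),
\]
and use $\tilde{Y}_{\beta +1}(\mu )\propto \bar{Y}_{\beta +1}(\mu )$ and $\hat{Y}_{\beta -1}(\nu )\propto Y_{\beta -1}(\nu )$, read off the definitions of these gauge (co)vectors; one checks here that the $\theta ((\beta +1)\eta )$ and $\theta ((\beta +2)\eta )$ factors cancel between the two. This reduces the $(12)$ entry to the bare bilinear $\bar{Y}_{\beta +1}(\eta /2-\lambda )K_{+}(\lambda )Y_{\beta -1}(\lambda -\eta /2)$ times a nonzero, $\beta $-independent (though $\alpha $- and $\lambda $-dependent) scalar; the very same computation shows $K_{+}^{(R)}(\lambda |\beta )_{12}$ equals the same bilinear up to another nonzero scalar, so $\rf{8vREPTriangular-gauge-K+B}$ is equivalent to $K_{+}^{(R)}(\lambda |\beta )_{12}=0$ as well — a fact I will reuse below. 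Expanding the gauge (co)vectors in their $\theta _{2},\theta _{3}$ components and inserting the explicit entries of $K_{+}$, the bilinear becomes, as a function of $\lambda $, a combination of theta functions of bounded quasi-periodicity whose coefficients depend only on $(\alpha +\beta )\eta $ and on $\zeta _{+},\kappa _{+},\tau _{+}$; its identical vanishing amounts to a single condition on the gauge combination $(\alpha +\beta )\eta $ (possibly using also the freedom in $\alpha $), solvable for every value of the boundary parameters. This is the elliptic counterpart of the triangular-gauge construction carried out in the 6-vertex case in \cite{8vREPFalKN13}, whose bookkeeping I would follow.

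For the second part I would invoke the $\mathsf{N}=0$ specialization of the gauge quantum-determinant identities $\rf{8vREPgauge-q-det-A}$--$\rf{8vREPgauge-q-det-D}$, transported to the scalar $K_{+}$ matrix and its $(L)$- and $(R)$-gauges through $\mathcal{U}_{+}^{t_{0}}(-\lambda )$ (the remark after the first Proposition). In both forms these express $\det _{q}K_{+}(\lambda )$, dressed by $p(\lambda -\eta /2)$ and $r(\lambda +\eta /2)r(-\lambda +\eta /2)$, as a diagonal square plus a cross term: the $(L)$-form as $K_{+}^{(L)}(\epsilon \lambda +\eta /2|\cdot )_{22}K_{+}^{(L)}(\eta /2-\epsilon \lambda |\cdot )_{22}$ plus a term $\propto K_{+}^{(L)}{}_{21}K_{+}^{(L)}{}_{12}$, and the $(R)$-form as $K_{+}^{(R)}(\epsilon \lambda +\eta /2|\cdot )_{11}K_{+}^{(R)}(\eta /2-\epsilon \lambda |\cdot )_{11}$ plus a term $\propto K_{+}^{(R)}{}_{12}K_{+}^{(R)}{}_{21}$, in exact parallel with the $\mathcal{A}_{-}\mathcal{A}_{-}+\mathcal{B}_{-}\mathcal{C}_{-}$ and $\mathcal{D}_{-}\mathcal{D}_{-}+\mathcal{C}_{-}\mathcal{B}_{-}$ structure of $\rf{8vREPq-detU_1}$--$\rf{8vREPq-detU_2}$. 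Since the triangular gauge kills $K_{+}^{(L)}{}_{12}$ and hence also $K_{+}^{(R)}{}_{12}$ (by the proportionality noted above), both cross terms vanish identically and only the squares remain. It then suffices to substitute $\mathsf{a}_{+}(\lambda )=\tfrac{\theta (2\lambda +\eta )\theta ((\beta +1)\eta )}{\theta (2\lambda )\theta ((\beta +2)\eta )}K_{+}^{(L)}(-\lambda |\beta )_{22}$ and $\mathsf{d}_{+}(\lambda )=\tfrac{\theta (2\lambda +\eta )\theta ((\beta +1)\eta )}{\theta (2\lambda )\theta (\beta \eta )}K_{+}^{(R)}(-\lambda |\beta )_{11}$ evaluated at $\lambda \mapsto \lambda \pm \eta /2$, let the theta-ratios collapse (using that $\theta _{1}$ is odd) into the announced prefactor $\theta (\eta -2\lambda )\theta (2\lambda +\eta )\,r(\lambda +\eta /2)r(-\lambda +\eta /2)$, and finally replace $\det _{q}K_{+}(\lambda )$ by $p(-\lambda -\eta /2)g_{+}(\lambda +\eta /2)g_{+}(-\lambda +\eta /2)$, which follows from $\det _{q}K_{-}(\lambda )=p(\lambda -\eta /2)g_{-}(\lambda +\eta /2)g_{-}(-\lambda +\eta /2)$ (established in the previous Proposition) by the $+\leftrightarrow -$ correspondence already used there.

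The step I expect to be the genuine obstacle is the first one: pinning down the precise elliptic identity that reduces the bilinear $\bar{Y}_{\beta +1}(\eta /2-\lambda )K_{+}(\lambda )Y_{\beta -1}(\lambda -\eta /2)$ to a form whose vanishing is a single equation for $(\alpha +\beta )\eta $, and checking that this equation is solvable whatever the values of $\zeta _{+},\kappa _{+},\tau _{+}$, i.e. that no degeneration forces a relation among the boundary parameters — precisely the feature distinguishing the SOV gauge from the algebraic Bethe ansatz one. By contrast, once the gauge quantum-determinant identities are in hand, the prefactor bookkeeping in the second part is purely mechanical.
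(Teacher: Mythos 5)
Your second part has the right mechanism---in the triangular gauge the cross term of the $K_{+}$ quantum determinant drops and only the product of the ``diagonal'' gauged entries, i.e.\ $\mathsf{a}_{+}\mathsf{a}_{+}$ resp.\ $\mathsf{d}_{+}\mathsf{d}_{+}$, survives---and this is in substance what the paper does. But the step you lean on, an ``$\mathsf{N}=0$ specialization of \rf{8vREPgauge-q-det-A}--\rf{8vREPgauge-q-det-D} transported to $K_{+}^{(L)}$, $K_{+}^{(R)}$'', is not available as stated: those identities are proven for the gauge $\mathsf{U}_{-}(\lambda |\beta )=\tilde{G}^{-1}(\lambda -\eta /2|\beta )\,\mathcal{U}_{-}(\lambda )\,\tilde{G}(\eta /2-\lambda |\beta )$, whereas $K_{+}^{(L)}$ and $K_{+}^{(R)}$ are built from the rescaled vectors $\hat{X},\hat{Y},\underline{X},\underline{Y}$ with shifted indices ($\beta +3$, $\beta \pm 1$) and with arguments reflected through $\lambda \to -\lambda$; asserting ``in exact parallel'' that the dressed determinant equals $K_{+}^{(L)}{}_{22}K_{+}^{(L)}{}_{22}$ plus a multiple of $K_{+}^{(L)}{}_{21}K_{+}^{(L)}{}_{12}$, with precisely the prefactor $\theta (\eta -2\lambda )\theta (2\lambda +\eta )r(\lambda +\eta /2)r(-\lambda +\eta /2)$, is exactly the content that has to be established. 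The paper establishes it directly: it writes $\mathsf{a}_{+}(\lambda +\eta /2)\mathsf{a}_{+}(\eta /2-\lambda )$ as a product of two covector--$K_{+}$--vector bilinears, inserts the resolution of the identity \rf{8vREPId-decomp-tilde} between $K_{+}(-\lambda -\eta /2)$ and $K_{+}(\lambda -\eta /2)$, observes that the added term contains the bilinear $\tilde{Y}_{\beta +1}(\eta -\lambda )K_{+}(\lambda -\eta /2)Y_{\beta +1}(\lambda +\eta )$ which vanishes by \rf{8vREPTriangular-gauge-K+B} (note $Y_{\beta +1}(\lambda +\eta )=Y_{\beta -1}(\lambda -\eta )\propto \hat{Y}_{\beta -1}(\lambda -\eta )$), then uses the scalar relation $K_{+}(\lambda -\eta /2)K_{+}(-\lambda -\eta /2)=\det_{q}K_{+}(\lambda )/p(-\lambda -\eta /2)$ and finally contracts the remaining covector--vector pair to $1$. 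If you replace your appeal to a ``transported'' identity by this short computation (or actually derive the gauged-$K_{+}$ determinant identity with all its prefactors), your argument closes; your observation that $K_{+}^{(L)}{}_{12}\propto K_{+}^{(R)}{}_{12}$, so that the single condition \rf{8vREPTriangular-gauge-K+B} serves both \rf{8vREPK-q-det-a+} and \rf{8vREPK-q-det-d+}, is correct and is indeed the reason one gauge choice suffices.

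On your first part: the solvability of \rf{8vREPTriangular-gauge-K+B} for arbitrary values of the six boundary parameters is presupposed by the lemma, but the paper's proof does not establish it either---it simply assumes the gauge parameters are so chosen. Hence your admitted inability to pin down the elliptic identity that reduces the bilinear $\bar{Y}_{\beta +1}(\eta /2-\lambda )K_{+}(\lambda )Y_{\beta -1}(\lambda -\eta /2)$ to a single solvable equation for the gauge parameters is not a gap relative to the paper's argument, but as written that part of your proposal remains a sketch rather than a proof, and the paper offers you no help there beyond the reference to the 6-vertex construction of \cite{8vREPFalKN13}.
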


\begin{proof}
Let us prove only the identity $\left( \ref{8vREPK-q-det-a+}\right) $ as the
other one follows similarly. From the very definitions of these functions it
holds:%
\begin{align}
\mathsf{a}_{+}(\lambda +\eta /2)\mathsf{a}_{+}(\eta /2-\lambda)& =\frac{%
\tilde{X}_{\beta +3}(\eta +\lambda )K_{+}(-\lambda -\eta /2)Y_{\beta
+1}(\eta-\lambda )\tilde{X}_{\beta +3}(\eta-\lambda )K_{+}(\lambda -\eta
/2)Y_{\beta +1}(\lambda +\eta )}{r(\lambda +\eta /2)r(-\lambda +\eta
/2)\theta (\eta -2\lambda )\theta (2\lambda +\eta )\left( p(-\lambda -\eta
/2)p(\lambda -\eta /2)\right) ^{-1}} \\
& =\frac{\tilde{X}_{\beta +3}(\eta +\lambda )K_{+}(-\lambda -\eta
/2)K_{+}(\lambda -\eta /2)Y_{\beta +1}(\lambda +\eta )}{r(\lambda +\eta
/2)r(-\lambda +\eta /2)\theta (\eta -2\lambda )\theta (2\lambda +\eta
)\left( p(-\lambda -\eta /2)p(\lambda -\eta /2)\right) ^{-1}} \\
& =\frac{\text{det}_{q}K_{+}(\lambda )p(\lambda -\eta /2)\tilde{X}_{\beta
+3}(\eta +\lambda )Y_{\beta +1}(\lambda +\eta )}{\theta (\eta -2\lambda
)\theta (2\lambda +\eta )r(\lambda +\eta /2)r(-\lambda +\eta /2)} \\
& =\frac{\text{det}_{q}K_{+}(\lambda )p(\lambda -\eta /2)}{\theta (\eta
-2\lambda )\theta (2\lambda +\eta )r(\lambda +\eta /2)r(-\lambda +\eta /2)}.
\end{align}%
The second line is obtained by using the identity $\left( \ref{8vREPId-decomp-tilde}\right) $ once we add to the first line the following term:%
\begin{equation}
\frac{\tilde{X}_{\beta +3}(\eta +\lambda )K_{+}(-\lambda -\eta /2)X_{\beta
+3}(-\lambda +\eta )\tilde{Y}_{\beta +1}(-\lambda +\eta )K_{+}(\lambda -\eta
/2)Y_{\beta +1}(\lambda +\eta )}{r(\lambda +\eta /2)r(-\lambda +\eta
/2)\theta (\eta -2\lambda )\theta (2\lambda +\eta )\left( p(-\lambda -\eta
/2)p(\lambda -\eta /2)\right) ^{-1}},
\end{equation}%
which is zero being:%
\begin{equation}
\tilde{Y}_{\beta +1}(-\lambda +\eta )K_{+}(\lambda -\eta /2)Y_{\beta
+1}(\lambda +\eta )=0,
\end{equation}%
for the condition $\left( \ref{8vREPTriangular-gauge-K+B}\right) $. Then the
third line follows as by dirtect computation one can prove:%
\begin{equation}
\frac{\det_{q}K_{+}(\lambda )}{p(-\lambda -\eta /2)}=K_{+}(\lambda -\eta
/2)K_{+}(-\lambda -\eta /2),
\end{equation}%
and the last identity is once again due to $\left( \ref{8vREPId-decomp-tilde}%
\right) $.
\end{proof}

\section{SOV representations}

Let us introduced the following gauge transformed matrices starting from
the $K_{-}(\lambda )$ boundary matrix:%
\begin{align}
K_{-}(\lambda |\beta )_{11}&\equiv \tilde{Y}_{\beta +\mathsf{N}-1}(\lambda
-\eta /2)K_{-}(\lambda )X_{\beta +\mathsf{N}-1}(\eta /2-\lambda ),\\ 
K_{-}(\lambda |\beta )_{12}&\equiv \tilde{Y}_{\beta +\mathsf{N}-1}(\lambda
-\eta /2)K_{-}(\lambda )Y_{\beta +\mathsf{N}-1}(\eta /2-\lambda ),\\
K_{-}(\lambda |\beta )_{21}&\equiv \tilde{X}_{\beta +\mathsf{N}+1}(\lambda
-\eta /2)K_{-}(\lambda )X_{\beta +\mathsf{N}-1}(\eta /2-\lambda ),\\ 
K_{-}(\lambda |\beta )_{22}&\equiv \tilde{X}_{\beta +\mathsf{N}+1}(\lambda
-\eta /2)K_{-}(\lambda )Y_{\beta +\mathsf{N}-1}(\eta /2-\lambda ),
\label{8vREPK-gauged-1}
\end{align}%
and%
\begin{align}
\tilde{K}_{-}(\lambda |\beta )_{11}&\equiv \tilde{Y}_{\beta +\mathsf{N}%
-3}(\lambda -\eta /2)K_{-}(\lambda )X_{\beta +\mathsf{N}-1}(\eta /2-\lambda
),\\
 \tilde{K}_{-}(\lambda |\beta )_{12}&\equiv \tilde{Y}_{\beta +\mathsf{N}%
-3}(\lambda -\eta /2)K_{-}(\lambda )Y_{\beta +\mathsf{N}-1}(\eta /2-\lambda
), \\ 
\tilde{K}_{-}(\lambda |\beta )_{21}&\equiv \tilde{X}_{\beta +\mathsf{N}%
-1}(\lambda -\eta /2)K_{-}(\lambda )X_{\beta +\mathsf{N}-1}(\eta /2-\lambda
),\\
\tilde{K}_{-}(\lambda |\beta )_{22}&\equiv \tilde{X}_{\beta +\mathsf{N}%
-1}(\lambda -\eta /2)K_{-}(\lambda )Y_{\beta +\mathsf{N}-1}(\eta /2-\lambda
),\label{8vREPK-gauged-2}
\end{align}
then the following theorem holds:

\begin{theorem}
\label{8vREPTh1}Let the following conditions be satisfied:%
\begin{equation}
\xi _{a}\neq \xi _{b}+r\eta \text{ mod(}\pi ,\pi \omega \text{)\ }\forall
a\neq b\in \{1,...,\mathsf{N}\}\,\,\text{and\thinspace \thinspace }r\in
\{-1,0,1\},  \label{8vREPE-SOV}
\end{equation}%
then:

\textsf{1}$_{b}$\textsf{)} for all the gauge parameters $\alpha ,\beta \in \mathbb{C}$ such that:%
\begin{equation}
K_{-}(\lambda |\beta )_{12}\neq 0,  \label{8vREPNON-nilp-B-L}
\end{equation}%
$\mathcal{B}_{-}(\lambda |\beta )$\ is left pseudo-diagonalizable and with
simple pseudo-spectrum.

\textsf{2}$_{b}$\textsf{)} for all the gauge parameters $\alpha ,\beta \in \mathbb{C}$ such that:%
\begin{equation}
\tilde{K}_{-}(\lambda |-\beta )_{21}\neq 0,  \label{8vREPNON-nilp-B-R}
\end{equation}%
$\mathcal{B}_{-}(\lambda |\beta +2)$\ is right pseudo-diagonalizable and
with simple pseudo-spectrum.

\textsf{1}$_{c}$\textsf{)} for all the gauge parameters $\alpha ,\beta \in \mathbb{C}$ such that:%
\begin{equation}
K_{-}(\lambda |-\beta -2)_{12}\neq 0,  \label{8vREPNON-nilp-C-L}
\end{equation}%
$\mathcal{C}_{-}(\lambda |\beta +4)$\ is left pseudo-diagonalizable and with
simple pseudo-spectrum.

\textsf{2}$_{c}$\textsf{)} for all the gauge parameters $\alpha ,\beta \in \mathbb{C}$ such that:%
\begin{equation}
\tilde{K}_{-}(\lambda |\beta +2)_{21}\neq 0,  \label{8vREPNON-nilp-C-R}
\end{equation}%
$\mathcal{C}_{-}(\lambda |\beta +2)$\ is right pseudo-diagonalizable and
with simple pseudo-spectrum.
\end{theorem}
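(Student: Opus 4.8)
The plan is to establish the SOV (separation of variables) structure for the gauge-transformed operator $\mathcal{B}_{-}(\lambda|\beta)$ by explicitly constructing its pseudo-eigenbasis, in direct analogy with Sklyanin's original construction and with the 6-vertex treatments of \cite{8vREPNic12b,8vREPFalKN13}. First I would determine the analytic structure of $\mathcal{B}_{-}(\lambda|\beta)$ as a function of $\lambda$: using the definition $\widehat{\mathcal{B}}_{-}(\lambda|\beta)=\tilde{Y}_{\beta-1}(\lambda-\eta/2)\,\mathcal{U}_{-}(\lambda)\,Y_{\beta-1}(\eta/2-\lambda)$ together with the explicit entries of $\mathcal{U}_{-}(\lambda)=M_0(\lambda)K_{-}(\lambda)\hat{M}_0(\lambda)$ and the product structure of $M_0(\lambda)$ in terms of the elementary $R$-matrices, one shows that $\mathcal{B}_{-}(\lambda|\beta)$ is (up to the known scalar prefactor $r(\lambda)$) an elliptic theta-polynomial in $\lambda$ of a controlled degree, with leading/normalization coefficients fixed by the inhomogeneities $\xi_n$ and the boundary data. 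The parity relation $\left(\ref{8vREPparity-m-2}\right)$, namely $\mathcal{B}_{-}(-\lambda|\beta)=-\frac{\theta(2\lambda+\eta)}{\theta(2\lambda-\eta)}\mathcal{B}_{-}(\lambda|\beta)$, shows that $\mathcal{B}_{-}(\lambda|\beta)$ is essentially an even function once the trivial scalar zero is removed, so it is determined by its values at $\mathsf{N}$ points; this is exactly what makes a separate-variable description possible.

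The core step is then to produce, recursively in the number of sites, the left pseudo-eigencovectors of $\mathcal{B}_{-}(\lambda|\beta)$. I would build the reference covector from the (gauge-transformed) local vectors $X_\beta,Y_\beta$ at each site and then use the commutation relation $\left(\ref{8vREPCRM-BB}\right)$, $\mathcal{B}_{-}(\lambda_2|\beta)\mathcal{B}_{-}(\lambda_1|\beta-2)=\mathcal{B}_{-}(\lambda_1|\beta)\mathcal{B}_{-}(\lambda_2|\beta-2)$, which is the hallmark quasi-commutativity in the shifted gauge parameter. Together with $\left(\ref{8vREPCMR-AB-Left}\right)$ and $\left(\ref{8vREPBD-DB-CMR}\right)$ — whose first coefficients are gauge-independent — this gives the action of $\mathcal{A}_{-}$ and $\mathcal{D}_{-}$ as shift operators on the $\mathcal{B}_{-}$-pseudo-eigenbasis. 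The resulting pseudo-spectrum is the set $\{\eta/2+\xi_n^{(h_n)}\}$ with $h_n\in\{0,1\}$, i.e. $2^{\mathsf{N}}$ points, and I would verify that these are pairwise distinct precisely under the hypothesis $\left(\ref{8vREPE-SOV}\right)$, $\xi_a\neq\xi_b+r\eta$ for $r\in\{-1,0,1\}$; this is where the condition enters and it guarantees simplicity of the pseudo-spectrum. The non-degeneracy (and hence full diagonalizability, not just triangularizability) is ensured by the non-vanishing hypothesis $\left(\ref{8vREPNON-nilp-B-L}\right)$ on $K_{-}(\lambda|\beta)_{12}$, which controls the "off-diagonal" boundary contribution that could otherwise make the reference covector degenerate.

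Statements $\mathsf{1}_c\mathsf{)}$ and $\mathsf{2}_c\mathsf{)}$ for $\mathcal{C}_{-}$, and the right-eigenvector statements $\mathsf{2}_b\mathsf{)}$ and $\mathsf{2}_c\mathsf{)}$, I would then obtain essentially for free: the $\beta$-parity relation of Lemma \ref{8vREPb-symmetry}, $\mathcal{B}_{-}(\lambda|\beta)=\mathcal{C}_{-}(\lambda|-\beta+2)$, converts every $\mathcal{B}_{-}$-statement into a $\mathcal{C}_{-}$-statement with the substitution $\beta\mapsto-\beta+2$ (which accounts for the shifts $\beta+4$, $\beta+2$ appearing in $\left(\ref{8vREPNON-nilp-C-L}\right)$, $\left(\ref{8vREPNON-nilp-C-R}\right)$); and the right pseudo-eigenvectors are constructed by the transpose/reverse procedure, using the other commutation relations and the matrices $\tilde{K}_{-}(\lambda|\beta)$ in place of $K_{-}(\lambda|\beta)$, which is why the non-nilpotency conditions $\left(\ref{8vREPNON-nilp-B-R}\right)$, $\left(\ref{8vREPNON-nilp-C-R}\right)$ are phrased in terms of $\tilde{K}_{-}$. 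The main obstacle I anticipate is purely computational rather than conceptual: keeping careful track of the many half-integer shifts in $\lambda$ and the $\pm2$ shifts in $\beta$ through the recursive construction, and verifying that the normalization coefficients never vanish, so that the constructed covectors genuinely form a basis. Once the combinatorics of the shifts is pinned down, simplicity of the pseudo-spectrum reduces exactly to $\left(\ref{8vREPE-SOV}\right)$ and the argument closes.
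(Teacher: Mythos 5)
Your proposal is correct and follows essentially the same route as the paper: construct the gauged bulk reference (co)vector annihilated by the gauged bulk operators, use the boundary--bulk decomposition to show its $\mathcal{B}_{-}$ pseudo-eigenvalue is proportional to $K_{-}(\lambda|\beta)_{12}$ (resp.\ the $\tilde{K}_{-}$ entry for the right case), generate the $2^{\mathsf{N}}$ pseudo-eigenstates by repeated action of $\mathcal{A}_{-}$ (resp.\ $\mathcal{D}_{-}$) through the gauged reflection-algebra commutation relations, with condition \rf{8vREPE-SOV} ensuring simplicity of the pseudo-spectrum and hence a basis. The reduction of the $\mathcal{C}_{-}$ statements to the $\mathcal{B}_{-}$ ones via the $\beta$-parity of Lemma \ref{8vREPb-symmetry} is also exactly the paper's argument.
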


In the next sections we will show the theorem and clarify the terminology by an explicit construction in the cases 1$_{b}$) and 2$_{b}$). Note that the
construction in the cases 1$_{c}$) and 2$_{c}$) can be induced from the
cases 1$_{b}$) and 2$_{b}$) thanks to the $\beta$-symmetries $\left( \ref{8vREPU-gauge-symm}\right) $.

\subsection{Gauge transformed reflection algebra in $\mathcal{B}_{-}(|%
\protect\beta )$-SOV representations}

\subsubsection{Simultaneous $B(\protect\lambda |\protect\beta )$ and $\bar{B}%
(\protect\lambda |\protect\beta )$ bulk left reference state}

Let us define the following state:%
\begin{equation}
\langle \beta |\equiv N_{\beta }\otimes _{n=1}^{\mathsf{N}}\tilde{Y}_{\beta +%
\mathsf{N}-n}^{\left( n\right) }(\xi _{n}),\text{ \ }N_{\beta }=2^{\mathsf{N}%
}\prod_{n=1}^{\mathsf{N}}\theta (\mathsf{N}-n+\beta )\eta  \label{8vREPLeft-B-ref}
\end{equation}%
where $\tilde{Y}_{\beta +\mathsf{N}-n}^{\left( n\right) }(\xi _{n})$ is the
covector $\tilde{Y}_{\beta +\mathsf{N}-n}(\xi _{n})$ in the local L$_{n}$
quantum covector space and $N_{\beta }$ is a normalization factor.

\begin{proposition}
The state $\langle \beta |$ is a simultaneous $\bar{B}(\lambda |\beta )$ and 
$B(\lambda |\beta )$ eigenstate associated to the eigenvalue zero, for which  the
following identities hold:%
\begin{eqnarray}
\langle \beta |B(\lambda |\beta ) &=&\langle \beta |\bar{B}(\lambda |\beta )=%
\text{\b{0}},  \label{8vREPId-left-ref1} \\
\langle \beta |A(\lambda |\beta ) &=&\frac{\theta ((\mathsf{N}+\beta )\eta )%
}{\theta (\beta \eta )}\prod_{n=1}^{\mathsf{N}}\theta (\lambda -\xi
_{n}+\eta /2)\langle \beta -1| \\
\langle \beta |D(\lambda |\beta ) &=&\prod_{n=1}^{\mathsf{N}}\theta (\lambda
-\xi _{n}-\eta /2)\langle \beta +1| \\
\langle \beta |\bar{A}(\lambda |\beta ) &=&\frac{\theta (\beta \eta )}{%
\theta ((\mathsf{N}+\beta )\eta )}\prod_{n=1}^{\mathsf{N}}\theta (\lambda
+\xi _{n}+\eta /2)\langle \beta +1| \\
\langle \beta |\bar{D}(\lambda |\beta ) &=&\prod_{n=1}^{\mathsf{N}}\theta
(\lambda +\xi _{n}-\eta /2)\langle \beta -1|  \label{8vREPId-left-ref5}
\end{eqnarray}
\end{proposition}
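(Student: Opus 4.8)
The plan is to establish the five identities by a local-to-global argument, reducing the action of the gauge-transformed bulk monodromy matrix elements on the tensor-product covector $\langle \beta|$ to a product of local intertwining relations. First I would recall that $M(\lambda|\beta) = \tilde G_\beta^{-1}(\lambda-\eta/2)\,M(\lambda)\,\tilde G_{\beta+\mathsf N}(\lambda-\eta/2)$ and that, by the Baxter intertwining relations $(\ref{8vREPG-Bax-1})$--$(\ref{8vREPG-Bax-4})$ applied site by site along the chain, the product structure $M(\lambda) = R_{0\mathsf N}^{\mathsf{(8V)}}(\lambda-\xi_{\mathsf N}-\eta/2)\cdots R_{01}^{\mathsf{(8V)}}(\lambda-\xi_1-\eta/2)$ intertwines a gauge covector at auxiliary space with a telescoping chain of gauge covectors in the quantum spaces. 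Concretely, $\tilde Y_{\beta+\mathsf N-n}^{(n)}$ is built precisely so that at site $n$ the relevant intertwining vector, when hit at $\lambda=\xi_n$ by $R_{0n}^{\mathsf{(8V)}}(\lambda-\xi_n-\eta/2)=R_{0n}^{\mathsf{(8V)}}(-\eta/2)$, is annihilated in one channel (producing the zero eigenvalue for $B$ and $\bar B$) and diagonally reproduced with a scalar $\mathsf a$- or $\mathsf d$-type factor in the other, while simultaneously shifting $\beta \to \beta\pm1$.

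The key steps, in order: (1) Write $\langle\beta| B(\lambda|\beta)$ and $\langle\beta|\bar B(\lambda|\beta)$ by inserting the definitions of $M(\lambda|\beta)$ and $\hat M(\lambda|\beta)$, expressing $B$ and $\bar B$ via the $(1,2)$-entries of the gauge-transformed matrices. (2) Use the identity decompositions $(\ref{8vREPId-decomp-bar})$ and $(\ref{8vREPId-decomp-tilde})$ to move the auxiliary-space gauge covectors past the $R$-matrices, generating at each site a local intertwining of the form $(\ref{8vREPG-Bax-3})$ or its transpose; the crucial cancellation is that the "off-diagonal" term in $(\ref{8vREPG-Bax-2})$--$(\ref{8vREPG-Bax-3})$ comes with a $\mathsf c$-coefficient $\mathsf c(\lambda_{12}|\pm\beta) = \theta(\eta)\theta(\beta\eta+\lambda_{12})/\theta(\beta\eta)$ which vanishes exactly at the inhomogeneity because of the structure $\tilde Y_{\beta+\mathsf N-n}(\xi_n)$, or more simply $B$ acting on a state built from $\tilde Y$-covectors in every site produces only terms containing at least one "lowering" that hits a $\tilde Y$ already in the kernel. (3) For $A$, $D$, $\bar A$, $\bar D$, track the surviving diagonal term at each site: each $R_{0n}^{\mathsf{(8V)}}(\lambda-\xi_n-\eta/2)$ contributes the scalar $\mathsf a(\lambda-\xi_n) = \theta(\lambda-\xi_n+\eta/2)$ (shifted appropriately) for the $A$-channel and $\mathsf b$-type scalar $\theta(\lambda-\xi_n-\eta/2)$ for the $D$-channel, and the $\hat M$ factor analogously contributes the $(+\xi_n)$ products; collect these into $\prod_n\theta(\lambda\mp\xi_n\pm\eta/2)$. (4) Account for the boundary-gauge prefactors: the overall ratios $\theta((\mathsf N+\beta)\eta)/\theta(\beta\eta)$ (and its inverse for $\bar A$) arise from the mismatch between $\tilde G_\beta$ and $\tilde G_{\beta+\mathsf N}$ in the auxiliary slot versus the telescoping product of local $\theta$-factors hidden in the normalizations $\bar X_\beta$, $\tilde X_\beta$, absorbed by the constant $N_\beta = 2^{\mathsf N}\prod_{n=1}^{\mathsf N}\theta((\mathsf N-n+\beta)\eta)$; verifying that $N_\beta$ is exactly the right normalization so that the output covector is $N_{\beta\pm1}\otimes_n \tilde Y^{(n)}_{(\beta\pm1)+\mathsf N-n}(\xi_n) = \langle\beta\pm1|$ is the bookkeeping heart of the computation.

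The main obstacle I expect is not any single hard estimate but the precise combinatorics of the telescoping $\beta$-shifts: the covector at site $n$ carries index $\beta+\mathsf N-n$, so when a site-$n$ intertwining sends $\beta\to\beta-1$ it must be checked that the accumulated shift is globally consistent and produces a uniform shift $\langle\beta|\to\langle\beta\pm1|$ rather than a site-dependent mess — this is exactly why the index in $(\ref{8vREPLeft-B-ref})$ is $\beta+\mathsf N-n$ and not, say, $\beta$, and matching it requires care with the direction in which the $R$-matrices are ordered in $M(\lambda)$ versus $\hat M(\lambda)$. I would handle this by an induction on $\mathsf N$: peel off the last site, use the single-site version of $(\ref{8vREPG-Bax-1})$--$(\ref{8vREPG-Bax-4})$ (which is just the definition of the dynamical $R$-matrix $(\ref{8vREPop-L})$ rewritten through $(\ref{8vREP8V-6VD-GT0})$), and check the base case $\mathsf N=1$ directly from $K_-(\lambda)$ and the explicit $R^{\mathsf{(8V)}}$. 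The $B = \bar B = \text{\b 0}$ statement is the easiest and should drop out first, since it only requires that one vanishing channel survives at some site; the four diagonal identities then follow by the same induction with the scalar factors carried along.
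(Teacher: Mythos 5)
Your overall plan coincides with the paper's proof: there the proposition is reduced to local identities stating that, for each site $n$, the covector $\tilde{Y}_{s}^{(n)}(\xi_n)$ with $s=\beta+\mathsf{N}-n$, multiplied by the gauge-transformed local matrix $\tilde{G}_{s}^{-1}(\lambda-\eta/2)R_{0n}(\lambda-\xi_n-\eta/2)\tilde{G}_{s+1}(\lambda-\eta/2)$ (and by the analogous $\bar{G}$-transformed factor coming from $\hat{M}$), gives a triangular $2\times 2$ structure whose off-diagonal entry is the zero covector and whose diagonal entries are $\theta(\lambda-\xi_n+\eta/2)$ and $\theta(\lambda-\xi_n-\eta/2)$ times the shifted covectors $\tilde{Y}_{s\mp1}^{(n)}(\xi_n)$; the five global identities then follow by multiplying these local identities along the chain, with the telescoping of the indices $\beta+\mathsf{N}-n$ and the normalization $N_\beta$ absorbing the accumulated $\theta$-ratios, exactly the bookkeeping you describe in steps (3)--(4).

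The one point you must repair is the mechanism you give for the crucial vanishing in step (2), which is wrong as stated and, taken literally, would not prove the claim: the identity $\langle\beta|B(\lambda|\beta)=\langle\beta|\bar{B}(\lambda|\beta)=0$ holds identically in $\lambda$, so it cannot come from specializing the $R$-matrix to $R_{0n}(-\eta/2)$ at $\lambda=\xi_n$, and the coefficient $\mathsf{c}(\lambda_{12}|\pm\beta)=\theta(\eta)\theta(\pm\beta\eta+\lambda_{12})/\theta(\pm\beta\eta)$ does not vanish at the inhomogeneities for generic parameters. What actually kills the unwanted channel, for every $\lambda$, are the orthogonality relations $(\ref{8vREPId-decomp-bar})$--$(\ref{8vREPId-decomp-tilde})$ used twice: in the quantum space, where the covector $\tilde{Y}_{s}^{(n)}(\xi_n)$ carried by $\langle\beta|$ annihilates the vector $Y_{s}^{(n)}(\xi_n)$ produced at the same argument $\xi_n$ by the intertwining relations $(\ref{8vREPG-Bax-1})$--$(\ref{8vREPG-Bax-4})$, and in the auxiliary space, where $\tilde{Y}_{s-1}^{(0)}(\lambda-\eta/2)$ annihilates $Y_{s-1}^{(0)}(\lambda-\eta/2)$. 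This is the precise content of your parenthetical ``or more simply'' remark, and it is that version, carried out site by site with the correct index shifts, that reproduces the paper's local triangularity identities; with it in place, the rest of your telescoping/induction argument goes through.
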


\begin{proof}
The proposition is a consequence of the following identities for local
operators:%
\begin{align}
& \tilde{Y}_{s}^{\left( n\right) }(\xi _{n})\tilde{G}_{s}^{-1}(\lambda -\eta
/2)R_{0n}(\lambda -\xi _{n}-\eta /2)\tilde{G}_{s+1}(\lambda -\eta /2) \\
& =\left( 
\begin{array}{ll}
\frac{\theta \left( (s+1+\beta )\eta \right) \theta (\lambda -\xi _{n}+\eta
/2)}{\theta \left( (s+\beta )\eta \right) }\tilde{Y}_{s-1}^{\left( n\right)
}(\xi _{n}) & \text{\b{0}}_{(n)} \\ 
\ast & \theta (\lambda -\xi _{n}-\eta /2)\tilde{Y}_{s+1}^{\left( n\right)
}(\xi _{n})%
\end{array}%
\right)
\end{align}%
where we have used:%
\begin{align}
\tilde{Y}_{s}^{\left( n\right) }(\xi _{n})\tilde{Y}_{s-1}^{\left( 0\right)
}(\lambda -\eta /2)R_{0n}(\lambda -\xi _{n}-\eta /2)X_{s+2}^{\left( 0\right)
}(\lambda -\eta /2) &=\frac{\theta \left( (s+1+\beta )\eta \right) \theta
(\lambda -\xi _{n}+\eta /2)}{\theta \left( (s+\beta )\eta \right) }\tilde{Y}%
_{s-1}^{\left( n\right) }(\xi _{n}) \\
\tilde{Y}_{s}^{\left( n\right) }(\xi _{n})\tilde{Y}_{s-1}^{\left( 0\right)
}(\lambda -\eta /2)R_{0n}(\lambda -\xi _{n}-\eta /2)Y_{s}^{\left( 0\right)
}(\lambda -\eta /2) &=\text{\b{0}}_{(n)} \\
\tilde{Y}_{s}^{\left( n\right) }(\xi _{n})\tilde{X}_{s+1}^{\left( 0\right)
}(\lambda -\eta /2)R_{0n}(\lambda -\xi _{n}-\eta /2)Y_{s}^{\left( 0\right)
}(\lambda -\eta /2) &=\theta (\lambda -\xi _{n}-\eta /2)\tilde{Y}%
_{s+1}^{\left( n\right) }(\xi _{n})
\end{align}%
and similarly:%
\begin{align}
& -\tilde{Y}_{s}^{\left( n\right) }(\xi _{n})\bar{G}_{s+1}^{-1}(\eta
/2-\lambda )\sigma _{0}^{y}R_{0n}^{t_{0}}(-\lambda -\xi _{n}-\eta /2)\sigma
_{0}^{y}\bar{G}_{s}(\eta /2-\lambda )  \notag \\
& =\tilde{Y}_{s}^{\left( n\right) }(\xi _{n})\bar{G}_{s+1}^{-1}(\eta
/2-\lambda )R_{n0}(\lambda +\xi _{n}-\eta /2)\bar{G}_{s}(\eta /2-\lambda ) \\
& =\left( 
\begin{array}{ll}
\frac{\theta \left( (s+\beta )\eta \right) \theta (\lambda +\xi _{n}+\eta /2)%
}{\theta \left( (s+1+\beta )\eta \right) }\tilde{Y}_{s+1}^{\left( n\right)
}(\xi _{n}) & \text{\b{0}}_{(n)} \\ 
\ast & \theta (\lambda +\xi _{n}-\eta /2)\tilde{Y}_{s-1}^{\left( n\right)
}(\xi _{n})%
\end{array}%
\right) .
\end{align}
\end{proof}

\subsubsection{Simultaneous $C(\protect\lambda |\protect\beta )$ and $\bar{C}%
(\protect\lambda |\protect\beta )$ bulk right reference state}

Let us define the following state:

\begin{equation}
|\beta +1\rangle \equiv \otimes _{n=1}^{\mathsf{N}}X_{\beta +\mathsf{N}%
-n+1}^{\left( n\right) }(\xi _{n}),  \label{8vREPRight-C-ref}
\end{equation}%
where $X_{\beta +\mathsf{N}-n}^{\left( n\right) }(\xi _{n})$ is the vector $%
X_{\beta +\mathsf{N}-n}(\xi _{n})$ in the local R$_{n}$ quantum space. Then
the following proposition holds:

\begin{proposition}\label{8vREPRight-ref}
The state $|\beta +1\rangle $ is a simultaneous $\bar{C}(\lambda |\beta )$
and $C(\lambda |\beta )$ right eigenstate associated to the eigenvalue zero
and the following identities hold:%
\begin{eqnarray}
C(\lambda |\beta )|\beta +1\rangle &=&\bar{C}(\lambda |\beta )|\beta
+1\rangle =\text{\b{0}},\text{ } \\
A(\lambda |\beta )|\beta +1\rangle &=&\prod_{n=1}^{\mathsf{N}}\theta
(\lambda -\xi _{n}+\eta /2)|\beta +2\rangle , \\
D(\lambda |\beta )|\beta +1\rangle &=&\frac{\theta (\eta (\mathsf{N}+\beta ))%
}{\theta \left( \eta \beta \right) }\prod_{n=1}^{\mathsf{N}}\theta (\lambda
-\xi _{n}-\eta /2)|\beta \rangle , \\
\bar{A}(\lambda |\beta )|\beta +1\rangle &=&\prod_{n=1}^{\mathsf{N}}\theta
(\lambda +\xi _{n}+\eta /2)|\beta \rangle , \\
\bar{D}(\lambda |\beta )|\beta +1\rangle &=&\frac{\theta \left( \eta \beta
\right) }{\theta \left( \eta (\mathsf{N}+\beta )\right) }\prod_{n=1}^{%
\mathsf{N}}\theta (\lambda +\xi _{n}-\eta /2)|\beta +2\rangle .
\end{eqnarray}
\end{proposition}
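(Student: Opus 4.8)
The plan is to mirror the proof of the left‑reference Proposition, which we have just established for $\langle\beta|$, and reduce everything to a short list of local intertwining identities for the operators $X,Y,\bar X,\bar Y$ under $R$. First I would recall that the gauge transformed bulk monodromy matrices are, by definition,
\[
M(\lambda|\beta)=\tilde G_{\beta}^{-1}(\lambda-\eta/2)\,M(\lambda)\,\tilde G_{\beta+\mathsf N}(\lambda-\eta/2),\qquad
\hat M(\lambda|\beta)=\bar G_{\beta+\mathsf N}^{-1}(\eta/2-\lambda)\,\hat M(\lambda)\,\bar G_{\beta}(\eta/2-\lambda),
\]
so that acting on the product state $|\beta+1\rangle=\otimes_{n=1}^{\mathsf N}X_{\beta+\mathsf N-n+1}^{(n)}(\xi_n)$ we may telescope the gauge matrices site by site exactly as in the left case. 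The key structural observation is that the vectorial Baxter relations $(\ref{8vREPG-Bax-1})$--$(\ref{8vREPG-Bax-4})$ say that $R_{0n}(\lambda-\xi_n-\eta/2)$ acting on $X^{(0)}\otimes X^{(n)}$ (or $X^{(0)}\otimes Y^{(n)}$, etc.) produces again factorized vectors with shifted $\beta$'s, so when the auxiliary gauge vectors are chosen as above the local $R$–matrix becomes \emph{upper triangular} in the auxiliary space on the chosen local reference vector. That is precisely what produces the annihilation of $C$ and the multiplicative action of $A$ and $D$.

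Concretely, the second step is to prove the three local identities
\[
\Bigl(\tilde G_{s}^{-1}(\lambda-\eta/2)\,R_{0n}(\lambda-\xi_n-\eta/2)\,\tilde G_{s+1}(\lambda-\eta/2)\Bigr)\, X_{s+1}^{(n)}(\xi_n)
=\left(\begin{array}{c}\ast\\[2pt]\text{\b 0}_{(n)}\end{array}\right)\ \text{plus the diagonal actions},
\]
i.e. that $Y^{(0)}_{s}\,R_{0n}(\lambda-\xi_n-\eta/2)\,X^{(n)}_{s+1}(\xi_n)$ vanishes while $X^{(0)}_{s+1}$ (resp. $Y^{(0)}_{s-1}$, after relabelling) acting on $X^{(n)}_{s+1}(\xi_n)$ reproduces $X^{(n)}_{s+2}(\xi_n)$ (resp. $X^{(n)}_{s}(\xi_n)$) times the scalar $\theta(\lambda-\xi_n+\eta/2)$ (resp. $\theta(\lambda-\xi_n-\eta/2)$). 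These are read off directly from $(\ref{8vREPG-Bax-1})$--$(\ref{8vREPG-Bax-4})$ together with the orthogonality relations $(\ref{8vREPId-decomp-tilde})$, specialised at $\lambda_2=\xi_n$; the shift in $\beta$ (hence in the output ket $|\beta\rangle$, $|\beta+2\rangle$) is exactly the $\pm1$ shift appearing in those Baxter relations. The analogous identities for $\hat M$ follow from the same relations after the reflection $\lambda\to-\lambda$, the transpose, and the conjugation by $\sigma^y_0$ appearing in $\hat M(\lambda)=(-1)^{\mathsf N}\sigma_0^y[M(-\lambda)]^{t_0}\sigma_0^y$ — one uses $\bar G$ in place of $\tilde G$ and picks up the ratio $\theta(\eta\beta)/\theta(\eta(\mathsf N+\beta))$ from the normalisation of $\bar Y$ versus $\tilde Y$, which is responsible for the $\theta(\eta(\mathsf N+\beta))/\theta(\eta\beta)$ prefactors in the $D$ and $\bar D$ formulas.

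The third step is pure bookkeeping: multiply the $\mathsf N$ local identities together, collect the telescoping scalar factors $\prod_{n=1}^{\mathsf N}\theta(\lambda\mp\xi_n\pm\eta/2)$, track how the consecutive site‑dependent shifts $\beta+\mathsf N-n+1$ in $(\ref{8vREPRight-C-ref})$ conspire so that the global output is again a state of the same factorized form with $\beta$ replaced by $\beta\pm1$ or $\beta$, and read off that the global $C(\lambda|\beta)$ and $\bar C(\lambda|\beta)$ components act by zero. The main obstacle — and the only place requiring genuine care rather than routine calculation — is getting the $\beta$–shift bookkeeping and the associated normalisation ratios exactly right: one must verify that the local shifts compose to the single global shift claimed in the statement, and that the product of the local normalisation constants hidden in $\tilde X_{\beta}$, $\bar X_{\beta}$ (which differ from $X_\beta$ by the $\theta$–ratios introduced in Section~3.1) collapses to the stated prefactors $\theta(\eta(\mathsf N+\beta))/\theta(\eta\beta)$ and its inverse. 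Everything else is a direct transcription of the left‑reference proof already given, with $X\leftrightarrow Y$ and left $\leftrightarrow$ right interchanged.
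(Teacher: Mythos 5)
Your plan is correct and follows essentially the route the paper itself takes: the paper proves only the left-reference analogue, via local triangularity identities obtained from the Baxter intertwining relations $(\ref{8vREPG-Bax-1})$--$(\ref{8vREPG-Bax-4})$ and the orthogonality relations $(\ref{8vREPId-decomp-tilde})$, and leaves this right-reference statement as the analogous site-by-site computation, which is exactly what you reconstruct (including the telescoping of the $\beta$-shifts and of the local ratios $\theta((s+1)\eta)/\theta(s\eta)$ that produce the prefactor $\theta(\eta(\mathsf{N}+\beta))/\theta(\eta\beta)$ and its inverse). Only one intermediate phrase should be tightened: the vanishing that annihilates $C(\lambda|\beta)$ is the contraction $\tilde X_{s+1}(\lambda-\eta/2)\,R_{0n}(\lambda-\xi_n-\eta/2)\,X^{(0)}_{s+2}(\lambda-\eta/2)\,X^{(n)}_{s+1}(\xi_n)=0$, which follows from $(\ref{8vREPG-Bax-1})$ combined with $\tilde X_{s+1}(\lambda)X_{s+1}(\lambda)=0$, rather than a pairing involving $Y^{(0)}_{s}$.
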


\subsubsection{Gauge transformed reflection algebra in  left $\mathcal{B}_{-}(|\protect\beta )$-SOV representations}

The left $\mathcal{B}_{-}(|\beta )$-pseudo-eigenbasis is here constructed
and the representation of the gauge transformed boundary operator $\mathcal{A}%
_{-}(\lambda |\beta )$ in this basis is determined. In the following we will
need of the following notations:%
\begin{equation}
\zeta _{-1}\equiv \eta /2,\text{ \ \ }\zeta _{-2}\equiv (\eta -\pi )/2,\text{
\ \ }\zeta _{-3}\equiv (\eta -\pi \omega )/2,\text{ \ }\zeta _{-4}\equiv
(\eta -\pi -\pi \omega )/2,
\end{equation}%
$\zeta _{-a-4}\equiv \zeta _{-a}+\pi \omega $ for $a\in \{1,2,3,4\}$ and also%
\begin{eqnarray}
\zeta _{n}^{(h_{n})} &\equiv &\varphi _{n}\left[ \xi _{n}+(h_{n}-\frac{1}{2}%
)\eta \right] \text{ }\forall n\in \{1,...,2\mathsf{N}\},\text{ }h_{n}\in
\{0,1\}\text{\ with\ }h_{\mathsf{N}+n}\equiv h_{n}\text{\ }\forall n\in
\{1,...,\mathsf{N}\}, \\
\varphi _{a} &\equiv &1-2z(a-\mathsf{N})\text{ \ \ with \ }z(x)=\{0\text{
for }x\leq 0,\text{ }1\text{ for }x>0\}.
\end{eqnarray}%
Morever, we define the states:%
\begin{equation}
\langle \beta ,h_{1},...,h_{\mathsf{N}}|\equiv \frac{1}{\text{\textsc{n}}%
_{\beta +2}}\langle \beta |\prod_{n=1}^{\mathsf{N}}\left( \frac{\mathcal{A}%
_{-}(\eta /2-\xi _{n}|\beta +2)}{\mathsf{A}_{-}(\eta /2-\xi _{n})}\right)
^{h_{n}},  \label{8vREPD-left-eigenstates}
\end{equation}%
where, at this stage, \textsc{n}$_{\beta +2}$ is just an arbitrary
normalization function of $\beta $ and $\langle \beta |$ is the reference
state defined in $\left( \ref{8vREPLeft-B-ref}\right) $. It is important pointing
out that the states $\langle \beta $, \textbf{h}$|$ are well defined states;
i.e. their definition does not depend on the order of operator $\mathcal{A}%
_{-}(-\zeta _{b}^{(0)}|\beta +2)$ as one can verify directly from the
commutation relations $\left( \ref{8vREPCMR-AA-BC}\right) $.

\begin{theorem}
\underline{Left $\mathcal{B}_{-}(|\beta )$-SOV-representations} \ Let us
assume that $\left( \ref{8vREPE-SOV}\right) $ and $\left( \ref{8vREPNON-nilp-B-L}%
\right) $ are satisfied, then the states $\left( \ref{8vREPD-left-eigenstates}%
\right) $ define a basis formed out of pseudo-eigenstates of $\mathcal{B}%
_{-}(\lambda |\beta )$:%
\begin{equation}
\langle \beta ,\text{\textbf{h}}|\mathcal{B}_{-}(\lambda |\beta )=\text{%
\textsc{b}}_{\beta ,\text{\textbf{h}}}(\lambda )\langle \beta -2,\text{%
\textbf{h}}|,  \label{8vREPright-B-eigen-cond}
\end{equation}%
where $\langle \beta $, \textbf{h}$|\equiv \langle \beta ,h_{1},...,h_{%
\mathsf{N}}|$ for \textbf{h}$\equiv (h_{1},...,h_{\mathsf{N}})$ and%
\begin{equation}
\text{\textsc{b}}_{\beta ,\text{\textbf{h}}}(\lambda )\equiv \frac{\text{%
\textsc{n}}_{\beta }}{\text{\textsc{n}}_{\beta +2}}\left( -1\right) ^{%
\mathsf{N}}\theta _{4}(2\lambda -\eta |2\omega )\theta \left( \lambda
+(\alpha +1/2)\eta \right) K_{-}(\lambda |\beta )_{12}a_{\text{\textbf{h}}%
}(\lambda )a_{\text{\textbf{h}}}(-\lambda ),
\end{equation}%
with%
\begin{equation}
a_{\text{\textbf{h}}}(\lambda )\equiv \prod_{n=1}^{\mathsf{N}}\theta
(\lambda -\xi _{n}-(h_{n}-\frac{1}{2})\eta ).
\end{equation}%
Moreover, $\mathcal{B}_{-}(\lambda |\beta )$ is an order $4\mathsf{N}+8$
elliptic polynomials of periods $\pi $ and $2\pi \omega $:%
\begin{equation}
\mathcal{B}_{-}(\lambda +\pi |\beta )=\mathcal{B}_{-}(\lambda |\beta ),\text{
}\mathcal{B}_{-}(\lambda +2\pi \omega |\beta )=\left( e^{-2i(\lambda -\eta
/2)}/q^{2}\right) ^{4\mathsf{N}+8}\mathcal{B}_{-}(\lambda |\beta ),
\label{8vREPCharateristic-B}
\end{equation}%
where $q\equiv e^{i\pi \omega }$. $\mathcal{A}_{-}(\lambda |\beta )$ is an
order $4\mathsf{N}+8$ elliptic polynomials of periods $\pi $ and $2\pi
\omega $:%
\begin{eqnarray}
\mathcal{A}_{-}(\lambda +2\pi \omega |\beta ) &=&\left( -e^{-2i\lambda
}/q^{2}\right) ^{4\mathsf{N}+8}e^{2i\alpha _{\mathcal{A}_{-}(\beta )}}%
\mathcal{A}_{-}(\lambda |\beta ),  \label{8vREPCharateristic-A-1} \\
\mathcal{A}_{-}(\lambda +\pi |\beta ) &=&\mathcal{A}_{-}(\lambda |\beta ),%
\text{ \ where }\alpha _{\mathcal{A}_{-}(\beta )}\equiv 2(\mathsf{N}+\beta
)\eta .  \label{8vREPCharateristic-A-2}
\end{eqnarray}%
Moreover, defined the operator$\mathcal{A}_{-}^{\left( 0\right) }(\lambda
|\beta +2)$ by the following action on the generic state $\langle \beta $, 
\textbf{h}$|$:%
\begin{align}
\langle \beta ,\text{\textbf{h}}|\mathcal{A}_{-}^{\left( 0\right) }(\lambda
|\beta +2)& \equiv \sum_{a=1}^{8}\frac{\theta _{1}(2(\mathsf{N}+\beta
+2)-\lambda -\sum_{b=1,b\neq a}^{8}\zeta _{-b}|2\omega )}{\theta _{1}(2(%
\mathsf{N}+\beta +2)-\sum_{b=1}^{8}\zeta _{-b}|2\omega )}\frac{a_{\text{%
\textbf{h}}}(\lambda )a_{\text{\textbf{h}}}(-\lambda )}{a_{\text{\textbf{h}}%
}(\zeta _{-a})a_{\text{\textbf{h}}}(-\zeta _{-a})}  \notag \\
& \times \prod_{b=1,b\neq a}^{8}\frac{\theta _{1}(\lambda -\zeta
_{-b}|2\omega )}{\theta _{1}(\zeta _{-a}-\zeta _{-b}|2\omega )}\langle \beta
,\text{\textbf{h}}|\mathcal{A}_{-}(\zeta _{-a}|\beta +2),  \label{8vREPDef-A_0}
\end{align}%
then the operator:%
\begin{equation}
\widetilde{\mathcal{A}}_{-}(\lambda |\beta +2)\equiv \mathcal{A}_{-}(\lambda
|\beta +2)-\mathcal{A}_{-}^{\left( 0\right) }(\lambda |\beta +2),
\end{equation}%
has the following action on the generic state $\langle \beta $, \textbf{h}$|$%
:%
\begin{align}
\langle \beta ,\text{\textbf{h}}|\widetilde{\mathcal{A}}_{-}(\lambda |\beta
+2)& =\sum_{a=1}^{2\mathsf{N}}\frac{\theta _{4}(2\lambda -\eta |2\omega
)\theta _{1}(2\lambda -\eta |2\omega )\theta _{1}(2(\mathsf{N}+\beta
+2)+\zeta _{a}^{(h_{a})}-\lambda -\sum_{b=1}^{8}\zeta _{-b}|2\omega )}{%
\theta _{4}(2\zeta _{a}^{(h_{a})}-\eta |2\omega )\theta _{1}(2\zeta
_{a}^{(h_{a})}-\eta |2\omega )\theta _{1}(2(\mathsf{N}+\beta
+2)-\sum_{b=1}^{8}\zeta _{-b}|2\omega )}  \notag \\
& \times \frac{\theta _{1}(\lambda +\zeta _{a}^{(h_{a})}|2\omega )\theta
_{2}^{2\mathsf{N}}(\lambda |2\omega )}{\theta _{1}(2\zeta
_{a}^{(h_{a})}|2\omega )\theta _{2}^{2\mathsf{N}}(\zeta
_{a}^{(h_{a})}|2\omega )}\prod_{\substack{ b=1  \\ b\neq a\text{ mod}\mathsf{%
N}}}^{\mathsf{N}}\frac{\frac{\theta _{4}^{2}(\lambda |2\omega )}{\theta
_{2}^{2}(\lambda |2\omega )}-\frac{\theta _{4}^{2}(\zeta
_{b}^{(h_{b})}|2\omega )}{\theta _{2}^{2}(\zeta _{b}^{(h_{b})}|2\omega )}}{%
\frac{\theta _{4}^{2}(\zeta _{a}^{(h_{a})}|2\omega )}{\theta _{2}^{2}(\zeta
_{a}^{(h_{a})}|2\omega )}-\frac{\theta _{4}^{2}(\zeta _{b}^{(h_{b})}|2\omega
)}{\theta _{2}^{2}(\zeta _{b}^{(h_{b})}|2\omega )}}\mathsf{A}_{-}(\zeta
_{a}^{(h_{a})})\langle \beta ,\text{\textbf{h}}|\text{T}_{a}^{-\varphi _{a}}
\label{8vREPInterp-A-tilde-SOV}
\end{align}%
and:%
\begin{equation}
\langle \beta ,h_{1},...,h_{a},...,h_{\mathsf{N}}|\text{T}_{a}^{\pm
}=\langle \beta ,h_{1},...,h_{a}\pm 1,...,h_{\mathsf{N}}|,\text{ \ }\mathsf{A%
}_{-}(\lambda )\equiv r(\lambda )\widehat{\mathsf{A}}_{-}(\lambda ).
\label{8vREPDef-A_}
\end{equation}%
\smallskip
\end{theorem}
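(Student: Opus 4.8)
The plan is to carry out a Sklyanin‑type separation of variables: build the left $\mathcal{B}_{-}(\lambda|\beta)$‑pseudo‑eigenbasis out of the bulk reference covector, extract the $\mathcal{B}_{-}$‑pseudo‑spectrum from the reflection‑algebra commutation relations, and then reconstruct $\mathcal{A}_{-}(\lambda|\beta+2)$ on that basis by elliptic Lagrange interpolation controlled by the quantum determinant.

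First I would check that the covector $\langle\beta|$ of \eqref{8vREPLeft-B-ref} is a $\mathcal{B}_{-}(\lambda|\beta)$‑pseudo‑eigenstate with the eigenvalue displayed in the theorem for $\mathbf{h}=\mathbf{0}$. Expanding $\mathsf{U}_{-}(\lambda|\beta)$ into the bulk gauge‑transformed monodromy operators and the gauged boundary matrix $K_{-}(\lambda|\beta)$ (inserting the resolutions \eqref{8vREPId-decomp-bar}, \eqref{8vREPId-decomp-tilde}) and using the simultaneous reference relations of the preceding proposition — that $\langle\beta|$ is annihilated by $B(\lambda|\beta),\bar{B}(\lambda|\beta)$ and by the $C,\bar{C}$‑type operators, and is diagonal on $A,D,\bar{A},\bar{D}$ — one sees that essentially a single channel (of $A$‑times‑$\bar{D}$ type, with matching gauge labels) survives, so $\langle\beta|\mathcal{B}_{-}(\lambda|\beta)$ is proportional to $\langle\beta-2|$; the surviving scalar assembles, after the gauge‑index bookkeeping, into $\prod_{n}\theta(\lambda-\xi_{n}+\eta/2)\,\theta(-\lambda-\xi_{n}+\eta/2)=a_{\mathbf{0}}(\lambda)a_{\mathbf{0}}(-\lambda)$ times $(-1)^{\mathsf{N}}\theta_{4}(2\lambda-\eta|2\omega)\,\theta(\lambda+(\alpha+\tfrac{1}{2})\eta)\,K_{-}(\lambda|\beta)_{12}$, which is the claimed eigenvalue up to the normalization ratio. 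To pass to arbitrary $\mathbf{h}$ I would commute $\mathcal{B}_{-}(\lambda|\beta)$ leftward through $\prod_{n}(\mathcal{A}_{-}(\eta/2-\xi_{n}|\beta+2)/\mathsf{A}_{-}(\eta/2-\xi_{n}))^{h_{n}}$ using \eqref{8vREPCMR-AB-Left}: at each swap the two ``unwanted'' terms carry $\mathcal{B}_{-}$ and $\mathcal{D}_{-}$ at the colliding argument $\eta/2-\xi_{n}$, and they telescope away because $\langle\beta,\mathbf{h}|\mathcal{B}_{-}$ already vanishes at $\pm\zeta_{n}^{(h_{n})}$ when the site $n$ is occupied, leaving only the factor that converts the zero pair $\pm\zeta_{n}^{(h_{n})}$ into $\pm\zeta_{n}^{(h_{n}+1)}$, i.e. $a_{\mathbf{h}}\mapsto a_{\mathbf{h}+\mathbf{e}_{n}}$. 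This gives \eqref{8vREPright-B-eigen-cond}. Well‑definedness of $\langle\beta,\mathbf{h}|$ (order independence) follows from \eqref{8vREPCMR-AA-BC} evaluated at the two inhomogeneities, and under \eqref{8vREPE-SOV} the zero sets $\{\pm\zeta_{n}^{(h_{n})}\}_{n}$ — hence the pseudo‑eigenvalues — are pairwise distinct, so a standard SOV argument yields linear independence of the $2^{\mathsf{N}}=\dim\mathcal{R}_{\mathsf{N}}$ states $\langle\beta,\mathbf{h}|$, hence the basis property and simplicity of the pseudo‑spectrum.

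The characteristic statements \eqref{8vREPCharateristic-B}, \eqref{8vREPCharateristic-A-1}, \eqref{8vREPCharateristic-A-2} are obtained by inspecting the theta‑function content of $\mathsf{U}_{-}(\lambda|\beta)$: each entry is a finite sum of products of entries of $R^{\mathsf{(8V)}}$, of $K_{-}$, and of the intertwining (co)vectors $X,Y,\tilde{X},\tilde{Y}$, all theta functions of $\lambda$ or $2\lambda$; the rescaling $r(\lambda)$ is exactly what clears the denominators produced by $\tilde{G}^{-1}$, so $\mathcal{A}_{-}(\lambda|\beta)$ and $\mathcal{B}_{-}(\lambda|\beta)$ are genuine elliptic polynomials, and collecting the multipliers under $\lambda\mapsto\lambda+\pi$ and $\lambda\mapsto\lambda+2\pi\omega$ gives order $4\mathsf{N}+8$ with the displayed quasi‑periodicity; the $\pi$‑periodicity is immediate, and the constant $e^{2i\alpha_{\mathcal{A}_{-}(\beta)}}$, $\alpha_{\mathcal{A}_{-}(\beta)}=2(\mathsf{N}+\beta)\eta$, merely records the $\beta\mapsto\beta-1$ shift carried by $\mathcal{A}_{-}$ (already visible in $\langle\beta|A(\lambda|\beta)\propto\langle\beta-1|$), while $\mathcal{B}_{-}$ carries the even shift $\beta\mapsto\beta-2$ and stays strictly periodic.

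Finally, the splitting $\mathcal{A}_{-}=\mathcal{A}_{-}^{(0)}+\widetilde{\mathcal{A}}_{-}$. Both $\mathcal{A}_{-}(\lambda|\beta+2)$ and $\mathcal{A}_{-}^{(0)}(\lambda|\beta+2)$ are elliptic polynomials in $\lambda$ of order $4\mathsf{N}+8$ with the multiplier of \eqref{8vREPCharateristic-A-1}, and by construction $\mathcal{A}_{-}^{(0)}$ agrees with $\mathcal{A}_{-}$ at the eight anchor points $\zeta_{-a}$ (the interpolation coefficient in \eqref{8vREPDef-A_0} equals $1$ there) and carries the prefactor $a_{\mathbf{h}}(\lambda)a_{\mathbf{h}}(-\lambda)$, so it vanishes at the $2\mathsf{N}$ points $\pm\zeta_{n}^{(h_{n})}$ forced by the $\mathcal{B}_{-}$‑structure. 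Hence $\widetilde{\mathcal{A}}_{-}=\mathcal{A}_{-}-\mathcal{A}_{-}^{(0)}$ vanishes at the $\zeta_{-a}$ — which is also true of the right‑hand side of \eqref{8vREPInterp-A-tilde-SOV}, since there the factor $\theta_{4}(2\lambda-\eta|2\omega)\theta_{1}(2\lambda-\eta|2\omega)$ has precisely those eight zeros — and the remaining content of $\widetilde{\mathcal{A}}_{-}$ is carried by interpolation over the $2\mathsf{N}$ points $\zeta_{a}^{(h_{a})}$, where the commutation relations \eqref{8vREPCMR-AB-Left} (together with the reference relations and the definition $\mathsf{A}_{-}(\lambda)=r(\lambda)\widehat{\mathsf{A}}_{-}(\lambda)$) identify the action as the pure shift $\langle\beta,\mathbf{h}|\mathcal{A}_{-}(\zeta_{a}^{(h_{a})}|\beta+2)=\mathsf{A}_{-}(\zeta_{a}^{(h_{a})})\langle\beta,\mathbf{h}|\mathrm{T}_{a}^{-\varphi_{a}}$; writing the resulting elliptic Lagrange interpolation in the natural variable $\theta_{4}^{2}/\theta_{2}^{2}$ produces \eqref{8vREPInterp-A-tilde-SOV}. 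I expect the crux to be precisely this last step: showing that the data at the eight $\zeta_{-a}$ and the $2\mathsf{N}$ points $\zeta_{a}^{(h_{a})}$ — supplemented by the relations that the quantum determinant \eqref{8vREPgauge-q-det-A} and the parity relations \eqref{8vREPparity-m-1}, \eqref{8vREPparity-m-3} impose where $\langle\beta,\mathbf{h}|\mathcal{B}_{-}$ is known to vanish — genuinely determine the order $4\mathsf{N}+8$ operator $\mathcal{A}_{-}(\lambda|\beta+2)$ on the span of the $\langle\beta,\mathbf{h}|$, and then matching the exact theta coefficients in \eqref{8vREPDef-A_0} and \eqref{8vREPInterp-A-tilde-SOV}. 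This is the quantum analogue of Sklyanin's reconstruction of the separated operator from its values at the separating points, and it is a delicate elliptic‑interpolation computation over the period lattice $(\pi,2\pi\omega)$.
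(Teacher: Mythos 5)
Your plan follows, in substance, the same route as the paper's proof: the reference covector argument via the boundary--bulk decomposition of $\mathcal{B}_{-}(\lambda|\beta)$, the raising by $\mathcal{A}_{-}(\eta/2-\xi_{n}|\beta+2)$ with the unwanted terms killed by the zeros of the $\mathcal{B}_{-}$-pseudo-eigenvalue (this is exactly the step the paper delegates to \cite{8vREPN12-0}), the basis property under \rf{8vREPE-SOV}, the determination of the action of $\mathcal{A}_{-}$ at the separating points $\zeta_{a}^{(h_{a})}$ from \rf{8vREPCMR-AB-Left} together with the annihilation conditions and the quantum determinant \rf{8vREPgauge-q-det-A}, and the final reconstruction by elliptic interpolation after subtracting $\mathcal{A}_{-}^{(0)}$, both of which vanish at the eight points $\zeta_{-a}$. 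The one place where you genuinely diverge is the derivation of the characteristic properties \rf{8vREPCharateristic-B} and \rf{8vREPCharateristic-A-1}--\rf{8vREPCharateristic-A-2}: you propose to read off the multipliers by direct inspection of the theta-function content of the gauged operator, and you justify the phase $e^{2i\alpha_{\mathcal{A}_{-}(\beta)}}$ by saying it ``records the $\beta\mapsto\beta-1$ shift''. That gloss is not a derivation: the multiplier concerns the $\lambda$-dependence, and its $\beta$- and $\mathsf{N}$-dependence arises from the arguments $(\alpha\pm\beta)\eta$ of the intertwining (co)vectors combined with the bulk factors, so a direct-inspection proof would require the full bookkeeping showing that $\alpha_{\mathcal{A}_{-}(\beta)}=2(\mathsf{N}+\beta)\eta$ comes out free of $\alpha$ and of the inhomogeneities --- feasible, but not indicated in your plan. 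The paper's route is cleaner and you should adopt it: \rf{8vREPCharateristic-B} follows from the explicit pseudo-eigenvalues once the pseudo-basis is established, and then shifting $\lambda_{2}\rightarrow\lambda_{2}+2\pi\omega$ in \rf{8vREPCMR-AB-Left} and matching the multipliers of the three terms against the known one of $\mathcal{B}_{-}$ forces $f_{\mathcal{A}_{-}(\beta)}(\lambda)$, i.e.\ \rf{8vREPCharateristic-A-1}--\rf{8vREPCharateristic-A-2}. This constant is not cosmetic: it fixes the kernel $\theta_{1}(2(\mathsf{N}+\beta+2)+\cdots-\lambda-\sum_{b}\zeta_{-b}|2\omega)$ appearing in \rf{8vREPDef-A_0} and \rf{8vREPInterp-A-tilde-SOV}, so without it the interpolation step that you rightly single out as the crux cannot be completed.
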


\begin{proof}[Proof]
The following boundary-bulk decomposition:%
\begin{align}
\frac{\mathcal{B}_{-}(\lambda |\beta )}{\theta _{4}(2\lambda -\eta |2\omega
)\theta \left( \lambda +(\alpha +1/2)\eta \right) }& =K_{-}(\lambda |\beta
)_{22}B(\lambda |\beta )\bar{D}(\lambda |\beta -1)+K_{-}(\lambda |\beta
)_{11}A(\lambda |\beta )\bar{B}(\lambda |\beta -1)  \notag \\
& +K_{-}(\lambda |\beta )_{21}B(\lambda |\beta )\bar{B}(\lambda |\beta
-1)+K_{-}(\lambda |\beta )_{12}A(\lambda |\beta )\bar{D}(\lambda |\beta -1),
\end{align}%
of the gauge transformed reflection algebra generator $\mathcal{B}%
_{-}(\lambda |\beta )$ in terms of the gauge transformed bulk generators and
the formulae $\left( \ref{8vREPId-left-ref1}\right) $-$\left( \ref{8vREPId-left-ref5}%
\right) $ imply that $\langle \beta |$ is a $\mathcal{B}_{-}(\lambda |\beta
) $-pseudo-eigenstate:%
\begin{equation}
\langle \beta |\mathcal{B}_{-}(\lambda )\equiv \text{\textsc{b}}_{\beta ,%
\text{\textbf{0}}}(\lambda )\langle \beta -2|,
\end{equation}
with non-zero pseudo-eigenvalue:%
\begin{equation}
\text{\textsc{b}}_{\beta ,\text{\textbf{0}}}(\lambda )=\left( -1\right) ^{%
\mathsf{N}}K_{-}(\lambda |\beta )_{12}\frac{\text{\textsc{n}}_{\beta }}{%
\text{\textsc{n}}_{\beta +2}}\theta _{4}(2\lambda -\eta |2\omega )\theta
\left( \lambda +(\alpha +1/2)\eta \right) a_{\text{\textbf{0}}}(\lambda )a_{%
\text{\textbf{0}}}(-\lambda ).
\end{equation}%
To
prove the validity of $(\ref{8vREPright-B-eigen-cond})$ we can use now step by step the procedure described in \cite{8vREPN12-0} starting from the gauge
transformed reflection algebra commutation relations. Under the condition $\left( \ref{8vREPE-SOV}\right) $, these relations also imply that the set of
states $\langle \beta $, \textbf{h}$|$ forms a set of 2$^{\mathsf{N}}$
independent states, i.e. a $\mathcal{B}_{-}(\lambda |\beta )$%
-pseudo-eigenbasis of the left representation space. Moreover, the
definition of the states $\langle \beta $, \textbf{h}$|$ and the commutation
relation $\left( \ref{8vREPCMR-AB-Left}\right) $ allow to define the action of $%
\mathcal{A}_{-}(\zeta _{b}^{(h_{b})}|\beta +2)$ for $b\in \{1,...,2\mathsf{N}%
\}$ once we use the quantum determinant relations and the conditions:%
\begin{equation}
\langle \beta |\mathcal{A}_{-}(\xi _{n}-\eta /2|\beta +2)=\text{\b{0}, \ \ }%
\langle \beta |\mathcal{A}_{-}(\eta /2-\xi _{n}|\beta +2)\neq \text{\b{0}}
\end{equation}%
which trivially follows from the boundary-bulk decomposition:%
\begin{align}
\frac{\mathcal{A}_{-}(\lambda |\beta +2)}{\theta _{4}(2\lambda -\eta
|2\omega )\theta \left( \lambda +(\alpha +1/2)\eta \right) }& =\bar{K}%
_{-}(\lambda |\beta )_{11}A(\lambda |\beta )\bar{A}(\lambda |\beta +1)+\bar{K%
}_{-}(\lambda |\beta )_{22}B(\lambda |\beta )\bar{C}(\lambda |\beta +1) 
\notag \\
& +\bar{K}_{-}(\lambda |\beta )_{21}B(\lambda |\beta )\bar{A}(\lambda |\beta
+1)+\bar{K}_{-}(\lambda |\beta )_{12}A(\lambda |\beta )\bar{C}(\lambda
|\beta +1),
\end{align}%
where we have defined 
\begin{equation}
\begin{aligned}
&\bar{K}_{-}(\lambda |\beta )_{11}\equiv \tilde{Y}_{\beta +\mathsf{N}%
-1}(\lambda -\eta /2)K_{-}(\lambda )X_{\beta +\mathsf{N}+1}(\eta /2-\lambda
),\\
 & \bar{K}_{-}(\lambda |\beta )_{12}\equiv \tilde{Y}_{\beta +\mathsf{N}%
-1}(\lambda -\eta /2)K_{-}(\lambda )Y_{\beta +\mathsf{N}+1}(\eta /2-\lambda
), \\ 
&\bar{K}_{-}(\lambda |\beta )_{21}\equiv \tilde{X}_{\beta +\mathsf{N}%
+1}(\lambda -\eta /2)K_{-}(\lambda )X_{\beta +\mathsf{N}+1}(\eta /2-\lambda
), \\
& \bar{K}_{-}(\lambda |\beta )_{22}\equiv \tilde{X}_{\beta +\mathsf{N}%
+1}(\lambda -\eta /2)K_{-}(\lambda )Y_{\beta +\mathsf{N}+1}(\eta /2-\lambda
).%
\end{aligned}%
\end{equation}%
The fact that the operator $\mathcal{B}_{-}(\lambda |\beta )$ is an order $4%
\mathsf{N}+8$ elliptic polynomials of periods $\pi $ and $2\pi \omega $
which satisfies $\left( \ref{8vREPCharateristic-B}\right) $ can be simply derived
from the functional form of its pseudo-eigenvalues once we recall the
identities\footnote{%
See the equations 8.182-1, 8.182-3 and 8.183-5, 8.183-6 at page 878 of \cite{8vREPTables of integrals}.}: 
\begin{equation}
\theta _{a}(x+\pi |2\omega )=\left( -1\right) ^{\delta _{a,1}+\delta
_{a,2}}\theta _{a}(x|2\omega ),\text{ \ \ \ }\theta _{a}(x+2\pi \omega
|2\omega )=\left( -1\right) ^{\delta _{a,1}+\delta _{a,4}}e^{-2i\left( x+\pi
\omega \right) }\theta _{a}(x|2\omega ),
\end{equation}%
from which also follows:%
\begin{equation}
\theta (x+\pi )=-\theta (x),\text{ \ }\theta (x+2\pi \omega )=e^{-4i\left(
x+\pi \omega \right) }\theta (x).  \label{8vREPTheta-periods}
\end{equation}%
The fact that the operator $\mathcal{A}_{-}(\lambda |\beta )$ is an order $4%
\mathsf{N}+8$ elliptic polynomials of periods $\pi $ and $2\pi \omega $
which satisfies $\left( \ref{8vREPCharateristic-A-1}\right)$-$\rf{8vREPCharateristic-A-2} $ can be simply derived
from $\left( \ref{8vREPCharateristic-B}\right) $ by using the commutation
relations $\left( \ref{8vREPCMR-AB-Left}\right) $. Indeed, shifting the variable $%
\lambda _{2}$ in $\lambda _{2}+2\pi \omega $ and using the transformation
properties $\left( \ref{8vREPCharateristic-B}\right) $ and $\left( \ref{8vREPTheta-periods}\right) $, we get:%
\begin{align}
& f_{\mathcal{A}_{-}(\beta +2)}(\lambda _{2})\mathcal{A}_{-}(\lambda
_{2}|\beta +2)\mathcal{B}_{-}(\lambda _{1}|\beta )\left. =\right. \text{ \ \
\ \ \ \ \ \ \ \ \ \ \ \ \ \ \ \ \ }  \notag \\
& \text{ \ \ \ \ \ }\frac{\theta (\lambda _{1}-\lambda _{2}+\eta )\theta
(\lambda _{2}+\lambda _{1}-\eta )}{\theta (\lambda _{1}-\lambda _{2})\theta
(\lambda _{1}+\lambda _{2})}e^{8i\eta }f_{\mathcal{A}_{-}(\beta )}(\lambda
_{2})\mathcal{B}_{-}(\lambda _{1}|\beta )\mathcal{A}_{-}(\lambda _{2}|\beta )
\notag \\
& \text{ \ \ \ \ \ }+\frac{\theta (\lambda _{1}+\lambda _{2}-\eta )\theta
(\lambda _{1}-\lambda _{2}+(\beta -1)\eta )\theta (\eta )}{\theta (\lambda
_{2}-\lambda _{1})\theta (\lambda _{1}+\lambda _{2})\theta ((\beta -1)\eta
)e^{-4i\beta \eta }}f_{\mathcal{B}_{-}(\beta )}(\lambda _{2})\mathcal{B}%
_{-}(\lambda _{2}|\beta )\mathcal{A}_{-}(\lambda _{1}|\beta ) \\
& \text{ \ \ \ \ \ }+\frac{\theta (\eta )\theta (\lambda _{1}+\lambda
_{2}-\beta \eta )}{\theta (\lambda _{1}+\lambda _{2})\theta ((\beta -1)\eta )%
}e^{4i\beta \eta }f_{\mathcal{B}_{-}(\beta )}(\lambda _{2})\mathcal{B}%
_{-}(\lambda _{2}|\beta )\mathcal{D}_{-}(\lambda _{1}|\beta ).
\end{align}%
where $f_{\mathcal{A}_{-}(\beta
)}(\lambda )$ is defined by:%
\begin{equation}
\mathcal{A}_{-}(\lambda +2\pi \omega |\beta )=f_{\mathcal{A}_{-}(\beta
)}(\lambda )\mathcal{A}_{-}(\lambda |\beta ),
\end{equation}%
which implies:%
\begin{equation}
f_{\mathcal{A}_{-}(\beta )}(\lambda )\equiv \left( -e^{-2i\lambda
}/q^{2}\right) ^{4\mathsf{N}+8}e^{2i\alpha _{\mathcal{A}_{-}(\beta )}}\text{
\ where }\alpha _{\mathcal{A}_{-}(\beta )}\equiv 2(\mathsf{N}+\beta )\eta .
\end{equation}%
Moreover, by the definition $\left( \ref{8vREPDef-A_0}\right) $ it is simple to
argue that the operator $\mathcal{A}_{-}^{\left( 0\right) }(\lambda |\beta )$
is also an order $4\mathsf{N}+8$ elliptic polynomial of periods $\pi $ and $%
2\pi \omega $ which satisfies $\left( \ref{8vREPCharateristic-A-1}\right) $ and $%
\left( \ref{8vREPCharateristic-A-2}\right) $ and then the same is true for $%
\widetilde{\mathcal{A}}_{-}(\lambda |\beta )$. These properties together
with the identities:%
\begin{equation}
\widetilde{\mathcal{A}}_{-}(\text{\ }\zeta _{-a}|\beta )\equiv \text{\b{0} \
for any }a\in \{1,...,8\},
\end{equation}%
imply the interpolation formula $\left( \ref{8vREPInterp-A-tilde-SOV}\right) $
by using the following interpolation formula:%
\begin{equation}
\mathcal{P}(\lambda )=\sum_{a=1}^{\mathsf{M}}\frac{\theta (\alpha _{\mathcal{%
P}}+x_{a}-\lambda -\sum_{n=1}^{\mathsf{M}}x_{n})}{\theta (\alpha _{\mathcal{P%
}}-\sum_{n=1}^{\mathsf{M}}x_{n})}\prod_{b\neq a}\frac{\theta (\lambda -x_{b})%
}{\theta (x_{a}-x_{b})}\mathcal{P}(x_{a}),
\end{equation}%
which holds true for any order $\mathsf{M}$ elliptic polynomial such that:%
\begin{equation}
\mathcal{P}(\lambda +\pi )=\left( -1\right) ^{\mathsf{M}}\mathcal{P}(\lambda
),\text{ \ }\mathcal{P(}\lambda +2\pi \omega )=\left( -e^{-2i\lambda
}/q^{2}\right) ^{\mathsf{M}}e^{2i\alpha _{\mathcal{P}}}\mathcal{P}(\lambda ).
\end{equation}
\end{proof}

\subsubsection{Gauge transformed reflection algebra in right $\mathcal{B}_{-}(|\protect\beta )$-SOV representations}

The right $\mathcal{B}_{-}(|\beta )$-pseudo-eigenbasis is here constructed
and the representation of the gauge transformed boundary operator $\mathcal{D}%
_{-}(\lambda |\beta )$ in this basis is determined. Let us use the following
notation:%
\begin{equation}
\overline{|\beta \rangle }\equiv |-\beta +2\rangle ,
\end{equation}%
where $|\beta \rangle $ is the right reference state defined in $\left( \ref{8vREPRight-C-ref}\right) $. Further, let us introduce the states:%
\begin{equation}
|\beta ,h_{1},...,h_{\mathsf{N}}\rangle \equiv \frac{1}{\text{\textsc{n}}%
_{\beta }}\prod_{n=1}^{\mathsf{N}}\left( \frac{\mathcal{D}_{-}(\xi _{n}+\eta
/2|\beta )}{k_{n}^{(\beta )}\mathsf{A}_{-}(\eta /2-\xi _{n})}\right)
^{(1-h_{n})}\overline{|\beta \rangle },  \label{8vREPD-right-eigenstates}
\end{equation}%
where:%
\begin{equation}
k_{a}^{(\beta )}\equiv \frac{\theta \left( 2\xi _{a}+\eta \right) \theta
\left( \beta \eta \right) \theta _{1}(2(\mathsf{N}+2-\beta
)-\sum_{b=1}^{8}\zeta _{-b}-2\xi _{a}|2\omega )\theta _{1}(\eta |2\omega
)\theta _{2}^{2\mathsf{N}}(\zeta _{a}^{(1)}|2\omega )}{\theta (\eta )\theta
\left( 2\xi _{a}+\beta \eta \right) \theta _{1}(2(\mathsf{N}+2-\beta
)-\sum_{b=1}^{8}\zeta _{-b}|2\omega )\theta _{1}(2\zeta _{a}^{(0)}|2\omega
)\theta _{2}^{2\mathsf{N}}(\zeta _{a}^{(0)}|2\omega )},
\end{equation}%
$h_{n}\in \{0,1\},$ $n\in \{1,...,\mathsf{N}\}$. It is important pointing
out that the states $|\beta ,$ \textbf{h}$\rangle $ are well defined states
being their definition independent on the order of operator $\mathcal{D}%
_{-}(-\zeta _{b}^{(0)}|\beta )$ as one can verify directly by using the
commutation relations $\left( \ref{8vREPCMR-AA-BC}\right) $ and the\ $\beta $%
-parity relation $\left( \ref{8vREPB-to-C-identity}\right) $.

\begin{theorem}
\underline{Right $\mathcal{B}_{-}(\lambda |\beta )$ SOV-representations} \
If $\left( \ref{8vREPE-SOV}\right) $ and\footnote{%
Note that this is the condition $\left( \ref{8vREPNON-nilp-B-R}\right) $ in $%
\beta ^{\prime }$ for $\beta ^{\prime }=\beta -2$.} 
\begin{equation}
\tilde{K}_{-}(\lambda |-\beta +2)_{21}\neq 0,
\end{equation}%
are satisfied, then the states $|\beta ,\text{\textbf{h}}\rangle$ defines a basis formed out of $%
\mathcal{B}_{-}(\lambda |\beta )$-pseudo-eigenstates:%
\begin{equation}
\mathcal{B}_{-}(\lambda |\beta )|\beta ,\text{\textbf{h}}\rangle =|\beta +2,%
\text{\textbf{h}}\rangle \text{\textsc{\={b}}}_{\beta ,\text{\textbf{h}}%
}(\lambda ),  \label{8vREPleft-B-eigen-cond}
\end{equation}%
where:%
\begin{equation}
\text{\textsc{\={b}}}_{\beta ,\text{\textbf{h}}}(\lambda )\equiv \left(
-1\right) ^{\mathsf{N}}\tilde{K}_{-}(\lambda |-\beta +2)_{21}\frac{\theta
_{4}(2\lambda -\eta |2\omega )\theta \left( \lambda +(\alpha +1/2)\eta
\right) \theta (\eta (\beta -\mathsf{N}))}{\theta \left( \eta \beta \right) (%
\text{\textsc{n}}_{\beta }/\text{\textsc{n}}_{\beta +2})\left( \prod_{n=1}^{%
\mathsf{N}}k_{n}^{(\beta )}/k_{n}^{(\beta +2)}\right) }a_{\text{\textbf{h}}%
}(\lambda )a_{\text{\textbf{h}}}(-\lambda ).
\end{equation}%
Moreover, $\mathcal{D}_{-}(\lambda |\beta )$ is an order $4\mathsf{N}+8$
elliptic polynomials of periods $\pi $ and $2\pi \omega $:%
\begin{eqnarray}
\mathcal{D}_{-}(\lambda +2\pi \omega |\beta ) &=&\left( -e^{-2i\lambda
}/q^{2}\right) ^{4\mathsf{N}+8}e^{2i\alpha _{\mathcal{D}_{-}(\beta )}}%
\mathcal{D}_{-}(\lambda |\beta ),  \label{8vREPCharateristic-D-1} \\
\mathcal{D}_{-}(\lambda +\pi |\beta ) &=&\mathcal{D}_{-}(\lambda |\beta ),%
\text{ \ where }\alpha _{\mathcal{D}_{-}(\beta )}\equiv 2(\mathsf{N}+2-\beta
)\eta .  \label{8vREPCharateristic-D-2}
\end{eqnarray}%
Defined the operator $\mathcal{D}_{-}^{\left( 0\right) }(\lambda |\beta )$ by
the following action on the generic state $|\beta $, \textbf{h}$\rangle $:%
\begin{eqnarray}
\mathcal{D}_{-}^{\left( 0\right) }(\lambda |\beta )|\beta ,\text{\textbf{h}}%
\rangle &\equiv &\sum_{a=1}^{8}\frac{\theta _{1}(2(\mathsf{N}+2-\beta
)-\lambda -\sum_{b=1,b\neq a}^{8}\zeta _{-b}|2\omega )}{\theta _{1}(2(%
\mathsf{N}+2-\beta )-\sum_{b=1}^{8}\zeta _{-b}|2\omega )}\frac{a_{\text{%
\textbf{h}}}(\lambda )a_{\text{\textbf{h}}}(-\lambda )}{a_{\text{\textbf{h}}%
}(\zeta _{-a})a_{\text{\textbf{h}}}(-\zeta _{-a})} \\
&&\times \prod_{b=1,b\neq a}^{8}\frac{\theta _{1}(\lambda -\zeta
_{-b}|2\omega )}{\theta _{1}(\zeta _{-a}-\zeta _{-b}|2\omega )}\mathcal{D}%
_{-}(\zeta _{-a}|\beta )|\beta ,\text{\textbf{h}}\rangle ,  \label{8vREPDef-D_0}
\end{eqnarray}%
then the operator:%
\begin{equation}
\widetilde{\mathcal{D}}_{-}(\lambda |\beta )\equiv \mathcal{D}_{-}(\lambda
|\beta )-\mathcal{D}_{-}^{\left( 0\right) }(\lambda |\beta ),
\end{equation}%
has the following action on the generic state $|\beta ,$\textbf{h}$\rangle $%
: 
\begin{align}
\widetilde{\mathcal{D}}_{-}(\lambda |\beta )|\beta ,\text{\textbf{h}}\rangle
& =\sum_{a=1}^{2\mathsf{N}}\text{T}_{a}^{-\varphi _{a}}|\beta ,\text{\textbf{%
h}}\rangle \mathsf{D}_{-}(\zeta _{a}^{(h_{a})})\frac{\theta _{1}(\lambda
+\zeta _{a}^{(h_{a})}|2\omega )\theta _{2}^{2\mathsf{N}}(\lambda |2\omega )}{%
\theta _{1}(2\zeta _{a}^{(h_{a})}|2\omega )\theta _{2}^{2\mathsf{N}}(\zeta
_{a}^{(h_{a})}|2\omega )}\prod_{\substack{ b=1  \\ b\neq a\text{ mod}\mathsf{%
N}}}^{\mathsf{N}}\frac{\frac{\theta _{4}^{2}(\lambda |2\omega )}{\theta
_{2}^{2}(\lambda |2\omega )}-\frac{\theta _{4}^{2}(\zeta
_{b}^{(h_{b})}|2\omega )}{\theta _{2}^{2}(\zeta _{b}^{(h_{b})}|2\omega )}}{%
\frac{\theta _{4}^{2}(\zeta _{a}^{(h_{a})}|2\omega )}{\theta _{2}^{2}(\zeta
_{a}^{(h_{a})}|2\omega )}-\frac{\theta _{4}^{2}(\zeta _{b}^{(h_{b})}|2\omega
)}{\theta _{2}^{2}(\zeta _{b}^{(h_{b})}|2\omega )}}  \notag \\
& \times \frac{\theta _{4}(2\lambda -\eta |2\omega )\theta _{1}(2\lambda
-\eta |2\omega )\theta _{1}(2(\mathsf{N}+2-\beta )+\zeta
_{a}^{(h_{a})}-\lambda -\sum_{b=1}^{8}\zeta _{-b}|2\omega )}{\theta
_{4}(2\zeta _{a}^{(h_{a})}-\eta |2\omega )\theta _{1}(2\zeta
_{a}^{(h_{a})}-\eta |2\omega )\theta _{1}(2(\mathsf{N}+2-\beta
)-\sum_{b=1}^{8}\zeta _{-b}|2\omega )},  \label{8vREPInterp-D-tilde-SOV}
\end{align}%
where:%
\begin{equation}
\mathsf{D}_{-}(\zeta _{a}^{(h_{a})})=(k_{a}^{(\beta )})^{\varphi _{a}}%
\mathsf{A}_{-}(\zeta _{a}^{(h_{a})}-2\varphi _{a}\xi _{a}),\text{ \ \ \ \ \ T%
}_{a}^{\pm }|\beta ,h_{1},...,h_{a},...,h_{\mathsf{N}}\rangle =|\beta
,h_{1},...,h_{a}\pm 1,...,h_{\mathsf{N}}\rangle .
\end{equation}
\end{theorem}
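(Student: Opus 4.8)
The plan is to obtain this theorem as the right-representation analog of the preceding ``Left $\mathcal{B}_{-}(|\beta)$-SOV-representations'' theorem, running the same three-step scheme: (i) show that the reference state $\overline{|\beta\rangle}=|{-\beta+2}\rangle$ is a $\mathcal{B}_{-}(\lambda|\beta)$-pseudo-eigenstate; (ii) generate a pseudo-eigenbasis $|\beta,\textbf{h}\rangle$ by repeated action of $\mathcal{D}_{-}(\xi_{n}+\eta/2|\cdot)$ and prove its independence; (iii) fix the action of $\mathcal{D}_{-}(\lambda|\beta)$ by elliptic interpolation. The only structural change with respect to the left case is that one works with the bulk right reference state of Proposition~\ref{8vREPRight-ref} instead of $\langle\beta|$, and reads the commutation relations of Lemma with $\mathcal{B}_{-},\mathcal{D}_{-},\mathcal{A}_{-}$ from the right; the $\beta$-parity relations $(\ref{8vREPB-to-C-identity})$ are what make the right reference state applicable since $\mathcal{B}_{-}(\lambda|\beta)=\mathcal{C}_{-}(\lambda|{-\beta+2})$.

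For step (i) I would write the boundary--bulk decomposition of $\mathcal{B}_{-}(\lambda|\beta)/(\theta_{4}(2\lambda-\eta|2\omega)\theta(\lambda+(\alpha+1/2)\eta))$ adapted to right action, i.e.\ in terms of the gauge-transformed bulk operators $A(\lambda|\cdot),D(\lambda|\cdot),\bar{A}(\lambda|\cdot),\bar{C}(\lambda|\cdot)$ evaluated on $\overline{|\beta\rangle}=|({-\beta+1})+1\rangle$, and apply Proposition~\ref{8vREPRight-ref} (with $\beta\to-\beta+1$) together with $(\ref{8vREPB-to-C-identity})$; the $\bar{C}/C$-annihilation together with the explicit $A,D,\bar{A},\bar{D}$ actions yields $\mathcal{B}_{-}(\lambda|\beta)\overline{|\beta\rangle}=\overline{|\beta+2\rangle}\,\text{\textsc{\=b}}_{\beta,\textbf{1}}(\lambda)$ with the stated non-zero pseudo-eigenvalue, the prefactor $\tilde{K}_{-}(\lambda|{-\beta+2})_{21}$ coming from the surviving entry of the gauge-transformed $K_{-}$ matrix. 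For step (ii) I would push $\mathcal{B}_{-}(\lambda|\beta)$ through the string of $\mathcal{D}_{-}$'s defining $|\beta,\textbf{h}\rangle$ using $(\ref{8vREPBD-DB-CMR})$: its first, gauge-independent coefficient reproduces the pseudo-eigenvalue (which, through the factor $a_{\textbf{h}}(\lambda)a_{\textbf{h}}(-\lambda)$, vanishes at $\lambda=\pm\zeta_{n}^{(h_{n})}$), while the two remaining terms must cancel --- this is exactly the well-definedness of $|\beta,\textbf{h}\rangle$, which the text already notes follows from $(\ref{8vREPCMR-AA-BC})$ and $(\ref{8vREPB-to-C-identity})$. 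Independence of the $2^{\mathsf{N}}$ states, hence the basis property, follows under $(\ref{8vREPE-SOV})$, which guarantees that the $2\mathsf{N}$ shift points $\zeta_{n}^{(h_{n})}$ are pairwise distinct modulo the periods so that the pseudo-eigenvalues separate the states; this is the argument carried out in detail in \cite{8vREPN12-0} and transposed here verbatim.

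For step (iii), the values $\mathcal{D}_{-}(\zeta_{a}^{(h_{a})}|\beta)|\beta,\textbf{h}\rangle$ for $a\in\{1,\dots,2\mathsf{N}\}$ are obtained from $(\ref{8vREPBD-DB-CMR})$ and the quantum-determinant identities $(\ref{8vREPgauge-q-det-A})$--$(\ref{8vREPgauge-q-det-D})$: acting with $\mathcal{D}_{-}$ at $\xi_{n}+\eta/2$ builds one more factor in $|\beta,\textbf{h}\rangle$ (lowering $h_{n}$), whereas the quantum determinant re-expresses the action at $\eta/2-\xi_{n}$ in terms of $\mathsf{A}_{-}$; the normalization $k_{a}^{(\beta)}$ is introduced precisely so that these two actions assemble into the single formula $\mathsf{D}_{-}(\zeta_{a}^{(h_{a})})=(k_{a}^{(\beta)})^{\varphi_{a}}\mathsf{A}_{-}(\zeta_{a}^{(h_{a})}-2\varphi_{a}\xi_{a})$, with $\varphi_{a}=-1$ for $a>\mathsf{N}$ exchanging the roles of $\xi_{n}+\eta/2$ and $\eta/2-\xi_{n}$. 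The quasi-periodicities $(\ref{8vREPCharateristic-D-1})$--$(\ref{8vREPCharateristic-D-2})$ and the order $4\mathsf{N}+8$ of $\mathcal{D}_{-}(\lambda|\beta)$ follow either directly from its definition as a gauge transform of $\mathcal{U}_{-}$ or, as done for $\mathcal{A}_{-}$ in the left-representation proof, from $(\ref{8vREPCharateristic-B})$ by shifting $\lambda_{2}\to\lambda_{2}+2\pi\omega$ in $(\ref{8vREPBD-DB-CMR})$ and using $(\ref{8vREPTheta-periods})$. Finally, since $\mathcal{D}_{-}^{(0)}(\lambda|\beta)$ is built to make $\widetilde{\mathcal{D}}_{-}(\lambda|\beta)$ vanish at the eight points $\zeta_{-a}$, $\widetilde{\mathcal{D}}_{-}$ is an order-$(4\mathsf{N}+8)$ elliptic polynomial with eight prescribed zeros, hence determined by its $2\mathsf{N}$ remaining values; inserting the computed $\mathsf{D}_{-}(\zeta_{a}^{(h_{a})})$ into the elliptic interpolation formula recalled at the end of the previous proof gives $(\ref{8vREPInterp-D-tilde-SOV})$.

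The main obstacle is the bookkeeping of the $\beta$-shifts and of the constants $k_{a}^{(\beta)}$. The coefficient functions in $(\ref{8vREPBD-DB-CMR})$ carry $\beta$-dependent theta-prefactors that shift the gauge parameter each time $\mathcal{B}_{-}(\lambda|\beta)$ crosses a $\mathcal{D}_{-}$, and $(\ref{8vREPgauge-q-det-A})$--$(\ref{8vREPgauge-q-det-D})$ further mix the gauges $\beta$ and $\beta+2$; ensuring that every theta-function argument and every power of $k_{a}^{(\beta)}$ conspire so that the net action is the clean shift $\text{T}_{a}^{-\varphi_{a}}$ times $\mathsf{D}_{-}(\zeta_{a}^{(h_{a})})$, uniformly in $a$ and including the sign flips for $a>\mathsf{N}$, is where essentially all the computational effort lies. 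The remaining steps --- reference state, basis, interpolation --- are routine transcriptions of the left-representation argument.
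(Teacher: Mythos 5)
Your proposal is correct and follows essentially the same route as the paper: establish that $\overline{|\beta\rangle}$ is a $\mathcal{B}_{-}(\lambda|\beta)$-pseudo-eigenstate via the boundary--bulk decomposition, Proposition \ref{8vREPRight-ref} and the $\beta$-parity $(\ref{8vREPB-to-C-identity})$, build the basis with $\mathcal{D}_{-}(\xi_n+\eta/2|\beta)$ using $(\ref{8vREPBD-DB-CMR})$, $(\ref{8vREPCMR-AA-BC})$ and the quantum-determinant relations, and obtain the quasi-periodicity of $\mathcal{D}_{-}$ from that of $\mathcal{B}_{-}$ through the shifted commutation relation, fixing $\widetilde{\mathcal{D}}_{-}$ by elliptic interpolation from its vanishing at the eight points $\zeta_{-a}$. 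The only cosmetic difference is that the paper phrases step (i) through the decomposition of $\mathcal{C}_{-}(\lambda|\beta)$ and converts by $\beta$-parity afterwards, which is equivalent to your formulation.
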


\begin{proof}
The proof follows as in the previous theorem. Let us first prove that $%
\overline{|\beta \rangle }$ is a right $\mathcal{B}_{-}(\lambda |\beta )$%
-pseudo-eigenstate. From the Proposition \ref{8vREPRight-ref} and the following
boundary-bulk decomposition:%
\begin{align}
\frac{\mathcal{C}_{-}(\lambda |\beta )}{\theta _{4}(2\lambda -\eta |2\omega
)\theta \left( \lambda +(\alpha +1/2)\eta \right) }& =\tilde{K}_{-}(\lambda
|\beta )_{11}C(\lambda |\beta -2)\bar{A}(\lambda |\beta -1)+\tilde{K}%
_{-}(\lambda |\beta )_{22}D(\lambda |\beta -2)\bar{C}(\lambda |\beta -1) 
\notag \\
& +\tilde{K}_{-}(\lambda |\beta )_{12}C(\lambda |\beta -2)\bar{C}(\lambda
|\beta -1)+\tilde{K}_{-}(\lambda |\beta )_{21}D(\lambda |\beta -2)\bar{A}%
(\lambda |\beta -1),
\end{align}%
it follows that the state $|\beta \rangle $ is a right $\mathcal{C}%
_{-}(\lambda |\beta )$-pseudo-eigenstate; i.e. it holds:%
\begin{equation}
\mathcal{C}_{-}(\lambda |\beta )|\beta \rangle =|\beta -2\rangle \text{%
\textsc{c}}_{\beta }(\lambda )  \label{8vREPright-C-boundary-state}
\end{equation}%
where:%
\begin{equation}
\text{\textsc{c}}_{\beta }(\lambda )=\left( -1\right) ^{\mathsf{N}}\tilde{K}%
_{-}(\lambda |\beta )_{21}\theta _{4}(2\lambda -\eta |2\omega )\theta \left(
\lambda +(\alpha +1/2)\eta \right) \frac{\theta (\eta (\mathsf{N}+\beta -2))%
}{\theta \left( \eta (\beta -2)\right) }a_{\text{\textbf{1}}}(\lambda )a_{%
\text{\textbf{1}}}(-\lambda ).
\end{equation}%
Then from the identity $\left( \ref{8vREPB-to-C-identity}\right) $, it follows
that the formula $\left( \ref{8vREPright-C-boundary-state}\right) $ is equivalent
to the following one:%
\begin{equation}
\mathcal{B}_{-}(\lambda |\beta )\overline{|\beta \rangle }=\overline{|\beta
+2\rangle }\text{\textsc{c}}_{-\beta +2}(\lambda ).  \label{8vREPm-B-pseudoeigen}
\end{equation}%
Then by using the identities $\left( \ref{8vREPm-B-pseudoeigen}\right) $ and the
commutation relations $\left( \ref{8vREPBD-DB-CMR}\right) $ and the formulae: 
\begin{equation}
\mathcal{D}_{-}(-\xi _{n}-\eta /2|\beta )\overline{|\beta \rangle }=\text{\b{%
0}, \ }\mathcal{D}_{-}(\xi _{n}+\eta /2|\beta )\overline{|\beta \rangle }%
\neq \text{\b{0}},  \label{8vREPD-on-m-B-pseudoeigen}
\end{equation}%
the states $\left( \ref{8vREPD-right-eigenstates}\right) $ are proven to be
non-zero $\mathcal{B}_{-}(\lambda |\beta )$-pseudo-eigenstates with
pseudo-eigenvalues \text{\textsc{\={b}}}$_{\beta ,\text{\textbf{h}}}(\lambda
)$ which then forms a basis of $\mathcal{R}_{\mathsf{N}}$. The fact that the
operator $\mathcal{D}_{-}(\lambda |\beta )$ is an order $4\mathsf{N}+8$
elliptic polynomials of periods $\pi $ and $2\pi \omega $ which satisfies $\left(\ref{8vREPCharateristic-A-1}\right)$-\rf{8vREPCharateristic-A-2} can be simply derived from $\left( \ref{8vREPCharateristic-B}\right) $ by using the commutation relations $\left( \ref{8vREPBD-DB-CMR}\right) $. Indeed, shifting the variable $\lambda _{2}$ in $\lambda _{2}+2\pi \omega $ and using the transformation properties $\left(\ref{8vREPCharateristic-B}\right) $ and $\left( \ref{8vREPTheta-periods}\right) $, we
get:
\begin{align}
f_{\mathcal{D}_{-}(\beta )}(\lambda _{2})\mathcal{B}_{-}(\lambda _{1}|\beta )%
\mathcal{D}_{-}(\lambda _{2}|\beta& ) =\frac{\theta (\lambda _{1}-\lambda
_{2}+\eta )\theta (\lambda _{2}+\lambda _{1}-\eta )}{\theta (\lambda
_{1}-\lambda _{2})\theta (\lambda _{1}+\lambda _{2})}e^{8i\eta }f_{\mathcal{D%
}_{-}(\beta +2)}(\lambda _{2})\mathcal{D}_{-}(\lambda _{2}|\beta +2)\mathcal{%
B}_{-}(\lambda _{1}|\beta )  \notag \\
& -\frac{\theta (\lambda _{2}-\lambda _{1}+(1+\beta )\eta )\theta (\lambda
_{2}+\lambda _{1}-\eta )}{\theta (\lambda _{1}-\lambda _{2})\theta (\lambda
_{2}+\lambda _{1})\theta ((1+\beta )\eta )}e^{-4i\beta \eta }f_{\mathcal{B}%
_{-}(\beta )}(\lambda _{2})\mathcal{D}_{-}(\lambda _{1}|\beta +2)\mathcal{B}%
_{-}(\lambda _{2}|\beta )  \notag \\
& -\frac{\theta (\eta )\theta (\lambda _{2}+\lambda _{1}+\beta \eta )}{%
\theta (\lambda _{2}+\lambda _{1})\theta ((1+\beta )\eta )}e^{-4i\beta \eta
}f_{\mathcal{B}_{-}(\beta )}(\lambda _{2})\mathcal{A}_{-}(\lambda _{1}|\beta
+2)\mathcal{B}_{-}(\lambda _{2}|\beta ).
\end{align}%
where we have defined:%
\begin{equation}
\mathcal{D}_{-}(\lambda +2\pi \omega |\beta )=f_{\mathcal{D}_{-}(\beta
)}(\lambda )\mathcal{D}_{-}(\lambda |\beta ),
\end{equation}%
which implies:%
\begin{equation}
f_{\mathcal{D}_{-}(\beta )}(\lambda )\equiv \left( -e^{-2i\lambda
}/q^{2}\right) ^{4\mathsf{N}+8}e^{2i\alpha _{\mathcal{D}_{-}(\beta )}}\text{
where \ }\alpha _{\mathcal{D}_{-}(\beta )}\equiv 2(\mathsf{N}+2-\beta )\eta .
\end{equation}%
Moreover, by the definition $\left( \ref{8vREPDef-D_0}\right) $ it is simple to
argue that the operators $\mathcal{D}_{-}^{\left( 0\right) }(\lambda |\beta
) $ is also an order $4\mathsf{N}+8$ elliptic polynomials of periods $\pi $
and $2\pi \omega $ which satisfies $\left( \ref{8vREPCharateristic-D-1}\right) $
and $\left( \ref{8vREPCharateristic-D-2}\right) $ and then the same is true for $%
\widetilde{\mathcal{D}}_{-}(\lambda |\beta )$. This properties together with
the identities:%
\begin{equation}
\widetilde{\mathcal{D}}_{-}(\text{\ }\zeta _{-a}|\beta )\equiv \text{\b{0} \
for any }a\in \{1,...,8\},
\end{equation}%
implies the interpolation formula $\left( \ref{8vREPInterp-D-tilde-SOV}\right) $.
\end{proof}

\subsection{SOV-decomposition of the identity}

We can derive some important information analyzing the change of basis from
the spin basis:%
\begin{equation}
\langle \text{\textbf{h}}|\equiv \otimes _{n=1}^{\mathsf{N}}\langle
2h_{n}-1,n|\text{ \ \ \ \ and \ \ \ }|\text{\textbf{h}}\rangle \equiv
\otimes _{n=1}^{\mathsf{N}}|2h_{n}-1,n\rangle ,
\end{equation}%
to the SOV-basis. This change of basis can be characterized in terms of the $%
2^{\mathsf{N}}\times 2^{\mathsf{N}}$ matrices $U^{(L,\beta )}$ and $%
U^{(R,\beta )}$:%
\begin{equation}
\langle \beta ,\text{\textbf{h}}|=\langle \text{\textbf{h}}|U^{(L,\beta
)}=\sum_{i=1}^{2^{\mathsf{N}}}U_{\varkappa \left( \text{\textbf{h}}\right)
,i}^{(L,\beta )}\langle \varkappa ^{-1}\left( i\right) |\text{ \ \ and\ \ \ }%
|\beta ,\text{\textbf{h}}\rangle =U^{(R,\beta )}|\text{\textbf{h}}\rangle
=\sum_{i=1}^{2^{\mathsf{N}}}U_{i,\varkappa \left( \text{\textbf{h}}\right)
}^{(R,\beta )}|\varkappa ^{-1}\left( i\right) \rangle ,
\end{equation}%
where: 
\begin{equation}
\varkappa :\text{\textbf{h}}\in \{0,1\}^{\mathsf{N}}\rightarrow \varkappa
\left( \text{\textbf{h}}\right) \equiv 1+\sum_{a=1}^{\mathsf{N}%
}2^{(a-1)}h_{a}\in \{1,...,2^{\mathsf{N}}\},  \label{8vREPcorrisp}
\end{equation}%
is an isomorphism between the sets $\{0,1\}^{\mathsf{N}}$ and $\{1,...,2^{%
\mathsf{N}}\}$. The pseudo-diagonalizability of $\mathcal{B}_{-}(\lambda
|\beta )$ implies that the matrices $U^{(L,\epsilon )}$ and $U^{(R,\epsilon
)}$\ are invertible matrices satisfying the following identities:%
\begin{equation}
U^{(L,\beta )}\mathcal{B}_{-}(\lambda |\beta )=\Delta _{\mathcal{B}%
_{-}}^{L}(\lambda |\beta )U^{(L,\beta -2)},\text{ \ \ }\mathcal{B}%
_{-}(\lambda |\beta )U^{(R,\beta )}=U^{(R,\beta +2)}\Delta _{\mathcal{B}%
_{-}}^{R}(\lambda |\beta ).
\end{equation}%
Here $\Delta _{\mathcal{B}_{-}}^{L/R}(\lambda |\beta )$ is the $2^{\mathsf{N}%
}\times 2^{\mathsf{N}}$ diagonal matrix whose elements, for the simplicity
of the $\mathcal{B}_{\epsilon }$-pseudo-spectrum, read:%
\begin{equation}
\left( \Delta _{\mathcal{B}_{-}}^{L}(\lambda |\beta )\right) _{i,j}\equiv
\delta _{i,j}\text{\textsc{b}}_{\beta ,\varkappa ^{-1}\left( i\right)
}(\lambda |\beta ),\text{ }\left( \Delta _{\mathcal{B}_{-}}^{R}(\lambda
|\beta )\right) _{i,j}\equiv \delta _{i,j}\text{\textsc{\={b}}}_{\beta
,\varkappa ^{-1}\left( i\right) }(\lambda |\beta ),\text{ \ }\forall i,j\in
\{1,...,2^{\mathsf{N}}\}.
\end{equation}%
Moreover, we can prove:

\begin{proposition}
Let us define the following $2^{\mathsf{N}}\times 2^{\mathsf{N}}$ matrix:%
\begin{equation}
M\equiv U^{(L,\beta -2)}U^{(R,\beta )}
\end{equation}%
then it is diagonal and it explicitly reads:%
\begin{equation}
M_{\varkappa \left( \text{\textbf{h}}\right) \varkappa \left( \text{\textbf{k%
}}\right) }=\langle \beta -2,\text{\textbf{h}}|\beta ,\text{\textbf{k}}%
\rangle =\delta _{\varkappa \left( \text{\textbf{h}}\right) \varkappa \left( 
\text{\textbf{k}}\right) }\prod_{1\leq b<a\leq \mathsf{N}}\frac{1}{\eta
_{a}^{(h_{a})}-\eta _{b}^{(h_{b})}},  \label{8vREPM_jj}
\end{equation}%
once the function \textsc{n}$_{\beta }$ entering in the pseudo-eigenstates
normalization is defined by:%
\begin{equation}
\text{\textsc{n}}_{\beta }=\left[ \prod_{1\leq b<a\leq \mathsf{N}}(\eta
_{a}^{(1)}-\eta _{a}^{(1)})\langle \beta -2|\left( \prod_{n=1}^{\mathsf{N}}%
\mathcal{A}_{-}(\eta /2-\xi _{n}|\beta )/\mathsf{A}_{-}(\eta /2-\xi
_{n})\right) \overline{|\beta \rangle }\right] ^{1/2},  \label{8vREPNorm-def}
\end{equation}%
and 
\begin{equation}
\eta _{a}^{(h_{a})}\equiv \frac{\theta _{4}^{2}((\xi _{a}+(h_{a}-\frac{1}{2}%
)\eta |2\omega )}{\theta _{2}^{2}((\xi _{a}+(h_{a}-\frac{1}{2})\eta |2\omega
)}.
\end{equation}
\end{proposition}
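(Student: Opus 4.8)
The plan is to read off the entries of $M=U^{(L,\beta-2)}U^{(R,\beta)}$ from the scalar products $\langle\beta-2,\text{\textbf{h}}|\beta,\text{\textbf{k}}\rangle$, proving first that they vanish unless $\text{\textbf{h}}=\text{\textbf{k}}$ and then determining the surviving diagonal entries by a recursion in the configuration $\text{\textbf{h}}$ whose seed is exactly the normalization $\left(\ref{8vREPNorm-def}\right)$. Note at the outset that $M$ is invertible, since $U^{(L,\beta-2)}$ and $U^{(R,\beta)}$ are invertible by the pseudo-diagonalizability of $\mathcal{B}_-(\lambda|\beta)$ established in the two SOV theorems; hence once diagonality is proved every diagonal entry of $M$ is automatically non-zero. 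I write $M^{(\beta)}$ for the matrix built from the gauge parameter $\beta$, so $M^{(\beta)}_{\varkappa(\text{\textbf{h}})\varkappa(\text{\textbf{k}})}=\langle\beta-2,\text{\textbf{h}}|\beta,\text{\textbf{k}}\rangle$.

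For diagonality I would sandwich $\mathcal{B}_-(\lambda|\beta-2)$ between a left and a right SOV state. The left pseudo-eigenvalue relation gives $\langle\beta-2,\text{\textbf{h}}|\mathcal{B}_-(\lambda|\beta-2)=\text{\textsc{b}}_{\beta-2,\text{\textbf{h}}}(\lambda)\langle\beta-4,\text{\textbf{h}}|$ and the right one $\mathcal{B}_-(\lambda|\beta-2)|\beta-2,\text{\textbf{k}}\rangle=|\beta,\text{\textbf{k}}\rangle\,\text{\textsc{\={b}}}_{\beta-2,\text{\textbf{k}}}(\lambda)$, whence
\begin{equation*}
\text{\textsc{b}}_{\beta-2,\text{\textbf{h}}}(\lambda)\,M^{(\beta-2)}_{\varkappa(\text{\textbf{h}})\varkappa(\text{\textbf{k}})}=M^{(\beta)}_{\varkappa(\text{\textbf{h}})\varkappa(\text{\textbf{k}})}\,\text{\textsc{\={b}}}_{\beta-2,\text{\textbf{k}}}(\lambda)\qquad\forall\lambda .
\end{equation*}
In the explicit forms of the two pseudo-eigenvalues (from the two SOV theorems) the only $\text{\textbf{h}}$-dependence is the even factor $a_{\text{\textbf{h}}}(\lambda)a_{\text{\textbf{h}}}(-\lambda)$, the remaining $\lambda$-dependent bulk factor being the $\text{\textbf{h}}$-independent $\theta_4(2\lambda-\eta|2\omega)\theta(\lambda+(\alpha+1/2)\eta)$ times $K_-(\lambda|\beta-2)_{12}$, respectively $\tilde{K}_-(\lambda|-\beta+4)_{21}$; a direct computation of these scalar $K$-matrix entries (the $\mathsf N=0$ specialisation of the same construction) shows their ratio to be independent of $\lambda$. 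Hence for $\text{\textbf{h}}\neq\text{\textbf{k}}$ the quotient $\text{\textsc{b}}_{\beta-2,\text{\textbf{h}}}(\lambda)/\text{\textsc{\={b}}}_{\beta-2,\text{\textbf{k}}}(\lambda)$ is a genuinely non-constant function of $\lambda$ --- the even elliptic polynomials $a_{\text{\textbf{h}}}(\lambda)a_{\text{\textbf{h}}}(-\lambda)$ and $a_{\text{\textbf{k}}}(\lambda)a_{\text{\textbf{k}}}(-\lambda)$ have different divisors under $\left(\ref{8vREPE-SOV}\right)$ --- while both matrix elements of $M$ are $\lambda$-independent and $\text{\textsc{b}}_{\beta-2,\text{\textbf{h}}}(\lambda)\not\equiv 0$ by the non-nilpotency conditions $\left(\ref{8vREPNON-nilp-B-L}\right)$; this forces $M^{(\beta)}_{\varkappa(\text{\textbf{h}})\varkappa(\text{\textbf{k}})}=0$, so $M$ is diagonal, with non-zero diagonal by invertibility.

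For the diagonal entries I take $\text{\textbf{k}}=\text{\textbf{h}}$ in the displayed identity: the factors $a_{\text{\textbf{h}}}(\pm\lambda)$ cancel, leaving $M^{(\beta)}_{\varkappa(\text{\textbf{h}})\varkappa(\text{\textbf{h}})}=\rho_\beta\,M^{(\beta-2)}_{\varkappa(\text{\textbf{h}})\varkappa(\text{\textbf{h}})}$ with $\rho_\beta$ equal to $K_-(\lambda|\beta-2)_{12}/\tilde{K}_-(\lambda|-\beta+4)_{21}$ times $\beta$-dependent constants, in particular independent of $\lambda$ and of $\text{\textbf{h}}$. Thus the ratio $M^{(\beta)}_{\varkappa(\text{\textbf{h}})\varkappa(\text{\textbf{h}})}/M^{(\beta)}_{\varkappa(\text{\textbf{h}}')\varkappa(\text{\textbf{h}}')}$ is independent of $\beta$, and it suffices to compute the diagonal at one value of $\beta$. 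The seed is the all-ones configuration: there $|\beta,\text{\textbf{1}}\rangle=\overline{|\beta\rangle}/\text{\textsc{n}}_\beta$ while $\langle\beta-2,\text{\textbf{1}}|=\text{\textsc{n}}_\beta^{-1}\langle\beta-2|\prod_{n=1}^{\mathsf N}\mathcal{A}_-(\eta/2-\xi_n|\beta)/\mathsf{A}_-(\eta/2-\xi_n)$, so that
\begin{equation*}
M^{(\beta)}_{\varkappa(\text{\textbf{1}})\varkappa(\text{\textbf{1}})}=\frac{1}{\text{\textsc{n}}_\beta^{2}}\,\langle\beta-2|\Bigl(\prod_{n=1}^{\mathsf N}\frac{\mathcal{A}_-(\eta/2-\xi_n|\beta)}{\mathsf{A}_-(\eta/2-\xi_n)}\Bigr)\overline{|\beta\rangle}=\prod_{1\le b<a\le\mathsf N}\frac{1}{\eta_a^{(1)}-\eta_b^{(1)}},
\end{equation*}
which is precisely the defining relation $\left(\ref{8vREPNorm-def}\right)$ for $\text{\textsc{n}}_\beta$ (read with the obvious correction $\eta_a^{(1)}-\eta_b^{(1)}$ in the Vandermonde product).

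It then remains to propagate from $\text{\textbf{h}}=\text{\textbf{1}}$ to a general $\text{\textbf{h}}$ by an induction that flips one component $h_j$ from $1$ to $0$. By the order-independence of the SOV states, $|\beta,\text{\textbf{h}}'\rangle$ with $h'_j=0$ is $|\beta,\text{\textbf{h}}\rangle$ carrying one further insertion $\mathcal{D}_-(\xi_j+\eta/2|\,\cdot\,)/(k_j^{(\cdot)}\mathsf{A}_-(\eta/2-\xi_j))$; I would transport this insertion onto the left covector using the commutation relation $\left(\ref{8vREPBD-DB-CMR}\right)$ together with the interpolation/quantum-determinant formula $\left(\ref{8vREPInterp-D-tilde-SOV}\right)$, the shift operators $\text{T}_a^{\pm}$ identifying $\text{\textbf{h}}$ with $\text{\textbf{h}}'$ while the Lagrange-interpolation coefficients in the variable $\theta_4^2/\theta_2^2$ appearing in $\left(\ref{8vREPInterp-D-tilde-SOV}\right)$ assemble into exactly the ratio $\prod_{b\neq j}(\eta_j^{(1)}-\eta_b^{(h_b)})/(\eta_j^{(0)}-\eta_b^{(h_b)})$ relating $M^{(\beta)}_{\varkappa(\text{\textbf{h}})\varkappa(\text{\textbf{h}})}$ to $M^{(\beta)}_{\varkappa(\text{\textbf{h}}')\varkappa(\text{\textbf{h}}')}$, which is what turns $\prod_{b<a}(\eta_a^{(1)}-\eta_b^{(1)})^{-1}$ into $\prod_{b<a}(\eta_a^{(h_a)}-\eta_b^{(h_b)})^{-1}$. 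The main obstacle is precisely this last step: one must keep careful track of all the scalar prefactors --- the $\mathsf{A}_-$ and $\mathsf{D}_-$ values, the $k_j^{(\beta)}$, the $\theta_1$- and $\theta_2$-factors in $\left(\ref{8vREPInterp-D-tilde-SOV}\right)$, and the $\text{\textbf{h}}$-independent $\beta$-shift bookkeeping intrinsic to the $\mathcal{D}_-$-action --- and verify that they collapse to the advertised Vandermonde-type factor with no residual $\lambda$- or $\beta$-dependence; by contrast the diagonality argument and the reduction to the seed are routine once the pseudo-eigenvalue relations of the two SOV theorems are in hand.
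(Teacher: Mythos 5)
Your diagonality argument (sandwiching $\mathcal{B}_-$ between a left and a right SOV state and exploiting that the left and right pseudo-eigenvalues vanish at different points $\zeta_n^{(h_n)}$ under (\ref{8vREPE-SOV})) and your identification of the seed $\mathbf{h}=(1,\dots,1)$ with the normalization (\ref{8vREPNorm-def}) are correct and essentially the paper's own steps; the digression about $\beta$-independence of ratios is superfluous (seed and recursion live at the same fixed $\beta$), and the claim that $K_-(\lambda|\beta-2)_{12}/\tilde K_-(\lambda|-\beta+4)_{21}$ is $\lambda$-independent is neither needed for diagonality nor easy to check directly, though it is harmless.

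The genuine gap is the one-flip ratio, i.e. the paper's identity (\ref{8vREPF1}) relating $\langle\beta-2,\dots,h_a{=}0,\dots|\beta,\dots,h_a{=}0,\dots\rangle$ to $\langle\beta-2,\dots,h_a{=}1,\dots|\beta,\dots,h_a{=}1,\dots\rangle$, which you yourself defer as the "main obstacle" and for which the tools you cite would not work. The paper's mechanism is a two-sided evaluation of the single matrix element $\theta_a^{(\beta)}\equiv\langle\beta-2,\dots,h_a{=}1,\dots|\widetilde{\mathcal{D}}_-(\xi_a+\eta/2|\beta)|\beta,\dots,h_a{=}0,\dots\rangle$: acting leftward, since $a_{\mathbf{h}}(\xi_a+\eta/2)=0$ when $h_a=1$ one may replace $\widetilde{\mathcal{D}}_-$ by $\mathcal{D}_-$, and the parity relation (\ref{8vREPparity-m-3}) together with the annihilation $\langle\dots,h_a{=}1,\dots|\mathcal{A}_-(-(\xi_a+\eta/2)|\beta)=0$ reduces it to an explicit scalar times the $h_a{=}0$ overlap; acting rightward, (\ref{8vREPInterp-D-tilde-SOV}) plus the already-proved orthogonality give the same quantity as a Lagrange coefficient times the $h_a{=}1$ overlap, and equating the two makes the $k_a^{(\beta)}$, $\mathsf{A}_-$ and theta prefactors cancel, leaving the bare Vandermonde ratio. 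Your plan instead inserts the defining $\mathcal{D}_-(\xi_j+\eta/2|\beta)$ with the left covector carrying $h'_j=0$ and proposes to transport it using (\ref{8vREPBD-DB-CMR}) and (\ref{8vREPInterp-D-tilde-SOV}); but (\ref{8vREPBD-DB-CMR}) is a $\mathcal{B}$--$\mathcal{D}$ exchange relation that plays no role in this matrix element, the right action (\ref{8vREPInterp-D-tilde-SOV}) merely reproduces the definition of $|\beta,\mathbf{h}'\rangle$ (so it is circular), and for a covector with $h'_j=0$ the point $\xi_j+\eta/2$ is not an annihilation/shift point, so no simple collapse occurs: what is needed is the left action of $\mathcal{D}_-$ at that point, which the paper manufactures from (\ref{8vREPparity-m-3}) and which your proposal neither states nor derives. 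Since the verification that all scalar prefactors collapse to $\prod_{b\neq j}(\eta_j^{(1)}-\eta_b^{(h_b)})/(\eta_j^{(0)}-\eta_b^{(h_b)})$ is exactly the content of the proposition beyond the seed, leaving it unverified means the proof is not complete.
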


\begin{proof}
The occurence of $\delta _{\varkappa \left( \text{\textbf{h}}\right)
\varkappa \left( \text{\textbf{k}}\right) }$ in $\left( \ref{8vREPM_jj}\right) $
follows by the following identities of matrix elements:%
\begin{equation}
\text{\textsc{\={b}}}_{\beta ,\text{\textbf{k}}}(\lambda |\beta )\langle
\beta ,\text{\textbf{h}}|\beta +2,\text{\textbf{k}}\rangle =\langle \beta ,%
\text{\textbf{h}}|\mathcal{B}_{-}(\lambda |\beta )|\beta ,\text{\textbf{k}}%
\rangle =\text{\textsc{b}}_{\beta ,\text{\textbf{h}}}(\lambda |\beta
)\langle \beta -2,\text{\textbf{h}}|\beta ,\text{\textbf{k}}\rangle ,
\end{equation}%
indeed the condition \textbf{h}$\neq $\textbf{k} implies $\exists n\in
\{1,...,\mathsf{N}\}$\ such that $h_{n}\neq k_{n}$ and then it implies:%
\begin{equation}
\text{\textsc{\={b}}}_{\beta ,\text{\textbf{k}}}(\zeta _{n}^{(k_{n})}|\beta
)=0,\text{ \textsc{b}}_{\beta ,\text{\textbf{h}}}(\zeta _{n}^{(k_{n})}|\beta
)\neq 0,
\end{equation}%
and so:%
\begin{equation}
\langle \beta -2,\text{\textbf{h}}|\beta ,\text{\textbf{k}}\rangle \propto
\delta _{\varkappa \left( \text{\textbf{h}}\right) \varkappa \left( \text{%
\textbf{k}}\right) }.  \label{8vREPorthogonality-pseudo-states}
\end{equation}%
The diagonal elements $M_{\varkappa \left( \text{\textbf{h}}\right)
\varkappa \left( \text{\textbf{h}}\right) }$ are obtained by computing 
\[
\theta _{a}^{\left( \beta \right) }\equiv \langle \beta
-2,h_{1},...,h_{a}=1,...,h_{\mathsf{N}}|\widetilde{\mathcal{D}}_{-}(\xi
_{a}+\eta /2|\beta )|\beta ,h_{1},...,h_{a}=0,...,h_{\mathsf{N}}\rangle
\]

for any $a\in \{1,...,\mathsf{N}\}$. Being:%
\begin{equation}
\langle \beta -2,h_{1},...,h_{a}\left. =\right. 1,...,h_{\mathsf{N}}|%
\mathcal{\tilde{D}}_{-}(\xi _{a}+\eta /2|\beta )=\langle \beta
-2,h_{1},...,h_{a}\left. =\right. 1,...,h_{\mathsf{N}}|\mathcal{D}_{-}(\xi
_{a}+\eta /2|\beta ),
\end{equation}%
then using the decomposition (\ref{8vREPparity-m-3}) and the fact that:%
\begin{equation}
\langle \beta -2,h_{1},...,h_{a}=1,...,h_{\mathsf{N}}|\mathcal{A}_{-}(-(\xi
_{a}+\eta /2)|\beta )=\text{\b{0}}
\end{equation}%
it holds:%
\begin{align}
& \langle \beta -2,h_{1},...,h_{a}\left. =\right. 1,...,h_{\mathsf{N}}|%
\mathcal{\tilde{D}}_{-}(\xi _{a}+\eta /2|\beta ) \\
& \left. =\right. \frac{\theta (\eta )\theta \left( 2\xi _{a}+\beta \eta
\right) }{\theta \left( 2\xi _{a}+\eta \right) \theta \left( \beta \eta
\right) }\langle \beta -2,h_{1},...,h_{a}\left. =\right. 1,...,h_{\mathsf{N}%
}|\mathcal{A}_{-}(\xi _{a}+\eta /2|\beta ) \\
\left. =\right. & \frac{\theta (\eta )\theta \left( 2\xi _{a}+\beta \eta
\right) }{\theta \left( 2\xi _{a}+\eta \right) \theta \left( \beta \eta
\right) }\mathsf{A}_{-}(\eta /2+\xi _{a})\langle \beta
-2,h_{1},...,h_{a}\left. =\right. 0,...,h_{\mathsf{N}}|,
\end{align}%
and then we get:%
\begin{equation}
\theta _{a}^{\left( \beta \right) }=\frac{\theta (\eta )\theta \left( 2\xi
_{a}+\beta \eta \right) }{\theta \left( 2\xi _{a}+\eta \right) \theta \left(
\beta \eta \right) }\mathsf{A}_{-}(\eta /2+\xi _{a})\langle \beta
-2,h_{1},...,h_{a}=0,...,h_{\mathsf{N}}|\beta ,h_{1},...,h_{a}=0,...,h_{%
\mathsf{N}}\rangle .
\end{equation}%
On the other hand the right action of the operator $\widetilde{\mathcal{D}}%
_{-}(\xi _{a}+\eta /2|\beta )$ and the condition $\left( \ref{8vREPorthogonality-pseudo-states}\right) $ implies:
\begin{align}
\theta _{a}^{\left( \beta \right) }& =\left( k_{a}^{(\beta )}\right) ^{-1}%
\mathsf{A}_{-}(\eta /2+\xi _{a})\frac{\theta _{1}(2(\mathsf{N}+2-\beta
)-\sum_{b=1}^{8}\zeta _{-b}-2\xi _{a}|2\omega )\theta _{1}(\eta |2\omega
)\theta _{2}^{2\mathsf{N}}(\zeta _{a}^{(1)}|2\omega )}{\theta _{1}(2(\mathsf{%
N}+2-\beta )-\sum_{b=1}^{8}\zeta _{-b}|2\omega )\theta _{1}(2\zeta
_{a}^{(0)}|2\omega )\theta _{2}^{2\mathsf{N}}(\zeta _{a}^{(0)}|2\omega )} 
\notag \\
& \times \prod_{\substack{ b=1  \\ b\neq a}}^{\mathsf{N}}\frac{\frac{\theta
_{4}^{2}(\zeta _{a}^{(1)}|2\omega )}{\theta _{2}^{2}(\zeta
_{a}^{(1)}|2\omega )}-\frac{\theta _{4}^{2}(\zeta _{b}^{(h_{b})}|2\omega )}{%
\theta _{2}^{2}(\zeta _{b}^{(h_{b})}|2\omega )}}{\frac{\theta _{4}^{2}(\zeta
_{a}^{(0)}|2\omega )}{\theta _{2}^{2}(\zeta _{a}^{(0)}|2\omega )}-\frac{%
\theta _{4}^{2}(\zeta _{b}^{(h_{b})}|2\omega )}{\theta _{2}^{2}(\zeta
_{b}^{(h_{b})}|2\omega )}}\left. \langle \beta -2,h_{1},...,h_{a}=1,...,h_{%
\mathsf{N}}|\beta ,h_{1},...,h_{a}=1,...,h_{\mathsf{N}}\rangle \right.
\end{align}
so that it holds:%
\begin{equation}
\frac{\langle \beta -2,h_{1},...,h_{a}=0,...,h_{\mathsf{N}}|\beta
,h_{1},...,h_{a}=0,...,h_{\mathsf{N}}\rangle }{\langle \beta
-2,h_{1},...,h_{a}=1,...,h_{\mathsf{N}}|\beta ,h_{1},...,h_{a}=1,...,h_{%
\mathsf{N}}\rangle }=\prod_{\substack{ b=1  \\ b\neq a}}^{\mathsf{N}}\frac{%
\frac{\theta _{4}^{2}(\zeta _{a}^{(1)}|2\omega )}{\theta _{2}^{2}(\zeta
_{a}^{(1)}|2\omega )}-\frac{\theta _{4}^{2}(\zeta _{b}^{(h_{b})}|2\omega )}{%
\theta _{2}^{2}(\zeta _{b}^{(h_{b})}|2\omega )}}{\frac{\theta _{4}^{2}(\zeta
_{a}^{(0)}|2\omega )}{\theta _{2}^{2}(\zeta _{a}^{(0)}|2\omega )}-\frac{%
\theta _{4}^{2}(\zeta _{b}^{(h_{b})}|2\omega )}{\theta _{2}^{2}(\zeta
_{b}^{(h_{b})}|2\omega )}},  \label{8vREPF1}
\end{equation}%
from which one can prove:%
\begin{equation}
\frac{\langle \beta -2,h_{1},...,h_{\mathsf{N}}|\beta ,h_{1},...,h_{\mathsf{N%
}}\rangle }{\langle \beta -2,1,...,1|\beta ,1,...,1\rangle }=\prod_{1\leq
b<a\leq \mathsf{N}}\frac{\eta _{a}^{\left( 1\right) }-\eta _{b}^{\left(
1\right) }}{\eta _{a}^{\left( h_{a}\right) }-\eta _{b}^{\left( h_{b}\right) }%
}.  \label{8vREPF2}
\end{equation}%
This last identity implies $\left( \ref{8vREPM_jj}\right) $ being%
\begin{equation}
\langle \beta -2,1,...,1|\beta ,1,...,1\rangle =\prod_{1\leq b<a\leq \mathsf{%
N}}\frac{1}{\eta _{a}^{\left( 1\right) }-\eta _{b}^{\left( 1\right) }},
\end{equation}%
by our definition of the normalization \textsc{n}$_{\beta }$.
\end{proof}

The previous results allow to write the following spectral decomposition of
the identity $\mathbb{I}$:%
\begin{equation}
\mathbb{I}\equiv \sum_{i=1}^{2^{\mathsf{N}}}\mu |\beta ,\varkappa
^{-1}\left( i\right) \rangle \langle \beta -2,\varkappa ^{-1}\left( i\right)
|,
\end{equation}%
where $\mu \equiv \left( \langle \beta -2,\varkappa ^{-1}\left( i\right)
|\beta ,\varkappa ^{-1}\left( i\right) \rangle \right) ^{-1}$ is the
analogous (pseudo-measure) of the so-called Sklyanin's measure in the 8-vertex reflection
algebra representations, which reads explicitly:%
\begin{equation}
\mathbb{I}\equiv \sum_{h_{1},...,h_{\mathsf{N}}=0}^{1}\prod_{1\leq b<a\leq 
\mathsf{N}}(\eta _{a}^{(h_{a})}-\eta _{a}^{(h_{a})})|\beta ,h_{1},...,h_{%
\mathsf{N}}\rangle \langle \beta -2,h_{1},...,h_{\mathsf{N}}|.
\label{8vREPDecomp-Id}
\end{equation}

\section{Separate variable characterization of transfer matrix spectrum}

In this section, we show how the SOV approach allows to write eigenvalues
and eigenstates for the transfer matrix associated to the most general
representation of the 8-vertex reflection algebra once the gauge
transformations are used. The SOV characterization here presented is the
natural generalization to the 8-vertex reflection algebra case of those
first derived for the 6-vertex case in \cite{8vREPNic12b}.

\begin{theorem}
Keeping completely arbitrary the six boundary parameters and using the
freedom in the choice of the gauge parameters to\ impose $\left( \ref{8vREPTriangular-gauge-K+B}\right) $, then:

I$_{b}$) the left representation for which the one parameter family $%
\mathcal{B}_{-}(\lambda |\beta )$\ is pseudo-diagonal defines a left SOV
representation for the spectral problem of the transfer matrix $\mathcal{T}%
(\lambda )$.

II$_{b}$) the right representation for which the one parameter family $%
\mathcal{B}_{-}(\lambda |\beta +2)$\ is pseudo-diagonal defines a right SOV
representation for the spectral problem of the transfer matrix $\mathcal{T}%
(\lambda )$.

Keeping completely arbitrary the six boundary parameters and using the
freedom in the choice of the gauge parameters to\ impose:%
\begin{equation}
K_{+}^{(L)}(\lambda |\beta )_{21}=0,  \label{8vREPTriangular-gauge-K+C}
\end{equation}%
then:

I$_{c}$) the left representation for which the one parameter family $%
\mathcal{C}_{-}(\lambda |\beta +4)$\ is pseudo-diagonal defines a left SOV
representation for the spectral problem of the transfer matrix $\mathcal{T}%
(\lambda )$.

II$_{c}$) the right representation for which the one parameter family $%
\mathcal{C}_{-}(\lambda |\beta +2)$\ is pseudo-diagonal defines a right SOV
representation for the spectral problem of the transfer matrix $\mathcal{T}%
(\lambda )$.
\end{theorem}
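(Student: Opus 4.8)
The plan is to derive all four statements by substituting the explicitly even representations of $\mathcal{T}(\lambda )$ proved above into the left and right $\mathcal{B}_{-}(|\beta )$-SOV constructions of this section; the only genuinely new work is a gauge-fixing step and a careful evaluation at the separated points. For cases I$_{b}$) and II$_{b}$), Lemma \ref{8vREPK+q-det-a+} lets us choose the gauge parameters so that $\left( \ref{8vREPTriangular-gauge-K+B}\right) $ holds with the six boundary parameters kept arbitrary; since $\hat{Y}_{\beta -1}\propto Y_{\beta -1}$ and $\tilde{Y}_{\beta +1}\propto \bar{Y}_{\beta +1}\propto \underline{Y}_{\beta +1}$, the quantity $K_{+}^{(L)}(\lambda |\beta )_{12}$ differs from $K_{+}^{(R)}(\lambda |\beta )_{12}$ by a nonvanishing function of $\lambda $, so $\left( \ref{8vREPTriangular-gauge-K+B}\right) $ kills the $\mathcal{C}_{-}$ term in \emph{both} even representations. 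This leaves
\[
\mathcal{T}(\lambda )=\mathsf{a}_{+}(\lambda )\,\mathcal{A}_{-}(\lambda |\beta +2)+\mathsf{a}_{+}(-\lambda )\,\mathcal{A}_{-}(-\lambda |\beta +2)+K_{+}^{(L)}(\lambda |\beta )_{21}\,\mathcal{B}_{-}(\lambda |\beta ),
\]
expressed through $\mathcal{A}_{-}(\lambda |\beta +2)$ and $\mathcal{B}_{-}(\lambda |\beta )$ — exactly the operators controlled on the left $\mathcal{B}_{-}(|\beta )$-SOV basis $\langle \beta ,\text{\textbf{h}}|$ by the left $\mathcal{B}_{-}(|\beta )$-SOV theorem — and, likewise, the companion identity involving $\mathsf{d}_{+}$, $\mathcal{D}_{-}(\lambda |\beta +2)$ and $\mathcal{B}_{-}(\lambda |\beta +2)$, which is the data of the right $\mathcal{B}_{-}(|\beta )$-SOV theorem taken with gauge parameter $\beta +2$.

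The separation then follows by evaluating these identities at the $2\mathsf{N}$ points $\lambda =\pm \zeta _{n}^{(h_{n})}$, $n\in \{1,\dots ,\mathsf{N}\}$. Since $\text{\textsc{b}}_{\beta ,\text{\textbf{h}}}(\lambda )$ and $\text{\textsc{\={b}}}_{\beta ,\text{\textbf{h}}}(\lambda )$ are both proportional to $a_{\text{\textbf{h}}}(\lambda )a_{\text{\textbf{h}}}(-\lambda )$ and $a_{\text{\textbf{h}}}$ carries the factor $\theta (\lambda -\xi _{n}-(h_{n}-\frac{1}{2})\eta )$, which vanishes at $\lambda =\pm \zeta _{n}^{(h_{n})}$ (recall $\varphi _{n}=1$ for $n\leq \mathsf{N}$), the inhomogeneous $\mathcal{B}_{-}$ term disappears there, so on the left basis
\[
\langle \beta ,\text{\textbf{h}}|\mathcal{T}(\zeta _{n}^{(h_{n})})=\mathsf{a}_{+}(\zeta _{n}^{(h_{n})})\,\langle \beta ,\text{\textbf{h}}|\mathcal{A}_{-}(\zeta _{n}^{(h_{n})}|\beta +2)+\mathsf{a}_{+}(-\zeta _{n}^{(h_{n})})\,\langle \beta ,\text{\textbf{h}}|\mathcal{A}_{-}(-\zeta _{n}^{(h_{n})}|\beta +2).
\]
Using $-\zeta _{n}^{(h_{n})}=\zeta _{\mathsf{N}+n}^{(h_{\mathsf{N}+n})}$, the definition $\left( \ref{8vREPD-left-eigenstates}\right) $ of the basis, the interpolation formulae $\left( \ref{8vREPDef-A_0}\right) $-$\left( \ref{8vREPInterp-A-tilde-SOV}\right) $, the quantum determinant identities $\left( \ref{8vREPgauge-q-det-A}\right) $-$\left( \ref{8vREPgauge-q-det-D}\right) $ and the parity relations $\left( \ref{8vREPparity-m-1}\right) $-$\left( \ref{8vREPparity-m-3}\right) $, each term on the right reduces to a one-step shift of the single quantum number $h_{n}$ — in opposite directions for the two terms — with coefficients built explicitly from $\mathsf{A}_{-}$, $\mathsf{a}_{+}$ and the interpolation kernels, and leaves $h_{m}$ ($m\neq n$) untouched. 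Hence, writing a left eigenvector of $\mathcal{T}(\lambda )$ as $\langle t|=\sum_{\text{\textbf{h}}}\Psi (\text{\textbf{h}})\langle \beta ,\text{\textbf{h}}|$, the eigenvalue equation $\langle t|\mathcal{T}(\lambda )=t(\lambda )\langle t|$ at $\lambda =\zeta _{n}^{(h_{n})}$ becomes, for each $n$ separately, a two-term Baxter-type relation in $h_{n}$ alone; this is precisely the assertion that the left representation defines a left SOV representation for the spectral problem of $\mathcal{T}(\lambda )$. The residual analytic data needed to complete the characterization in the next sections — that $\mathcal{T}(\lambda )$ is even (cf. $\left( \ref{8vREPeven-transfer}\right) $) and an elliptic polynomial of the order and quasi-periods fixed by $\left( \ref{8vREPCharateristic-A-1}\right) $-$\left( \ref{8vREPCharateristic-B}\right) $, together with the boundary quantum determinant identity $\left( \ref{8vREPK-q-det-a+}\right) $ — is already in place. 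Case II$_{b}$) is obtained in the same way, now using the representation of $\mathcal{T}(\lambda )$ in terms of $\mathsf{d}_{+}$, $\mathcal{D}_{-}(\lambda |\beta +2)$ and $\mathcal{B}_{-}(\lambda |\beta +2)$ on the right $\mathcal{B}_{-}(|\beta )$-SOV basis and invoking $\left( \ref{8vREPInterp-D-tilde-SOV}\right) $.

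For cases I$_{c}$) and II$_{c}$) one imposes $\left( \ref{8vREPTriangular-gauge-K+C}\right) $ instead and then invokes the $\beta $-parity relations of Lemma \ref{8vREPb-symmetry}: these identify $\mathcal{C}_{-}(\lambda |\beta +4)$ with $\mathcal{B}_{-}(\lambda |-\beta -2)$, turn the $\mathcal{C}_{-}$-SOV bases of Theorem \ref{8vREPTh1} into the corresponding $\mathcal{B}_{-}(|-\beta -2)$-SOV bases, and carry $\left( \ref{8vREPTriangular-gauge-K+C}\right) $ over to $\left( \ref{8vREPTriangular-gauge-K+B}\right) $ for the relabelled parameter. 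The statements then follow from cases I$_{b}$), II$_{b}$) applied with $\beta $ replaced by $-\beta -2$, exactly as anticipated in the remark immediately following Theorem \ref{8vREPTh1}.

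The step I expect to be the main obstacle is the detailed verification, inside the evaluation at $\lambda =\pm \zeta _{n}^{(h_{n})}$, that the $\mathcal{A}_{-}^{(0)}$ piece of $\left( \ref{8vREPDef-A_0}\right) $ — which a priori brings in the eight ``base-point'' operators $\mathcal{A}_{-}(\zeta _{-a}|\beta +2)$, $a\in \{1,\dots ,8\}$ — combines with $\widetilde{\mathcal{A}}_{-}$ into an action shifting only $h_{n}$; this is the heart of the separation and requires using the quantum determinant relations $\left( \ref{8vREPgauge-q-det-A}\right) $-$\left( \ref{8vREPgauge-q-det-D}\right) $ and the parity relations $\left( \ref{8vREPparity-m-1}\right) $-$\left( \ref{8vREPparity-m-3}\right) $ to pin down and re-express the base-point contributions. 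A secondary subtlety, already present in cases I$_{b}$), II$_{b}$), is the bookkeeping of the $\beta $-shifts: in the even decompositions $\mathcal{B}_{-}$ carries the label $\beta $ or $\beta +2$ while $\mathcal{A}_{-}$ (resp. $\mathcal{D}_{-}$) carries $\beta +2$, and one must check that this is exactly the pairing for which $\left( \ref{8vREPInterp-A-tilde-SOV}\right) $ (resp. $\left( \ref{8vREPInterp-D-tilde-SOV}\right) $) was established on $\langle \beta ,\text{\textbf{h}}|$ (resp. on $|\beta +2,\text{\textbf{h}}\rangle $) — it is, but it is the single place where a shift error would silently break the reduction. Everything else is routine substitution of formulae proved in the previous two sections.
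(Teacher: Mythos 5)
Your route is the one the paper itself follows: the explicitly even decompositions of $\mathcal{T}(\lambda )$ with the gauge choice $\left( \ref{8vREPTriangular-gauge-K+B}\right) $, the vanishing of the $\mathcal{B}_{-}$ pseudo-eigenvalues at $\lambda =\pm \zeta _{n}^{(h_{n})}$, the shift action of $\mathcal{A}_{-}$/$\mathcal{D}_{-}$ there, and the $\beta $-parity of Lemma \ref{8vREPb-symmetry} for the $\mathcal{C}_{-}$ cases; your observation that $K_{+}^{(L)}(\lambda |\beta )_{12}$ and $K_{+}^{(R)}(\lambda |\beta )_{12}$ differ only by a nonvanishing scalar factor, so that the single condition $\left( \ref{8vREPTriangular-gauge-K+B}\right) $ removes $\mathcal{C}_{-}$ from both decompositions, is correct and is indeed left implicit in the paper. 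The genuine problem is the step where you convert the per-covector identity into ``separation''. You expand a \emph{left} eigenvector as $\langle t|=\sum_{\text{\textbf{h}}}\Psi (\text{\textbf{h}})\langle \beta ,\text{\textbf{h}}|$ and evaluate $\langle t|\mathcal{T}(\lambda )=t(\lambda )\langle t|$ ``at $\lambda =\zeta _{n}^{(h_{n})}$'': but $\lambda $ is a single number while $h_{n}$ runs over both values inside the sum, and for the components whose $h_{n}$ differs from the chosen value the $\mathcal{B}_{-}$ coefficient does \emph{not} vanish at that point and sends $\langle \beta ,\text{\textbf{h}}|$ into the $\langle \beta -2,\text{\textbf{h}}|$ family, so no coefficient comparison in the $\langle \beta ,\text{\textbf{h}}|$ basis is available and the two-term recursion does not follow as stated. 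The correct use of the identity you wrote — and what the paper does — is to pair the covectors with a \emph{right} eigenvector and compute $\langle \beta ,\text{\textbf{h}}|\mathcal{T}(\pm \zeta _{n}^{(h_{n})})|\mathsf{t}\rangle $, obtaining the system $\left( \ref{8vREPSOVBax1}\right) $ for the wave functions $\langle \beta ,\text{\textbf{h}}|\mathsf{t}\rangle $; symmetrically the right basis $|\beta +2,\text{\textbf{h}}\rangle $ is paired with left eigenvectors through the $\mathsf{d}_{+}$, $\mathcal{D}_{-}$ decomposition. This is exactly how $\left( \ref{8vREPeigenT-r-D}\right) $ and $\left( \ref{8vREPeigenT-l-D}\right) $ are organized: the left $\mathcal{B}_{-}$-basis separates the problem for right eigenstates and the right basis for left eigenstates; your assignment is swapped, and the swap is precisely the point where the extraction fails.

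A second, smaller point: the obstacle you single out — showing that the $\mathcal{A}_{-}^{(0)}$ piece of $\left( \ref{8vREPDef-A_0}\right) $ combines with $\widetilde{\mathcal{A}}_{-}$ into a pure shift at the separated points — is neither needed nor the paper's route. At $\lambda =\pm \zeta _{n}^{(h_{n})}$ the two-term relations follow from the definition $\left( \ref{8vREPD-left-eigenstates}\right) $ of the basis, the annihilation property $\langle \beta ,\text{\textbf{h}}|\mathcal{A}_{-}(\zeta _{n}^{(0)}|\beta +2)=0$ when $h_{n}=0$ (plus commutation relations), and the gauged quantum determinant $\left( \ref{8vREPgauge-q-det-A}\right) $, whose $\mathcal{B}_{-}\mathcal{C}_{-}$ term annihilates the covector at the relevant point; the interpolation data $\left( \ref{8vREPInterp-A-tilde-SOV}\right) $ serves to fix the elliptic-polynomial character of the operators, not the separation itself. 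The eight points $\pm \zeta _{-a}$ enter instead through the transfer matrix: its values and residues there are central and eigenstate-independent, which is the content of the lemma giving $\left( \ref{8vREPset-T-0}\right) $--$\left( \ref{8vREPpole-2}\right) $. That ingredient is absent from your argument, and it is what pins down the inhomogeneous part $j(\lambda )$ and lets the separate relations close into an SOV characterization of the spectral problem.
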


Here, we will present these SOV constructions in this way proving the
theorem only in the cases I$_{b}$) and II$_{b}$) as for the cases I$_{c}$)
and II$_{c}$) these can be inferred mainly by using the $\beta $-symmetries
defined in Lemma \ref{8vREPb-symmetry}.

\begin{lemma}
Let us denote with $\Sigma _{\mathcal{T}}$ \ the set of the eigenvalue
functions of the transfer matrix $\mathcal{T}(\lambda )$, then any $\mathsf{t%
}(\lambda )\in \Sigma _{\mathcal{T}}$ is even in $\lambda $ and it satisfies
the following quasi-periodicity properties in $\lambda $ w.r.t. the periods $%
\pi $ and $\pi \omega $:%
\begin{equation}
\mathsf{t}(\lambda +\pi )=\mathsf{t}(\lambda ),\text{ }\mathsf{t}(\lambda
+\pi \omega )=\left( e^{-2i\lambda }/q\right) ^{2\mathsf{N}+2}\mathsf{t}%
(\lambda ).  \label{8vREPquasi-periodicity}
\end{equation}%
Moreover, the following identities hold:%
\begin{eqnarray}
\mathsf{t}(\pm \zeta _{-1}) &=&\frac{2\theta _{2}(\eta |\omega )\theta
_{4}^{2}(\zeta _{-}|2\omega )\theta _{4}^{2}(\zeta _{+}|2\omega )}{\theta
_{2}(0|\omega )\theta _{4}^{-1}(2\eta |2\omega )\theta _{4}^{-1}(0|2\omega )}%
\det_{q}M(0),  \label{8vREPset-T-0} \\
\mathsf{t}(\pm \zeta _{-2}) &=&\frac{2\theta _{2}(\eta |\omega
)\prod_{\epsilon =+,-}\theta _{4}(\zeta _{\epsilon }|2\omega )\theta
_{3}(\zeta _{\epsilon }|2\omega )\theta _{2}(\zeta _{\epsilon }|2\omega )}{%
\theta _{2}(0|\omega )\theta _{1}(\zeta _{-}|2\omega )\theta _{1}(\zeta
_{+}|2\omega )\theta _{4}^{-1}(2\eta |2\omega )\theta _{4}^{-1}(0|2\omega )}%
\det_{q}M(\pi /2),  \label{8vREPset-T-pi/2}
\end{eqnarray}%
while the following identities:%
\begin{align}
& \lim_{\lambda \rightarrow \pm \zeta _{-3}}\theta _{4}(2\lambda +\eta
|2\omega )\theta _{4}(2\lambda -\eta |2\omega )\mathsf{t}(\lambda )\left.
=\right. 4\kappa _{-}\kappa _{+}\sinh \tau _{-}\sinh \tau
_{+}e^{-2i\sum_{a=1}^{\mathsf{N}}\zeta _{a}^{(0)}}\det_{q}M(-\pi \omega /2) 
\notag \\
& \text{ \ \ \ \ \ \ \ \ \ \ \ \ \ \ \ \ \ \ \ \ \ \ \ \ \ \ \ \ \ \ }\times 
\frac{\theta _{1}(\pi \omega |2\omega )\theta _{1}(2\eta -\pi \omega
|2\omega )\theta _{1}^{2}(\pi \omega /2|2\omega )\theta _{4}^{3}(\zeta
_{-}|2\omega )\theta _{4}^{3}(\zeta _{+}|2\omega )\theta _{4}^{-4}(0|2\omega
)}{\theta _{1}(\zeta _{-}|2\omega )\theta _{1}(\zeta _{+}|2\omega )\left[
\theta _{4}^{2}(\eta -\pi \omega /2|2\omega )+\theta _{1}^{2}(\eta -\pi
\omega /2|2\omega )\right] ^{-1}},  \label{8vREPpole-1} \\
& \lim_{\lambda \rightarrow \pm \zeta _{-4}}\theta _{4}(2\lambda +\eta
|2\omega )\theta _{4}(2\lambda -\eta |2\omega )\mathsf{t}(\lambda )\left.
=\right. 4\kappa _{-}\kappa _{+}\cosh \tau _{-}\cosh \tau
_{+}e^{-2i\sum_{a=1}^{\mathsf{N}}\zeta _{a}^{(0)}}\det_{q}M(-\pi (\omega
+1)/2)  \notag \\
& \text{ \ \ \ \ \ \ \ \ \ \ \ \ \ \ \ \ \ \ \ \ \ \ \ \ \ \ \ \ \ \ }\times 
\frac{\theta _{1}(\pi \omega |2\omega )\theta _{1}(2\eta -\pi \omega
|2\omega )\theta _{1}^{2}(\pi (\omega +1)/2|2\omega )\theta _{4}^{3}(\zeta
_{-}|2\omega )\theta _{4}^{3}(\zeta _{+}|2\omega )\theta _{4}^{-4}(0|2\omega
)}{\theta _{1}(\zeta _{-}|2\omega )\theta _{1}(\zeta _{+}|2\omega )\left[
\theta _{4}^{2}(\eta -\pi (\omega +1)/2|2\omega )-\theta _{1}^{2}(\eta -\pi
(\omega +1)/2|2\omega )\right] ^{-1}},  \label{8vREPpole-2}
\end{align}%
fix the residues of $\mathsf{t}(\lambda )$ in the poles $\pm \zeta _{-3}$
and $\pm \zeta _{-4}$.
\end{lemma}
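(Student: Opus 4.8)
The evenness $\mathsf{t}(-\lambda)=\mathsf{t}(\lambda)$ I would simply read off from the operator identity $(\ref{8vREPeven-transfer})$ of the earlier Lemma, evaluated on a transfer-matrix eigenstate. For the quasi-periodicity I would prove the corresponding statements for $\mathcal{T}(\lambda)$ as an operator, starting from $\mathcal{T}(\lambda)=\mathrm{tr}_0\{K_+(\lambda)M_0(\lambda)K_-(\lambda)\hat M_0(\lambda)\}$. Using the transformation rules for the $\theta_a(\cdot|2\omega)$ recorded in the previous subsection, one checks that $R^{\mathsf{(8V)}}_{0a}(\lambda+\pi)=-\sigma^z_0 R^{\mathsf{(8V)}}_{0a}(\lambda)\sigma^z_0$ and $R^{\mathsf{(8V)}}_{0a}(\lambda+\pi\omega)=-q^{-1}e^{-2i\lambda-i\eta}\,\sigma^x_0 R^{\mathsf{(8V)}}_{0a}(\lambda)\sigma^x_0$ (the $\sigma^x_0$ implementing the half-period interchange $\theta_1\leftrightarrow\theta_4$), and similarly $K(\lambda+\pi;\zeta,\kappa,\tau)=-\sigma^z K(\lambda;\zeta,\kappa,\tau)\sigma^z$ and $K(\lambda+\pi\omega;\zeta,\kappa,\tau)=e^{-2i\lambda-i\pi\omega}\sigma^x K(\lambda;\zeta,\kappa,\tau)\sigma^x$. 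Propagating these through the definitions of $M_0(\lambda)$ and $\hat M_0(\lambda)$ — using $\hat M(\lambda)=(-1)^{\mathsf{N}}\sigma^y_0[M(-\lambda)]^{t_0}\sigma^y_0$ and the relations $\sigma^y\sigma^x=-i\sigma^z$, $\sigma^y\sigma^z=i\sigma^x$ — the four factors of $\mathcal{T}(\lambda)$ each pick up a scalar times conjugation by the \emph{same} Pauli matrix ($\sigma^z_0$ for the $\pi$-shift, $\sigma^x_0$ for the $\pi\omega$-shift); the conjugations cancel in pairs and, by cyclicity of $\mathrm{tr}_0$, disappear. Collecting the scalars — the $\xi_n$-dependent phases produced by $M_0$ and $\hat M_0$ cancelling against one another — yields $\mathcal{T}(\lambda+\pi)=\mathcal{T}(\lambda)$ and $\mathcal{T}(\lambda+\pi\omega)=(e^{-2i\lambda}/q)^{2\mathsf{N}+2}\mathcal{T}(\lambda)$, hence $(\ref{8vREPquasi-periodicity})$ for every $\mathsf{t}\in\Sigma_{\mathcal{T}}$.

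For the distinguished values and residues I would exploit the degeneration of $K_-$ at the $\zeta_{-a}$. Since $\zeta_{-1}-\eta/2=0$, we have $K_-(\zeta_{-1})=K(0;\zeta_-,\kappa_-,\tau_-)=h(0;\zeta_-)\,I$, a scalar multiple of the identity in the auxiliary space; since $\zeta_{-2}-\eta/2=-\pi/2$, the off-diagonal entries of $K_-(\zeta_{-2})$ vanish ($\theta_1(-\pi|2\omega)=0$) and the two diagonal entries become opposite, so $K_-(\zeta_{-2})\propto\sigma^z$; and for $a=3,4$ (where $\zeta_{-a}-\eta/2$ equals $-\pi\omega/2$, resp. $-\pi(1+\omega)/2$) the off-diagonal entries of $K_-$ acquire a simple pole from the factor $\theta_4(2\lambda-\eta|2\omega)$ in their denominator, while the half-period identities $\theta_4^2(\pm\pi\omega/2|2\omega)+\theta_1^2(\pm\pi\omega/2|2\omega)=0$ (at $\zeta_{-3}$) and the analogous one at $\zeta_{-4}$ force the two residues to coincide up to a factor $2\cosh\tau_-$, resp. $2\sinh\tau_-$, so that $\mathrm{Res}_{\zeta_{-a}}K_-(\lambda)\propto\kappa_-\,\sigma^x$. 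In every case the relevant auxiliary-space matrix inside $\mathcal{U}_-$ is a fixed $2\times2$ matrix $P$ times a scalar, and the inversion relation $(\ref{8vREPM-inverse})$ combined with the $R$-matrix quasi-periodicity (which relates $M_0(\zeta_{-a})$ to $M_0$ at the shifted argument up to a $\sigma^z_0$- or $\sigma^x_0$-conjugation) collapses $M_0(\zeta_{-a})\,P\,\hat M_0(\zeta_{-a})$ to $\det_q M(\,\cdot\,)$ times $P$. Taking $\mathrm{tr}_0$ against $K_+(\zeta_{-a})$ then reduces $\mathcal{T}(\zeta_{-a})$ (resp. its residue) to $\det_q M(\,\cdot\,)$ times one of the scalars $\mathrm{tr}_0 K_+(\zeta_{-1})$, $\mathrm{tr}_0\!\big(K_+(\zeta_{-2})\sigma^z\big)$, $\mathrm{tr}_0\!\big(K_+(\zeta_{-3})\sigma^x\big)=b_+(\zeta_{-3})+c_+(\zeta_{-3})$, $\mathrm{tr}_0\!\big(K_+(\zeta_{-4})\sigma^x\big)$, each of which I would evaluate by theta addition theorems and the doubling formula relating nome $\omega$ to $2\omega$ (the source of the $\theta_2(\eta|\omega)/\theta_2(0|\omega)$ prefactors), arriving at $(\ref{8vREPset-T-0})$–$(\ref{8vREPpole-2})$; the $\kappa_+\sinh\tau_+$ (resp. $\kappa_+\cosh\tau_+$) and the $[\theta_4^2+\theta_1^2]^{\pm1}$ factors come precisely out of $b_+(\zeta_{-3})+c_+(\zeta_{-3})$ (resp. of its $\zeta_{-4}$ analogue). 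Evenness of $\mathcal{T}$ supplies the values at the points $-\zeta_{-a}$.

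The only genuinely non-routine ingredients are the Pauli-matrix conjugation laws for $R^{\mathsf{(8V)}}$ and $K$ under the $\pi$- and $\pi\omega$-shifts, and the precise degenerate/residue forms of $K_\pm$ at the four exceptional points; everything past that is elliptic bookkeeping — half-period shift formulas for the $\theta_a(\cdot|2\omega)$, addition theorems, the $\omega\leftrightarrow2\omega$ doubling identity, and disciplined tracking of signs and of which branch of $(\ref{8vREPM-inverse})$ is used. I expect this bookkeeping, rather than any structural difficulty, to be the main obstacle; the $\sinh$ versus $\cosh$ (and $\theta_4^2\pm\theta_1^2$) dichotomy between $\zeta_{-3}$ and $\zeta_{-4}$ is simply the fingerprint of which of the two $\pi\omega$-type half-periods is active.
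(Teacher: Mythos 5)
Your proposal is essentially the paper's own argument: evenness is read off the earlier lemma, the quasi-periodicity comes from the theta-function shift covariance of $R^{\mathsf{(8V)}}$ and $K_{\pm}$ (which the paper leaves implicit), and the values/residues at $\pm\zeta_{-a}$ are obtained exactly as in the paper from the degenerate forms of $K_-$ at those four points (scalar, $\sigma^z$, and $\kappa_-$ times an antidiagonal residue), which collapse $M_0\,K_-\,\hat M_0$ into $\det_q M(\cdot)$ times a fixed $2\times 2$ matrix via $(\ref{8vREPM-inverse})$ and the shift identities for $M$, and are then traced against $K_+$. The only (bookkeeping, not structural) slips are at the third and fourth points: the residue of $K_-$ at $\zeta_{-3}$ is proportional to the antisymmetric matrix $\begin{pmatrix}0&-1\\1&0\end{pmatrix}$ with factor $e^{\tau_-}-e^{-\tau_-}=2\sinh\tau_-$, while the $\sigma^x$ case with $2\cosh\tau_-$ occurs at $\zeta_{-4}$, consistent with the $\sinh\tau_\pm$ versus $\cosh\tau_\pm$ assignment in $(\ref{8vREPpole-1})$--$(\ref{8vREPpole-2})$ and with your own later attribution, but opposite to the parenthetical claim in your second paragraph.
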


\begin{proof}
The transfer matrix $\mathcal{T}(\lambda )$\ is an even function of $\lambda 
$\ so the same is true for the $\mathsf{t}(\lambda )\in \Sigma _{\mathcal{T}%
} $ . Moreover, it is simple to verify the following identities:%
\begin{equation}
\mathcal{U}_{-}(\eta /2)=\theta _{4}^{4}(\zeta _{\epsilon }|2\omega
)\det_{q}M(0)\text{ }I_{0},\text{ \ }\mathcal{U}_{-}(\eta /2+\pi /2)=\frac{%
\theta _{3}(\zeta _{\epsilon }|2\omega )\theta _{2}(\zeta _{\epsilon
}|2\omega )}{\theta _{1}(\zeta _{\epsilon }|2\omega )\theta _{4}^{-1}(\zeta
_{\epsilon }|2\omega )}\det_{q}M(\pi /2)\text{ }\sigma _{0}^{z},
\end{equation}%
from which the following identities are derived:%
\begin{align}
\mathcal{T}(\pm \zeta _{-1}^{(0)})& =\frac{2\theta _{2}(\eta |\omega )\theta
_{4}^{2}(\zeta _{-}|2\omega )\theta _{4}^{2}(\zeta _{+}|2\omega )}{\theta
_{2}(0|\omega )\theta _{4}^{-1}(2\eta |2\omega )\theta _{4}^{-1}(0|2\omega )}%
\det_{q}M(0), \\
\mathcal{T}(\pm \zeta _{-2}^{(0)})& =\frac{2\theta _{2}(\eta |\omega
)\prod_{\epsilon =+,-}\theta _{4}(\zeta _{\epsilon }|2\omega )\theta
_{3}(\zeta _{\epsilon }|2\omega )\theta _{2}(\zeta _{\epsilon }|2\omega )}{%
\theta _{2}(0|\omega )\theta _{1}(\zeta _{-}|2\omega )\theta _{1}(\zeta
_{+}|2\omega )\theta _{4}^{-1}(2\eta |2\omega )\theta _{4}^{-1}(0|2\omega )}%
\det_{q}M(\pi /2),
\end{align}%
in this way proving $(\ref{8vREPset-T-0})$ and $(\ref{8vREPset-T-pi/2})$. The boundary
matrix $K_{\epsilon }(\lambda ;\zeta _{\epsilon },\kappa _{\epsilon },\tau
_{\epsilon })$ contains the function $\theta _{4}(2\lambda +\epsilon \eta
|2\omega )$, with $\epsilon =+$ or $-$, at the denominator of the off-diagonal elements, so it is simple to argue that for general values of the
boundary parameters the transfer matrix $\mathcal{T}(\lambda )$ my have
poles in the zeros of the functions $\theta _{4}(2\lambda -\eta |2\omega
)\theta _{4}(2\lambda +\eta |2\omega )$. The residues associated to these
poles follows from the following identities:%
\begin{align}
\lim_{\lambda \rightarrow \pm \zeta _{-a}}\theta _{4}(2\lambda -\eta
|2\omega )\mathcal{U}_{-}(\lambda )& =-\kappa _{-}\frac{\theta _{1}(\pi
\omega +(a-3)\pi |2\omega )\theta _{1}^{2}(\pi (\omega +a-3)/2|2\omega )}{%
\theta _{1}(\zeta _{-}|2\omega )\theta _{4}^{-3}(\zeta _{-}|2\omega )\theta
_{4}^{2}(0|2\omega )}e^{-2i\sum_{n=1}^{\mathsf{N}}\zeta _{n}^{(0)}} \\
& \times \left( e^{\tau _{-}}+(2a-7)e^{-\tau _{-}}\right) \det_{q}M(-\pi
(\omega +a-3)/2)\left( 
\begin{array}{cc}
0 & 2a-7 \\ 
1 & 0%
\end{array}%
\right) ,
\end{align}%
for $a=3$ and $4$, which are derived by using the following identities:%
\begin{align}
\lim_{\lambda \rightarrow \pm \zeta _{-a}}\theta _{4}(2\lambda -\eta
|2\omega )K_{-}(\lambda )& =-\kappa _{-}\frac{\theta _{1}(\pi \omega
+(a-3)\pi |2\omega )\theta _{1}^{2}(\pi (\omega +a-3)/2|2\omega )}{\theta
_{1}(\zeta _{-}|2\omega )\theta _{4}^{-3}(\zeta _{-}|2\omega )\theta
_{4}^{2}(0|2\omega )} \\
& \times \left( e^{\tau _{-}}+(2a-7)e^{-\tau _{-}}\right) \left( 
\begin{array}{cc}
0 & 2a-7 \\ 
1 & 0%
\end{array}%
\right) ,
\end{align}%
and%
\begin{equation}
M(\zeta _{-a})=(-1)^{\mathsf{N}}e^{-2i\sum_{n=1}^{\mathsf{N}}\zeta
_{n}^{(0)}}\left( 
\begin{array}{cc}
0 & 2a-7 \\ 
1 & 0%
\end{array}%
\right) M(\eta -\zeta _{-a})\left( 
\begin{array}{cc}
0 & 1 \\ 
2a-7 & 0%
\end{array}%
\right) ,
\end{equation}%
for$\ a=3$ and 4 where this last identity follows from:%
\begin{align}
\text{a}(-\frac{\pi }{2}(\omega +\frac{(1-\epsilon )}{2})-\xi _{n})&
=-e^{-2i\zeta _{n}^{(0)}}\text{b}(\frac{\pi }{2}(\omega +\frac{(1-\epsilon )%
}{2})-\xi _{n}), \\
\text{c}(-\frac{\pi }{2}(\omega +\frac{(1-\epsilon )}{2})-\xi _{n})&
=\epsilon e^{-2i\zeta _{n}^{(0)}}\text{d}(\frac{\pi }{2}(\omega +\frac{%
(1-\epsilon )}{2})-\xi _{n}).
\end{align}
\end{proof}

Let us associate to any $\mathsf{t}(\lambda )\in \Sigma _{\mathcal{T}}$ the
following even functions in $\lambda $:%
\begin{equation}
\widehat{\mathsf{t}}(\lambda )\equiv \theta _{4}(2\lambda +\eta |2\omega
)\theta _{4}(2\lambda -\eta |2\omega )\mathsf{t}(\lambda ),
\label{8vREPEll-p-associated}
\end{equation}%
then for the previous lemma $\widehat{\mathsf{t}}(\lambda )$ is an elliptic
polynomials in $\lambda $ of order $2\mathsf{N}+6$ which satisfy the
following quasi-periodicity properties in $\lambda $ w.r.t. the periods $\pi 
$ and $\pi \omega $:%
\begin{equation}
\widehat{\mathsf{t}}(\lambda +\pi )=\widehat{\mathsf{t}}(\lambda ),\text{ }%
\widehat{\mathsf{t}}(\lambda +\pi \omega )=\left( e^{-2i\lambda }/q\right)
^{2\mathsf{N}+6}\widehat{\mathsf{t}}(\lambda ).  \label{8vREPEll-p-periodicity}
\end{equation}%
Moreover, $\widehat{\mathsf{t}}(\lambda )$ has values in the points $\pm
\zeta _{-a}$ for \ $a=1,2,3$ and $4$ which are independent from the
particular choice of $\mathsf{t}(\lambda )\in \Sigma _{\mathcal{T}}$ and
completely fixed by the previous lemma. Then defined:%
\begin{equation}
j(\lambda )\equiv \sum_{a=1}^{4}l_{-a}(\lambda )\text{ }\widehat{\mathsf{t}}%
(\zeta _{-a}),\text{ }
\end{equation}%
where:%
\begin{equation}
l_{a}(\lambda )\equiv \prod_{\substack{ b=1  \\ b\neq a}}^{4}\frac{\theta
(\lambda -\zeta _{-b})\theta (\lambda +\zeta _{-b})}{\theta (\zeta
_{-a}-\zeta _{-b})\theta (\zeta _{-a}+\zeta _{-b})}\prod_{\substack{ b=1  \\ %
b\neq a}}^{\mathsf{N}}\frac{\theta (\lambda -\zeta _{b}^{(0)})\theta
(\lambda +\zeta _{b}^{(0)})}{\theta (\zeta _{a}^{(0)}-\zeta
_{b}^{(0)})\theta (\zeta _{a}^{(0)}+\zeta _{b}^{(0)})}\text{ }\forall a\in
\{-4,...,\mathsf{N}\},
\end{equation}%
one can observe that the elliptic polynomial $j(\lambda )$ is independent
from the particular choice of $\mathsf{t}(\lambda )\in \Sigma _{\mathcal{T}}$%
. We can now prove the following complete characterization of the transfer
matrix spectrum:

\begin{theorem}
$\mathcal{T}(\lambda )$ has simple spectrum if $\left( \ref{8vREPE-SOV}\right) $
is satisfied and $\Sigma _{\mathcal{T}}$ \ admits the following
characterization:%
\begin{equation}
\Sigma _{\mathcal{T}}\equiv \left\{ \mathsf{t}(\lambda ):\mathsf{t}(\lambda
)=\frac{j(\lambda )+\sum_{a=1}^{\mathsf{N}}l_{a}(\lambda )x_{a}}{\theta
_{4}(2\lambda +\eta |2\omega )\theta _{4}(2\lambda -\eta |2\omega )},\text{
\ \ }\forall \{x_{1},...,x_{\mathsf{N}}\}\in \Sigma _{T}\right\} ,
\label{8vREPSet-T}
\end{equation}%
where $\Sigma _{T}$ is the set of the solutions to the following
inhomogeneous system of $\mathsf{N}$ quadratic equations:%
\begin{equation}
x_{n}\sum_{a=1}^{\mathsf{N}}l_{a}(\zeta _{n}^{(1)})x_{a}+x_{n}j(\zeta
_{n}^{(1)})=q_{n},\text{ \ }q_{n}\equiv \widehat{\text{\textsc{a}}}(\zeta
_{n}^{(1)})\widehat{\text{\textsc{a}}}(-\zeta _{n}^{(0)}),\text{ \ \ }%
\forall n\in \{1,...,\mathsf{N}\},  \label{8vREPch-T}
\end{equation}%
in the $\mathsf{N}$ unknown $\{x_{1},...,x_{\mathsf{N}}\}$, where $\widehat{%
\text{\textsc{a}}}(\lambda )$ is defined by:%
\begin{equation}
\widehat{\text{\textsc{a}}}(\lambda )\equiv \theta _{4}(2\lambda +\eta
|2\omega )\theta _{4}(2\lambda -\eta |2\omega )\text{\textsc{a}}(\lambda ),%
\text{ \ \textsc{a}}(\lambda )\equiv \mathsf{a}_{+}(\lambda )\mathsf{A}%
_{-}(\lambda ),
\end{equation}%
where \textsc{a}$(\lambda )$ satisfies the quantum determinant condition:%
\begin{equation}
\frac{\text{det}_{q}K_{+}(\lambda )\text{det}_{q}\mathcal{U}_{-}(\lambda )}{%
\theta (\eta +2\lambda )\theta (\eta -2\lambda )}=\text{\textsc{a}}(\eta
/2-\lambda )\text{\textsc{a}}(\lambda +\eta /2).  \label{8vREPTot-q-det-tt}
\end{equation}

\begin{itemize}
\item[\textsf{R)}] If $\left( \ref{8vREPNON-nilp-B-R}\right) $ is verified, the
vector:%
\begin{equation}
|\mathsf{t}\rangle =\sum_{h_{1},...,h_{\mathsf{N}}=0}^{1}\prod_{a=1}^{%
\mathsf{N}}Q_{\mathsf{t}}(\zeta _{a}^{(h_{a})})\prod_{1\leq b<a\leq \mathsf{N%
}}(\eta _{a}^{(h_{a})}-\eta _{b}^{(h_{b})})|\beta +2,h_{1},...,h_{\mathsf{N}%
}\rangle ,  \label{8vREPeigenT-r-D}
\end{equation}%
with coefficients:%
\begin{equation}
Q_{\mathsf{t}}(\zeta _{a}^{(1)})/Q_{\mathsf{t}}(\zeta _{a}^{(0)})=\mathsf{t}%
(\zeta _{a}^{(0)})/\text{\textsc{a}}(-\zeta _{a}^{(0)}),
\label{8vREPt-Q-relation}
\end{equation}%
is the right $\mathcal{T}$-eigenstate corresponding to $\mathsf{t}(\lambda
)\in \Sigma _{\mathcal{T}}$\ uniquely defined up to an overall normalization.

\item[\textsf{L)}] If $\left( \ref{8vREPNON-nilp-B-L}\right) $ is verified, the
covector 
\begin{equation}
\langle \mathsf{t}|=\sum_{h_{1},...,h_{\mathsf{N}}=0}^{1}\prod_{a=1}^{%
\mathsf{N}}\bar{Q}_{\mathsf{t}}(\zeta _{a}^{(h_{a})})\prod_{1\leq b<a\leq 
\mathsf{N}}(\eta _{a}^{(h_{a})}-\eta _{b}^{(h_{b})})\langle \beta
,h_{1},...,h_{\mathsf{N}}|,  \label{8vREPeigenT-l-D}
\end{equation}%
with coefficients:%
\begin{equation}
\bar{Q}_{\mathsf{t}}(\zeta _{a}^{(1)})/\bar{Q}_{\mathsf{t}}(\zeta
_{a}^{(0)})=\mathsf{t}(\zeta _{a}^{(0)})/\left( \mathsf{d}_{+}(\zeta
_{a}^{(1)})\mathsf{D}_{-}(\zeta _{a}^{(1)})\right)  \label{8vREPt-Qbar-relation}
\end{equation}%
is the left $\mathcal{T}$-eigenstate corresponding to $\mathsf{t}(\lambda
)\in \Sigma _{\mathcal{T}}$\ uniquely defined up to an overall normalization.
\end{itemize}
\end{theorem}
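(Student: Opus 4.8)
The plan is to run Sklyanin's SOV program in the $\mathcal{B}_{-}$-pseudo-eigenbases built above: I work with the left basis $\langle\beta,\text{\textbf{h}}|$ for the cases I$_{b}$) and L) and with the right basis $|\beta+2,\text{\textbf{h}}\rangle$ for II$_{b}$) and R), the cases I$_{c}$), II$_{c}$) then following by the $\beta$-parity of Lemma~\ref{8vREPb-symmetry}. The first point is the functional shape of the eigenvalues. By the lemma preceding the theorem any $\mathsf{t}(\lambda)\in\Sigma_{\mathcal{T}}$ is even and $\widehat{\mathsf{t}}(\lambda)\equiv\theta_{4}(2\lambda+\eta|2\omega)\theta_{4}(2\lambda-\eta|2\omega)\mathsf{t}(\lambda)$ is an even elliptic polynomial of order $2\mathsf{N}+6$ and periods $\pi,\pi\omega$ whose values at the four pairs $\pm\zeta_{-a}$, $a=1,\dots,4$, are universal; since the space of such polynomials has dimension $\mathsf{N}+4$, interpolating on those four pairs together with the $\mathsf{N}$ pairs $\pm\zeta_{a}^{(0)}$ gives $\widehat{\mathsf{t}}=j(\lambda)+\sum_{a=1}^{\mathsf{N}}l_{a}(\lambda)x_{a}$ with $x_{a}=\widehat{\mathsf{t}}(\zeta_{a}^{(0)})$, which is the parametrization of (\ref{8vREPSet-T}) and shows that $\mathsf{t}\mapsto(x_{1},\dots,x_{\mathsf{N}})$ is injective.

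Next I plug an even decomposition of $\mathcal{T}(\lambda)$ from the proposition above --- for the right problem $\mathcal{T}(\lambda)=\mathsf{d}_{+}(\lambda)\mathcal{D}_{-}(\lambda|\beta+2)+\mathsf{d}_{+}(-\lambda)\mathcal{D}_{-}(-\lambda|\beta+2)+K_{+}^{(R)}(\lambda|\beta)_{12}\mathcal{C}_{-}(\lambda|\beta+2)+K_{+}^{(R)}(\lambda|\beta)_{21}\mathcal{B}_{-}(\lambda|\beta+2)$, the $\mathcal{A}_{-}$ analogue for the left one, with the $\mathcal{C}_{-}$ term suppressed by (\ref{8vREPTriangular-gauge-K+B}) --- into the eigenvalue equation for a candidate eigenvector $|\mathsf{t}\rangle=\sum_{\text{\textbf{h}}}\bar{\Psi}(\text{\textbf{h}})\,|\beta+2,\text{\textbf{h}}\rangle$. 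Splitting $\mathcal{D}_{-}=\mathcal{D}_{-}^{(0)}+\widetilde{\mathcal{D}}_{-}$: the part $\mathcal{D}_{-}^{(0)}$, being the interpolation of $\mathcal{D}_{-}$ through its eight values $\mathcal{D}_{-}(\zeta_{-a}|\beta+2)$ fixed by the quantum determinant, reassembles after the $\lambda$-interpolation into the scalar $j(\lambda)$, using that $\widehat{\mathcal{T}}(\pm\zeta_{-a})$, $a\le4$, are scalar operators equal to the universal values of the preceding lemma; the part $\widetilde{\mathcal{D}}_{-}$ carries the $2\mathsf{N}$ shift operators $\text{T}_{a}^{-\varphi_{a}}$, and since $a(\lambda)=\prod_{m=1}^{\mathsf{N}}\theta(\lambda-\xi_{m}+\eta/2)$ forces $\mathsf{A}_{-}$, hence $\mathsf{D}_{-}$, to vanish at $\zeta_{n}^{(0)}$ and at $-\zeta_{n}^{(1)}$, exactly one flip $h_{n}\leftrightarrow 1-h_{n}$ survives at each site. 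Evaluating the eigenvalue equation at $\lambda=\zeta_{n}^{(1)}$ and $\lambda=\zeta_{n}^{(0)}$ then produces, site by site, a two-term recursion whose solution is the factorized form $\bar{\Psi}(\text{\textbf{h}})=\prod_{a}Q_{\mathsf{t}}(\zeta_{a}^{(h_{a})})\prod_{b<a}(\eta_{a}^{(h_{a})}-\eta_{b}^{(h_{b})})$ with the $Q$-ratio (\ref{8vREPt-Q-relation}), the Vandermonde factor being exactly the one produced by the product in (\ref{8vREPInterp-D-tilde-SOV}). Since every $\bar{\Psi}(\text{\textbf{h}})$ is then fixed up to one overall scale, the $\mathcal{T}$-eigenspace attached to each $\mathsf{t}$ is one-dimensional, which gives the simplicity of the spectrum, and the same computation in the left basis yields (\ref{8vREPeigenT-l-D})--(\ref{8vREPt-Qbar-relation}).

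To close the characterization I observe that this recursion is consistent over the $2\mathsf{N}$ points that pair under $\lambda\to-\lambda$ precisely when the equations at $\zeta_{n}^{(1)}$ and at $\zeta_{n}^{(0)}$ are compatible with $x_{n}=\widehat{\mathsf{t}}(\zeta_{n}^{(0)})$; multiplying these two equations and invoking the total quantum-determinant identity (\ref{8vREPTot-q-det-tt}) turns the compatibility into the inhomogeneous quadratic system (\ref{8vREPch-T}), namely $x_{n}\big(j(\zeta_{n}^{(1)})+\sum_{a}l_{a}(\zeta_{n}^{(1)})x_{a}\big)=q_{n}$. Conversely, given $\{x_{a}\}\in\Sigma_{T}$, the function $\mathsf{t}$ built by (\ref{8vREPSet-T}) is a genuine eigenvalue: both $\mathcal{T}(\lambda)|\mathsf{t}\rangle$ and $\mathsf{t}(\lambda)|\mathsf{t}\rangle$ are $\mathcal{R}_{\mathsf{N}}$-valued even elliptic polynomials of order $2\mathsf{N}+6$ and periods $\pi,\pi\omega$ that already agree at the universal points $\pm\zeta_{-a}$, $a\le4$, hence they agree for all $\lambda$ once they agree at $\pm\zeta_{n}^{(0)}$, which holds by construction of $\widehat{\mathsf{t}}$, and at $\pm\zeta_{n}^{(1)}$, which holds by (\ref{8vREPch-T}) and the $Q$-ratio. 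Finally, (\ref{8vREPch-T}) has generically $2^{\mathsf{N}}$ solutions and the corresponding SOV eigenvectors are linearly independent --- here the SOV decomposition of the identity (\ref{8vREPDecomp-Id}) and the non-vanishing of the ``Vandermonde'' pairings $\langle\beta-2,\text{\textbf{h}}|\beta,\text{\textbf{h}}\rangle$ are used --- so they form an eigenbasis of $\mathcal{R}_{\mathsf{N}}$ and $\Sigma_{\mathcal{T}}$ is exactly (\ref{8vREPSet-T}).

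The main obstacle will be the second step, and inside it the handling of the non-dynamical part $\mathcal{D}_{-}^{(0)}$ (resp. $\mathcal{A}_{-}^{(0)}$) together with the surviving $\mathcal{B}_{-}$-term: one must show that the eight operator values $\mathcal{D}_{-}(\zeta_{-a}|\beta+2)$ fixed by the quantum determinant, once dressed by $\mathsf{d}_{+}(\pm\lambda)$ and combined with $K_{+}^{(R)}(\lambda|\beta)_{21}\mathcal{B}_{-}(\lambda|\beta+2)$, which additionally shifts the gauge $\beta+2\to\beta+4$, recombine on eigenstates into the $\beta$-independent scalar interpolant $j(\lambda)$ matched against the four universal transfer-matrix values and the two residues of the preceding lemma. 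The elliptic bookkeeping --- two periods $\pi$ and $2\pi\omega$, and the eight points $\zeta_{-a}$ instead of the four of the trigonometric case --- is what makes this argument more delicate than its 6-vertex counterpart in \cite{8vREPNic12b}, which it otherwise follows step by step.
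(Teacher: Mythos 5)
Your skeleton is the paper's: interpolation of $\widehat{\mathsf{t}}$ on the four universal pairs plus the $\mathsf{N}$ pairs $\pm\zeta_{a}^{(0)}$, reduction to two-term (Baxter-like) relations at the points $\pm\zeta_{n}^{(h_{n})}$ in the SOV bases, rank-one/compatibility giving simplicity, the $Q$-ratios and the quadratic system, and the reverse inclusion by evaluating at enough points and interpolating. However, two concrete steps are off. First, you pair the decompositions with the wrong problems. The paper obtains the \emph{right} eigenstate, with ratio (\ref{8vREPt-Q-relation}) involving $\textsc{a}=\mathsf{a}_{+}\mathsf{A}_{-}$, by computing $\langle\beta,\mathbf{h}|\mathcal{T}(\pm\zeta_{n}^{(h_{n})})|\mathsf{t}\rangle$ with the \emph{left} decomposition, where (\ref{8vREPTriangular-gauge-K+B}) removes the $\mathcal{C}_{-}$ term and the left $\mathcal{B}_{-}$-pseudo-eigenvalue, proportional to $a_{\mathbf{h}}(\lambda)a_{\mathbf{h}}(-\lambda)$, vanishes at those points; the $\mathcal{D}_{-}$-decomposition acting on $|\beta+2,\mathbf{h}\rangle$ is what produces the $\mathsf{d}_{+}\mathsf{D}_{-}$ ratios (\ref{8vREPt-Qbar-relation}), i.e.\ the \emph{left} eigenstate data. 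With your assignment you would not directly land on (\ref{8vREPt-Q-relation}) without an extra reconciliation through the functional equation, and, more importantly, the term $K_{+}^{(R)}(\lambda|\beta)_{12}\mathcal{C}_{-}(\lambda|\beta+2)$ is not suppressed by (\ref{8vREPTriangular-gauge-K+B}) as you state it: that condition kills $K_{+}^{(L)}(\lambda|\beta)_{12}$, and one needs the (unstated) observation that $K_{+}^{(R)}(\lambda|\beta)_{12}$ is proportional to it (both reduce to $\bar{Y}_{\beta+1}K_{+}Y_{\beta-1}$ up to nonvanishing scalar prefactors) before the $\mathcal{C}_{-}$ contribution can be discarded in the right decomposition.

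Second, the step you yourself flag as the main obstacle --- that $\mathcal{D}_{-}^{(0)}$ "reassembles into the scalar $j(\lambda)$" on eigenstates --- is both unjustified and unnecessary. $\mathcal{D}_{-}^{(0)}$ is the interpolation of $\mathcal{D}_{-}$ through the eight points $\zeta_{-1},\dots,\zeta_{-8}$ (not through $\pm\zeta_{-a}$, $a\le 4$), and there is no reason the non-shift part of $\theta_{4}(2\lambda+\eta|2\omega)\theta_{4}(2\lambda-\eta|2\omega)\mathcal{T}(\lambda)$, which also receives contributions from $\mathsf{d}_{+}(\mp\lambda)\mathcal{D}_{-}(\mp\lambda|\beta+2)$ and the $\mathcal{B}_{-}$, $\mathcal{C}_{-}$ terms, should coincide with $j(\lambda)$ on eigenvectors. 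The paper never performs such an operator-level recombination: it only uses the scalar matrix elements $\langle\beta,\mathbf{h}|\mathcal{T}(\lambda)|\mathsf{t}\rangle$ at the points $\pm\zeta_{n}^{(h_{n})}$ and $\pm\zeta_{-a}$, $a\le4$, obtains the homogeneous $2\times2$ system whose vanishing determinant gives $\mathsf{t}(\zeta_{n}^{(0)})\mathsf{t}(\zeta_{n}^{(1)})=\textsc{a}(\zeta_{n}^{(1)})\textsc{a}(-\zeta_{n}^{(0)})$, i.e.\ (\ref{8vREPch-T}) after interpolation (no appeal to (\ref{8vREPTot-q-det-tt}) is needed there), and proves the reverse inclusion by elliptic interpolation of these matrix elements. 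Dropping the reassembly claim and arguing exactly at that level closes your proof; also note that your closing assertion that (\ref{8vREPch-T}) has generically $2^{\mathsf{N}}$ solutions forming an eigenbasis is neither proved nor required, since the set equality already follows from the double inclusion.
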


\begin{proof}
The separate variables characterization of the spectral problem for $%
\mathcal{T}(\lambda )$ is reduced to the discrete system of $2^{\mathsf{N}}$
Baxter-like equations: 
\begin{equation}
\mathsf{t}(\zeta _{n}^{(h_{n})})\Psi _{\mathsf{t}}(\text{\textbf{h}})\,=%
\text{\textsc{a}}(\zeta _{n}^{(h_{n})})\Psi _{\mathsf{t}}(\mathsf{T}_{n}^{-}(%
\text{\textbf{h}}))+\text{\textsc{a}}(-\zeta _{n}^{(h_{n})})\Psi _{\mathsf{t}%
}(\mathsf{T}_{n}^{+}(\text{\textbf{h}})),  \label{8vREPSOVBax1}
\end{equation}%
for any $n\in \{1,...,\mathsf{N}\}$ and \textbf{h}$\,\in \{0,1\}^{\mathsf{N}%
} $. Here, the (\textit{wave-functions}) $\Psi _{\mathsf{t}}($\textbf{h}$)$
are the coefficient of the $\mathcal{T}$-eigenstate $|\mathsf{t}\rangle $
corresponding to the $\mathsf{t}(\lambda )\in \Sigma _{\mathcal{T}}$ in the
right $\mathcal{B}_{-}$-SOV representation and the following notations are
introduced:%
\begin{equation}
\mathsf{T}_{n}^{\pm }(\text{\textbf{h}})\equiv (h_{1},\dots ,h_{n}\pm
1,\dots ,h_{\mathsf{N}}).
\end{equation}%
This system of separate equations is derived from the identities:%
\begin{equation}
\text{\textsc{a}}_{-}(\zeta _{n}^{(0)})=\text{\textsc{a}}_{-}(-\zeta
_{n}^{(1)})=0,
\end{equation}%
once we compute the matrix elements:%
\begin{equation}
\langle \beta ,h_{1},...,h_{n},...,h_{\mathsf{N}}|\mathcal{T}(\pm \zeta
_{n}^{(h_{n})})|\mathsf{t}\rangle .
\end{equation}%
Indeed $\left( \ref{8vREPT-decomp-L}\right) $ implies:%
\begin{align}
\mathsf{t}(\pm \zeta _{n}^{(0)})\Psi _{\mathsf{t}}(h_{1},...,h_{n}&
=0,...,h_{\mathsf{N}})=  \notag \\
& =\langle \beta ,h_{1},...,h_{n}=0,...,h_{\mathsf{N}}|\mathcal{T}(-\zeta
_{n}^{(0)})|\mathsf{t}\rangle  \notag \\
& =\mathsf{a}_{+}(-\zeta _{n}^{(0)})\langle \beta ,h_{1},...,h_{n}=0,...,h_{%
\mathsf{N}}|\mathcal{A}_{-}(-\zeta _{n}^{(0)})|\mathsf{t}\rangle  \notag \\
& =\text{\textsc{a}}(-\zeta _{n}^{(0)})\Psi _{\mathsf{t}%
}(h_{1},...,h_{n}=1,...,h_{\mathsf{N}})  \notag \\
& =\text{\textsc{a}}(-\zeta _{n}^{(0)})\Psi _{\mathsf{t}%
}(h_{1},...,h_{n}=1,...,h_{\mathsf{N}})+\text{\textsc{a}}(\zeta
_{n}^{(0)})\Psi _{\mathsf{t}}(h_{1},...,h_{n}=-1,...,h_{\mathsf{N}}),
\label{8vREPstep-1}
\end{align}%
and%
\begin{align}
\mathsf{t}(\pm \zeta _{n}^{(1)})\Psi _{\mathsf{t}}(h_{1},...,h_{n}&
=1,...,h_{\mathsf{N}})=  \notag \\
& =\langle \beta ,h_{1},...,h_{n}=1,...,h_{\mathsf{N}}|\mathcal{T}(\zeta
_{n}^{(1)})|\mathsf{t}\rangle  \notag \\
& =\mathsf{a}_{+}(\zeta _{n}^{(1)})\langle \beta ,h_{1},...,h_{n}=1,...,h_{%
\mathsf{N}}|\mathcal{A}_{-}(\zeta _{n}^{(1)})|\mathsf{t}\rangle  \notag \\
& =\text{\textsc{a}}(\zeta _{n}^{(1)})\Psi _{\mathsf{t}%
}(h_{1},...,h_{n}=0,...,h_{\mathsf{N}})  \notag \\
& =\text{\textsc{a}}(\zeta _{n}^{(1)})\Psi _{\mathsf{t}%
}(h_{1},...,h_{n}=0,...,h_{\mathsf{N}})+\text{\textsc{a}}(-\zeta
_{n}^{(1)})\Psi _{\mathsf{t}}(h_{1},...,h_{n}=2,...,h_{\mathsf{N}}).
\label{8vREPstep-2}
\end{align}%
The system $\left( \ref{8vREPSOVBax1}\right) $ is clearly equivalent to the
system of homogeneous equations:%
\begin{equation}
\left( 
\begin{array}{cc}
\mathsf{t}(\pm \zeta _{n}^{(0)}) & -\text{\textsc{a}}(-\zeta _{n}^{(0)}) \\ 
-\text{\textsc{a}}(\zeta _{n}^{(1)}) & \mathsf{t}(\pm \zeta _{n}^{(1)})%
\end{array}%
\right) \left( 
\begin{array}{c}
\Psi _{\mathsf{t}}(h_{1},...,h_{n}=0,...,h_{1}) \\ 
\Psi _{\mathsf{t}}(h_{1},...,h_{n}=1,...,h_{1})%
\end{array}%
\right) =\left( 
\begin{array}{c}
0 \\ 
0%
\end{array}%
\right) ,  \label{8vREPhomo-system}
\end{equation}%
for any $n\in \{1,...,\mathsf{N}\}$ with $h_{r\neq n}\in \{0,1\}$. Then the
determinants of the $2\times 2$ matrices in $\left( \ref{8vREPhomo-system}\right) 
$ must be zero for any $n\in \{1,...,\mathsf{N}\}$ if $\mathsf{t}(\lambda
)\in \Sigma _{\mathcal{T}}$, i.e. it holds: 
\begin{equation}
\mathsf{t}(\pm \zeta _{a}^{(0)})\mathsf{t}(\pm \zeta _{a}^{(1)})=\text{%
\textsc{a}}(\zeta _{a}^{(1)})\text{\textsc{a}}(-\zeta _{a}^{(0)}),\text{ \ \ 
}\forall a\in \{1,...,\mathsf{N}\}.  \label{8vREPI-Functional-eq}
\end{equation}
Being%
\begin{equation}
\text{\textsc{a}}(-\zeta _{n}^{(0)})\neq 0\text{\ \ and \ \textsc{a}}(\zeta
_{n}^{(1)})\neq 0,  \label{8vREPRank1}
\end{equation}%
then the matrices in $\left( \ref{8vREPhomo-system}\right) $ have all rank 1 and
up to an overall normalization the solution is unique:%
\begin{equation}
\frac{\Psi _{\mathsf{t}}(h_{1},...,h_{n}=1,...,h_{\mathsf{N}})}{\Psi _{%
\mathsf{t}}(h_{1},...,h_{n}=0,...,h_{\mathsf{N}})}=\frac{\mathsf{t}(\zeta
_{a}^{(0)})}{\text{\textsc{a}}(-\zeta _{a}^{(0)})},
\end{equation}%
for any $n\in \{1,...,\mathsf{N}\}$ with $h_{r\neq n}\in \{0,1\}$. So for
any fixed $\mathsf{t}(\lambda )\in \Sigma _{\mathcal{T}}$ the associate
eigenspace is one dimensional ($\mathcal{T}\left( \lambda \right) $ has
simple spectrum) and $|\mathsf{t}\rangle $ defined by $\left( \ref{8vREPeigenT-r-D}\right) $-$\left( \ref{8vREPt-Q-relation}\right) $ is the only
corresponding eigenstate up to normalization. It is simple now to prove that
the set $\Sigma _{\mathcal{T}}$ is included in the set of functions
characterized by $(\ref{8vREPSet-T})$ and $(\ref{8vREPch-T})$; indeed for any $\mathsf{%
t}(\lambda )\in \Sigma _{\mathcal{T}}$ the associated elliptic polynomial
defined in $(\ref{8vREPEll-p-associated})$ admits the following interpolation
formula:%
\begin{equation}
\widehat{\mathsf{t}}(\lambda )=j(\lambda )+\sum_{a=1}^{\mathsf{N}%
}l_{a}(\lambda )\widehat{\mathsf{t}}(\zeta _{a}^{(0)})  \label{8vREPEll-p-interp}
\end{equation}%
as the functions $j(\lambda )$ and $l_{a}(\lambda )$, as well as $\widehat{%
\mathsf{t}}(\lambda )$, are even elliptic polynomials in $\lambda $ of order 
$2\mathsf{N}+6$ which satisfy the same quasi-periodicity properties $(\ref{8vREPEll-p-periodicity})$ and the interpolation formula is given on the $2(%
\mathsf{N}+4)$ points:%
\begin{equation}
\pm \zeta _{-4},...,\pm \zeta _{-1},\pm \zeta _{1}^{(0)},...,\pm \zeta _{%
\mathsf{N}}^{(0)}.
\end{equation}%
Then using $(\ref{8vREPEll-p-interp})$ the system of equation (\ref{8vREPI-Functional-eq}) is equivalent to $(\ref{8vREPch-T})$.

Let prove now the reverse inclusion of set of functions, i.e. let us prove
that if $\mathsf{t}(\lambda )$ is in the set of functions characterized by $(\ref{8vREPSet-T})$ and $(\ref{8vREPch-T})$ then it is an element of $\Sigma _{\mathcal{T}}$. Indeed, taking the state $|\mathsf{t}\rangle $ defined by $\left( \ref{8vREPeigenT-r-D}\right) $-$\left( \ref{8vREPt-Q-relation}\right) $ the following
identities are satisfied:%
\begin{equation}
\left\langle \beta ,h_{1},...,h_{\mathsf{N}}\right\vert \mathcal{T}(\pm
\zeta _{n}^{(h_{n})})|\mathsf{t}\rangle =\mathsf{t}(\pm \zeta
_{n}^{(h_{n})})\langle \beta ,h_{1},...,h_{\mathsf{N}}|\mathsf{t}\rangle 
\text{ \ }\forall n\in \{1,...,\mathsf{N}\},
\end{equation}%
and%
\begin{equation*}
\lim_{\lambda \rightarrow \pm \zeta _{-a}}\theta _{4}(2\lambda +\eta
|2\omega )\theta _{4}(2\lambda -\eta |2\omega )\left\langle \beta
,h_{1},...,h_{\mathsf{N}}\right\vert \mathcal{T}(\lambda )|\mathsf{t}\rangle
=\widehat{\mathsf{t}}(\pm \zeta _{-a})\langle \beta ,h_{1},...,h_{\mathsf{N}%
}|\mathsf{t}\rangle ,
\end{equation*}%
for any $a=1,2,3,4$ and this implies:%
\begin{equation}
\left\langle \beta ,h_{1},...,h_{\mathsf{N}}\right\vert \mathcal{T}(\lambda
)|\mathsf{t}\rangle =\mathsf{t}(\lambda )\langle \beta ,h_{1},...,h_{\mathsf{%
N}}|\mathsf{t}\rangle \text{ \ \ }\forall \lambda \in \mathbb{C},
\end{equation}%
for any $\mathcal{B}_{-}(|\beta )$-pseudo-eigenstate $\left\langle \beta
,h_{1},...,h_{\mathsf{N}}\right\vert $, i.e. $\mathsf{t}(\lambda )\in \Sigma
_{\mathcal{T}}$\ and $|\mathsf{t}\rangle $ is the corresponding $\mathcal{T}$%
-eigenstate. Finally, let us point out that the quantum determinant
condition (\ref{8vREPTot-q-det-tt}) follows from the definition $(\ref{8vREPDef-A_})$
and the quantum determinant conditions $\left( \ref{8vREPq-detU_-exp}\right) $
and $(\ref{8vREPK-q-det-a+})$, where this last identity holds when (\ref{8vREPTriangular-gauge-K+B}) is satisfied as proven in Lemma $\ref{8vREPK+q-det-a+}$.
Concerning the left $\mathcal{T}$-eigenstates the proof is done as above.
Here one has to compute the matrix elements:%
\begin{equation}
\langle \mathsf{t}|\mathcal{T}(\zeta _{n}^{(h_{n})})|\beta +2,h_{1},...,h_{%
\mathsf{N}}\rangle ,
\end{equation}%
which by using the right $\mathcal{B}(|\beta )$-representation read:%
\begin{equation}
\mathsf{t}(\zeta _{n}^{(h_{n})})\bar{\Psi}_{\mathsf{t}}(\text{\textbf{h}})\,=%
\text{\textsc{d}}(\zeta _{n}^{(h_{n})})\bar{\Psi}_{\mathsf{t}}(\mathsf{T}%
_{n}^{-}(\text{\textbf{h}}))+\text{\textsc{d}}(-\zeta _{n}^{(h_{n})})\bar{%
\Psi}_{\mathsf{t}}(\mathsf{T}_{n}^{+}(\text{\textbf{h}})),\text{ \ \ }%
\forall n\in \{1,...,\mathsf{N}\}
\end{equation}%
where:%
\begin{equation}
\bar{\Psi}_{\mathsf{t}}(\text{\textbf{h}})\equiv \langle \mathsf{t}|\beta
+2,h_{1},...,h_{\mathsf{N}}\rangle ,\text{ \textsc{d}}(\pm \zeta
_{a}^{(h_{a})})\equiv \mathsf{d}_{+}(\pm \zeta _{a}^{(h_{a})})\mathsf{D}%
_{-}(\pm \zeta _{a}^{(h_{a})}).
\end{equation}
\end{proof}

Under the most general boundary conditions the above inhomogeneous system of
quadratic equations provides the characterization of the spectrum and
replaces the Bethe ansatz formulation which applies only when the parameters
satisfy the linear relation derived in \cite{8vREPFHSY96}. It is however interesting
to get a reformulation of this characterization by functional equations and
the construction of a Baxter Q-operator can be important in this direction.
In a next paper we will
provide this construction based only on the SOV characterization following
the approach defined first in \cite{8vREPN-10} and generalized in \cite{8vREPGN12} for cyclic 6-vertex representations. In
the roots of unit case and for the most general boundary conditions this
construction will be proven to lead to a Baxter Q-operator which is an
elliptic polynomial in spectral parameter $\lambda $ and so to a proof of
completeness of the spectrum (eigenvalues and eigenstates) characterization
in terms of a system of Bethe ansatz equations. Finally, we want to report
that after the completion of this manuscript, we have remarked the interesting paper \cite{8vREPCaoYSW13-4} which follows the series of recent papers \cite{8vREPCaoYSW13-1} on integrable quantum models associated to
spin-1/2 representations of both Yang-Baxter and reflection algebras. For
these integrable quantum models T-Q functional equations have been
introduced for the characterization of the transfer matrix eigenvalues by an
ansatz using as starting point the identities relating the products of the
transfer matrix eigenvalues and the quantum determinant in special points
related to the inhomogeneities of the models. These identities can be proven
directly at the operator level for example by using the annihilation
identities of the generators of both the Yang-Baxter and reflection algebras
for both the 6-vertex and 8-vertex cases. This approach was described for
example in \cite{8vREPN12-3} in the case of the periodic transfer matrices associated to
spin-1/2 representation of the 8-vertex Yang-Baxter algebra and in the case
of the antiperiodic transfer matrix associated to the spin-1/2
representation of the dynamical 6-vertex Yang-Baxter algebra. In \cite{8vREPCaoYSW13-4} these
identities are derived using the reduction in zero to the permutation operator of both the 8-vertex and
6-vertex R-matrix. The link with the separation of variables approach is
very simple to explain in all the integrable quantum models analyzed so far
and associated to representations defined on spin-1/2 quantum chains \cite{8vREPNic12b,8vREPFalKN13,8vREPN12-0,8vREPN12-3,8vREP?NT12} the
compatibility conditions of the transfer matrix separate equations, i.e. the
system of Baxter like equations of type \rf{8vREPSOVBax1}, are just the mentioned
identities involving product of transfer matrices and quantum determinant
\rf{8vREPI-Functional-eq}. In the SOV framework these equations are proven to reconstruct the
full spectrum (eigenvalues and eigenstates) of the transfer matrix when one
analyze the full class of solutions to \rf{8vREPI-Functional-eq} in a known and model dependent
class of functions. The clear interest in the paper \cite{8vREPCaoYSW13-4} is that it proposes
an ansatz\footnote{An analysis of the open problem of completeness of such type of ansatz has been addressed recently in \cite{8vREPNepo13} for the case of the inhomogeneous XXX spin chains.} to associate to the equation of type \rf{8vREPI-Functional-eq} the functional T-Q
equations in terms of elliptic polynomials, allowing a more traditional analysis of the eigenvalue problem by
the analysis of a system of Bethe equations.

\section{Scalar Products}

The above analysis in SOV allows to get the following \textit{scalar product
formulae} for separate states; one interesting point about them is that they are
mainly automatically derived and universal in this framework.

\begin{theorem}
Let $\langle u|$ and $|v\rangle $ be arbitrary states with the following
separate forms: 
\begin{align}
\langle u|& =\sum_{h_{1},...,h_{\mathsf{N}}=0}^{1}\prod_{a=1}^{\mathsf{N}%
}u_{a}(\zeta _{a}^{(h_{a})})\prod_{1\leq b<a\leq \mathsf{N}}(\eta
_{a}^{(h_{a})}-\eta _{b}^{(h_{b})})\langle \beta ,h_{1},...,h_{\mathsf{N}}|,
\label{8vREPFact-left-SOV} \\
|v\rangle & =\sum_{h_{1},...,h_{\mathsf{N}}=0}^{1}\prod_{a=1}^{\mathsf{N}%
}v_{a}(\zeta _{a}^{(h_{a})})\prod_{1\leq b<a\leq \mathsf{N}}(\eta
_{a}^{(h_{a})}-\eta _{b}^{(h_{b})})|\beta +2,h_{1},...,h_{\mathsf{N}}\rangle
,  \label{8vREPFact-right-SOV}
\end{align}%
in the $\mathcal{B}$-pseudo-eigenbasis, then the action of $\langle u|$ on $%
|v\rangle $ reads:%
\begin{equation}
\langle u|v\rangle =\det_{\mathsf{N}}||\mathcal{M}_{a,b}^{\left( u,v\right)
}||\text{ \ \ with \ }\mathcal{M}_{a,b}^{\left( u,v\right) }\equiv
\sum_{h=0}^{1}u_{a}(\zeta _{a}^{(h)})v_{a}(\zeta _{a}^{(h)})(\eta
_{a}^{(h)})^{(b-1)}.  \label{8vREPScalar-p1}
\end{equation}%
The above formula holds in particular in case the left and right states are
transfer matrix eigenstates.
\end{theorem}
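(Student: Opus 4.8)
The plan is to reduce the statement to the SOV resolution of the identity \eqref{8vREPDecomp-Id} together with the orthogonality of the left and right $\mathcal{B}_{-}(|\beta)$-pseudo-eigenbases, and then to recognize the resulting weighted sum over $\{0,1\}^{\mathsf{N}}$ as an $\mathsf{N}\times\mathsf{N}$ determinant by a standard multilinearity argument.

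First I would insert, between $\langle u|$ and $|v\rangle$, the spectral decomposition of the identity \eqref{8vREPDecomp-Id} taken with $\beta$ replaced by $\beta+2$ (legitimate since $\beta$ there is arbitrary), so that $\langle u|v\rangle=\sum_{\text{\textbf{h}}}\mu(\text{\textbf{h}})\,\langle u|\beta+2,\text{\textbf{h}}\rangle\,\langle\beta,\text{\textbf{h}}|v\rangle$ with $\mu(\text{\textbf{h}})=\prod_{1\le b<a\le\mathsf{N}}(\eta_{a}^{(h_{a})}-\eta_{b}^{(h_{b})})$. Equivalently, one can simply expand $\langle u|$ and $|v\rangle$ in their respective pseudo-eigenbases via \eqref{8vREPFact-left-SOV}--\eqref{8vREPFact-right-SOV} and use the orthogonality relation $\langle\beta,\text{\textbf{h}}|\beta+2,\text{\textbf{k}}\rangle=\delta_{\text{\textbf{h}},\text{\textbf{k}}}\prod_{1\le b<a\le\mathsf{N}}(\eta_{a}^{(h_{a})}-\eta_{b}^{(h_{b})})^{-1}$, which is precisely \eqref{8vREPM_jj} with $\beta\to\beta+2$. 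Either way the two Vandermonde weights coming from the definitions of $\langle u|$ and $|v\rangle$ combine with the inverse-Vandermonde overlap to leave a single power, and one lands on $\langle u|v\rangle=\sum_{h_{1},\dots,h_{\mathsf{N}}=0}^{1}\big(\prod_{a=1}^{\mathsf{N}}u_{a}(\zeta_{a}^{(h_{a})})v_{a}(\zeta_{a}^{(h_{a})})\big)\prod_{1\le b<a\le\mathsf{N}}(\eta_{a}^{(h_{a})}-\eta_{b}^{(h_{b})})$.

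Next I would turn this sum into a determinant. Writing the Vandermonde product as $\prod_{b<a}(\eta_{a}^{(h_{a})}-\eta_{b}^{(h_{b})})=\det_{\mathsf{N}}\big((\eta_{a}^{(h_{a})})^{b-1}\big)$, expanding that determinant along its rows by multilinearity, absorbing the scalar $u_{a}(\zeta_{a}^{(h_{a})})v_{a}(\zeta_{a}^{(h_{a})})$ into row $a$, and then exchanging the sum over $\text{\textbf{h}}$ with the sum over permutations, the $\text{\textbf{h}}$-sum factorizes row by row; the row-$a$, column-$b$ entry becomes $\sum_{h=0}^{1}u_{a}(\zeta_{a}^{(h)})v_{a}(\zeta_{a}^{(h)})(\eta_{a}^{(h)})^{b-1}$, i.e. exactly $\mathcal{M}_{a,b}^{(u,v)}$, giving \eqref{8vREPScalar-p1}. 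For the final assertion I would observe that the right and left transfer matrix eigenstates $|\mathsf{t}\rangle$ and $\langle\mathsf{t}|$ of \eqref{8vREPeigenT-r-D} and \eqref{8vREPeigenT-l-D} are already written in precisely the factorized forms \eqref{8vREPFact-right-SOV} and \eqref{8vREPFact-left-SOV}, with $v_{a}=Q_{\mathsf{t}}$ and $u_{a}=\bar{Q}_{\mathsf{t}'}$, so the determinant formula applies verbatim to $\langle\mathsf{t}'|\mathsf{t}\rangle$.

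The only genuinely delicate point is bookkeeping rather than analysis: one must track the shift $\beta\mapsto\beta+2$ so that $\langle u|$ (at gauge level $\beta$) and $|v\rangle$ (at level $\beta+2$) are paired by exactly the orthogonality relation \eqref{8vREPM_jj}, and verify that the normalization $\text{\textsc{n}}_{\beta}$ fixed in \eqref{8vREPNorm-def} is what makes the measure factors cancel cleanly against the Vandermonde weights so that the leftover power is $+1$ with no residual $\beta$-dependent prefactor. The determinant reduction itself is the standard Cauchy--Binet/multilinearity manipulation and presents no obstacle; it is in fact precisely the step that makes the formula \emph{universal}, depending on the states only through the one-variable data $u_{a},v_{a}$.
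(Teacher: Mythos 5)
Your proposal is correct and follows essentially the same route as the paper: you pair the two factorized expansions through the orthogonality relation \eqref{8vREPM_jj} (shifted to $\beta\to\beta+2$), obtain the single-Vandermonde weighted sum $\sum_{\mathbf{h}}\prod_a u_a(\zeta_a^{(h_a)})v_a(\zeta_a^{(h_a)})\prod_{b<a}(\eta_a^{(h_a)}-\eta_b^{(h_b)})$, and then use multilinearity of the determinant to factorize the $\mathbf{h}$-sum row by row into $\det_{\mathsf{N}}\mathcal{M}^{(u,v)}$, which is exactly the paper's argument, only with the multilinearity and normalization bookkeeping spelled out more explicitly.
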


\begin{proof}
Formula $\left( \ref{8vREPM_jj}\right) $ and the definitions of the states $%
\langle u|$ and $|v\rangle $ imply:%
\begin{equation}
\langle u|v\rangle =\sum_{h_{1},...,h_{\mathsf{N}}=0}^{1}V(\eta
_{1}^{(h_{1})},...,\eta _{\mathsf{N}}^{(h_{\mathsf{N}})})\prod_{a=1}^{%
\mathsf{N}}u_{a}(\zeta _{a}^{(h_{a})})v_{a}(\zeta _{a}^{(h_{a})}),
\end{equation}%
where $V(x_{1},...,x_{\mathsf{N}})\equiv \prod_{1\leq b<a\leq \mathsf{N}%
}(x_{a}-x_{b})$ is the Vandermonde determinant, then $\left( \ref{8vREPScalar-p1}%
\right) $ follows from the multilinearity of the determinant.
\end{proof}

\section{Conclusion and outlook}

In this paper we have considered representation of the 8-vertex reflection
algebra and we have studied the quantum models associated to the most
general integrable boundary conditions on the spin-1/2 quantum chains and
developed for them the SOV method obtaining the following results:

\begin{itemize}
\item The complete integrability of these quantum models and the complete
characterization of their spectrum (transfer matrix eigenvalues and
eigenstates) in terms of the set of solutions to an inhomogeneous system of $%
\mathsf{N}$ quadratic equations in $\mathsf{N}$ unkowns, where $\mathsf{N}$
is the number of sites of the chain.
\end{itemize}

It is important to remark here that for the most general boundary conditions
and values of the coupling constant $\eta $ the previous characterization is
not yet proven to be equivalent to a characterization in terms of Bethe
equations and this equivalence can be surely proven only imposing some
constrains on the boundary parameters or on the coupling constant. In
particular in a future paper we will show as in the case $\eta $ an elliptic
root of unit we can derive for the most general integrable boundary
conditions a Baxter Q-operator and rewrite the SOV spectrum characterization
in terms of solutions to a system of Bethe equations.

\begin{itemize}
\item The action of left separate states on right separate states are
written in terms of one determinant formulae of $\mathsf{N}\times \mathsf{N}$
matrices; these matrices have elements given by sums over the spectrum of
quantum separate variables of products of the corresponding left/right
separate coefficients.
\end{itemize}

These results define the required setup to compute matrix elements of local
operators on transfer matrix eigenstates. The remarked similarities in the
SOV representations of the gauge transformed reflection algebras and the
form of the pseudo-measure entering in the SOV spectral decomposition of the
identity for both the 8-vertex and 6-vertex case imply the possibility to
solve in parallel these two a priori very different dynamical problems. In
particular, in a future publication we will address the analysis of the
following steps:

\begin{itemize}
\item[I)] Reconstruction of local operators in terms of Sklyanin's quantum
separate variables.

\item[II)] Representation of form factors of local operators on transfer
matrix eigenstates in determinant form.
\end{itemize}

Let us comment that I) is a fundamental step in the solution of the
dynamical problem as it allows to identify the local operators writing them
in terms of the global generators of the SOV representation. In fact, this
identification has represented a longstanding problem in the S-matrix
formulation\footnote{Let us mention that a large literature has been dedicated to this problem and
several results are known \cite{8vREPCM90}-\cite{8vREPDN05-1} which confirm the characterization \cite{8vREPZam88,8vREPAlZam91} of these models as (superrenormalizable) massive perturbations of conformal field theories by relevant local fields from which a classification of their
local field content (solutions to the form factor equations \cite{8vREPKW78,8vREPSm92}) can be developed by using the corresponding
ultraviolet conformal field theories.} of the dynamics of infinite volume
quantum field theories and the lattice approach seems to give the advantage
to make it solvable. Moreover, once it is solved it allows to compute
algebraically the actions of local operators on transfer matrix eigenstates
and write them as linear combinations of separate states from which the form
factors can be computed by using our results on the action of left separate
states on right separate states. Let us also point out that the
reconstructions derived in the 6-vertex reflection algebra case apply also
to the 8-vertex reflection algebra and that being both the gauge
transformed 8-vertex and 6-vertex reflection algebra generators written as
linear combinations of the ungauged ones, the solution of the reconstruction
problem for the most general integrable boundary conditions is simply
derived once it is solved for the ungauged 6-vertex one following the
approach described in \cite{8vREPNic12b}. This last observation implies that we are already
able to describe the matrix elements of a class of quasi-local operators for
the most general reflection algebra representations of both 8-vertex and
6-vertex type; indeed, in order to do so we just need to elaborate the
results of this paper, those of \cite{8vREPFalKN13} and the matrix elements in the ungauged
SOV framework derived in \cite{8vREPNic12b}.

\bigskip

\textbf{Acknowledgments}\thinspace\ The authors would like to thank N. Kitanine, J.M. Maillet and V. Terras for discussions. G.N. gratefully acknowledge the YITP Institute
of Stony Brook for the freedom left in developing his research programs under the National Science
Foundation grants PHY-0969739 and the privilege to have stimulating discussions with B. M. McCoy on subjects related to the 8-vertex models. S.F. is supported by the Burgundy region. G.N. would like to thank the Theoretical Physics Group of the Laboratory of
Physics at ENS-Lyon and the Mathematical Physics Group at IMB of the Dijon
University for their hospitality under the support of the ANR grant ANR-10-BLAN-0120-04-DIADEMS.

\begin{small}

\end{small}
\end{document}